\newcommand{\pdfextension}{pdf}
\newcommand{\pngextension}{png}
\newcommand{\figuredir}{figures/}
\newcommand{\hdimages}{_hd} 
\def\myqed{} 
\def\mysubsection#1{\subsection{#1}} 
\newcommand{\vertiii}[1]{{\left\vert\kern-0.25ex\left\vert\kern-0.25ex\left\vert #1 
    \right\vert\kern-0.25ex\right\vert\kern-0.25ex\right\vert}}
\let\ssref=\ref
\def\fref#1{Figure~\ssref{#1}}
\def\cref#1{Condition~\ssref{#1}}
\def\Cref#1{Corollary~\ssref{#1}}
\def\eref#1{(\ssref{#1})}
\def\sref#1{\textsection\ssref{#1}}
\def\lref#1{Lemma~\ssref{#1}}
\def\rref#1{Remark~\ssref{#1}}
\def\tref#1{Theorem~\ssref{#1}}
\def\dref#1{Definition~\ssref{#1}}
\def\pref#1{Proposition~\ssref{#1}}
\def\aref#1{Assumption~\ssref{#1}}
\def\ref#1{\defcommand#1?%
	\myref{#1}}
\def\Label#1{}
\def\thecomma{\ifx,\thenext \else\ifx;\thenext \else\ifx.\thenext
\else\ifx!\thenext \else\ifx:\thenext\else\ifx)\thenext \else \
\fi\fi\fi\fi\fi\fi}
\def\condblank{\futurelet\thenext\thecomma}
\def\ie{{\it i.e.,}\condblank}
\def\eg{{\it e.g.,}\condblank}
\newenvironment{myenum}
{\begin{enumerate}
  \setlength{\itemsep}{1pt}
  \setlength{\parskip}{0pt}
  \setlength{\parsep}{0pt}}
{\end{enumerate}}
\numberwithin{equation}{section}
\newtheorem{theorem}{Theorem}[section]
\newtheorem{lemma}[theorem]{Lemma}
\newtheorem{proposition}[theorem]{Proposition}
\newtheorem{definition}[theorem]{Definition}
\newtheorem{assumption}[theorem]{Assumption}
\theoremstyle{definition} 
\newtheorem{remark}[theorem]{Remark}
\let\rho=\varrho
\let\kappa=\varkappa
\let\phi=\varphi
\def\CC{{\mathcal C}}
\def\OO{{\mathcal O}}
\def\SS{{\mathcal S}}
\def\BB{{\mathcal B}}
\newcommand{\dd}{\mathrm{d}}
\renewcommand{\d}{\,\mathrm{d}}
\newcommand{\dt}{\,\dd t}
\newcommand{\avg}[1]{\left\langle #1\right\rangle}
\def\argcdot{{\,\cdot\,}}
\def\lr{\Leftrightarrow}
\newcommand{\ind}{\mathbf{1}}
\newcommand{\pti}[1]{\tilde p_{#1}}
\newcommand{\ptipow}[2]{\tilde p_{#1}^{\,#2}}
\def\CstOmegaia{{C_1}}
\def\CstOmegaib{{C_2}}
\def\CstOmegaid{{C_3}}
\def\CstOmegaca{{C_4}}
\def\CstOmegacb{{C_5}}
\def\CstOmegacd{{C_{6}}}
\def\CstEbha{{C_7}}
\def\CstEbhb{{C_8}}
\def\CstGa{{C_{9}}}
\def\pworkpower{r}
\def\pworkpowerd{r}
\def\tH{{\widetilde H}}
\def\L{{\mathrm L}}
\def\R{{\mathrm R}}
\def\C{{\mathrm C}}
\def\I{{\mathrm I}}
\def\Wl#1{{W_\L^{[#1]}}}
\def\Wr#1{{W_\R^{[#1]}}}
\def\Wc#1{{W_\C^{[#1]}}}
\def\Hc#1{{H_c^{(#1)}}}
\def\barL{\bar{L}}
\def\barR{\bar{K}}
\def\bEE{{\bar{\mathbb E}}}
\let\epsilon=\varepsilon
\def\OA{A_1}
\def\OB{A_2}
\def\OC{A_3}
\def\Pnull{{\mathcal P}}
\def\torus{{\mathbb T}}
\def\real{{\mathbb R}}
\def\RAB{R_{12}}
\def\RBC{R_{23}}
\def\Y{M}
\def\barLO{\bar{L}_e}
\def\Enull{{\bar{\mathbb E}}^e}
\begin{document}

\title{Non-equilibrium steady states for chains of four rotors}
\author[1]{N. Cuneo}
\author[1,2]{J.-P. Eckmann}
\affil[1]{D\'epartement de physique th\'eorique, Universit\'e de Gen\`eve}
\affil[2]{Section de math\'ematiques, Universit\'e de Gen\`eve}
\date{} 

\maketitle
\thispagestyle{empty}

\begin{abstract}
We study a chain of four interacting rotors (rotators)
connected at both ends to stochastic heat baths at different temperatures.
We show that for non-degenerate interaction potentials
the system relaxes, at a stretched exponential rate, to a
non-equilibrium steady state (NESS).  
Rotors with high energy tend to decouple from their neighbors due to 
fast oscillation of the forces. Because of this,
the energy of the central two rotors, which interact with the heat baths
only through the external rotors, can take a very long time to dissipate.
By appropriately averaging the oscillatory forces, we estimate
the dissipation rate and construct a Lyapunov function.
Compared to the chain of length three (considered previously by C.~Poquet
and the current authors), the new difficulty with four rotors is
the appearance of resonances when both central rotors
are fast. We deal with these resonances using the rapid thermalization
of the two external rotors.
\end{abstract}

\tableofcontents

\section{Introduction}
We consider a chain of 4 classical rotors interacting at both ends with
stochastic heat baths at different temperatures. 
Under the action of such heat baths, many Hamiltonian systems
are known to relax to an invariant probability measure called non-equilibrium
steady state (NESS).
In general, the explicit expression for this invariant measure is
unknown, and the convergence rate depends on the nature of the system.
For the model under consideration, we obtain a stretched exponential
rate.

For several examples of Hamiltonian chains, properties of the NESS
(\eg  thermal conductivity,
validity of the Fourier law, temperature profile, \dots)
have been studied numerically, perturbatively,
or via some effective theories. See for example
\cite{lepri_thermal_2003, iacobucci_negative_2011, bonetto_fouriers_2000, gendelman_normal_2000, roeck_asymptotic_2014,MR2148380,bernardin2014green} for chains of
rotors and \cite{lepri_thermal_2003,bonetto_fouriers_2000, aoki_energy_2006, bricmont_towards_2007, giardina_finite_2000,lefevere_normal_2006,bernardin2014green} for chains of oscillators.
From a rigorous point of view however, the mere existence of
an invariant measure is
not evident, and has been proved only in special cases.

A lot of attention has been devoted to chains of classical oscillators
with (nonlinear) nearest neighbor interactions.
In such models, each oscillator has
a position $q_i\in \real$ (we take one dimension for simplicity), is attached
to the reference frame with a {\em pinning} potential $U(q_i)$, and 
interacts with its neighbors via some {\em interaction} potentials
$W(q_{i+1}-q_i)$ and $W(q_{i}-q_{i-1})$.

It turns out that the properties of the chain depend crucially
on the relative growth of $W$ and $U$ at high energy.
In the case of (asymptotically) polynomial potentials,
and for Markovian heat baths, it has been shown
\cite{eckmann_hairer_2000,eckmann_entropy_1998,eckmann_nonequilibrium_1999,
eckmann_temperature_2004,reybellet_exponential_2002,carmona_2007}
that if $W$ grows faster than $U$, the system typically relaxes exponentially
fast to a NESS. The convergence is fast because, thanks to the strong
interactions,
the sites in the bulk of the chain ``feel'' the heat baths effectively
even though they are separated from them by other sites.

In the strongly pinned case, \ie when $U$ grows faster than $W$,
the situation is more complicated. When a given
site has a lot of energy, the corresponding oscillator
essentially feels only its pinning potential $U(q_i)$ and not the
interaction. Assume $U(q) \propto q^{2k}$ with $k>1$.
An isolated oscillator pinned with a potential $U$ and with an energy $E$
oscillates with a frequency that grows like $E^{1/2 - 1/{2k}}$.
This scaling plays a central role, since the larger the energy at a site,
the faster the corresponding $q_i$ oscillates. But then, the interaction forces
with the sites $i+1$ and $i-1$ oscillate very rapidly and become ineffective
at high energy. Therefore, a site (or a set of sites) with high energy tends to
decouple
from the rest of the chain, so that energy can be ``trapped'' in the bulk.
This mechanism not only makes the convergence to the invariant measure slower, 
but it also makes the proof of its existence harder.
The case where $W$ is quadratic is considered in \cite{hairer_slow_2009}.
There, Hairer and Mattingly show that if $U(q) \propto q^{2k}$
with $k$ sufficiently large, no exponential convergence
to an invariant measure (if there is one) can take place. Moreover, they show
that an invariant measure exists in the case of $3$ oscillators when $k>3/2$.
The existence of a NESS for longer chains of
oscillators remains an open problem when the pinning dominates the
interactions.

Chains of rotors are in fact closely
related to strongly pinned oscillator chains:
The frequency of a rotor scales as $E^{1/2}$, where $E$ is its energy.
This scaling corresponds to that of an oscillator in the limit $k\to \infty$,
for the pinning $U(q) \propto q^{2k}$ discussed above. In this sense, 
our chain of rotors behaves as a chain of oscillators in the limit of
``infinite pinning'', which is some kind of worst-case scenario
from the point of view of the asymptotic decoupling at high energy. 
On the other hand, the compactness of the position-space  (it is a torus)
in the rotor case is technically very convenient.
The problems appearing with chains of strongly pinned oscillators
are very similar to those faced with chains of rotors, and so are
the ideas involved to solve them.

The existence of an invariant measure
for the chain of 3 rotors has been proved in \cite{cuneo_nonequilibrium_2014},
as well as a stretched exponential upper bound of the kind $\exp({-c\sqrt{t}})$
on the convergence rate.
The methods, which involve averaging the rapid oscillations
of the central rotor, are inspired by
those of \cite{hairer_slow_2009} for the chain of 3 oscillators.

In the present paper, we generalize the result of
\cite{cuneo_nonequilibrium_2014}
to the case of 4 rotors, and obtain again a bound $\exp({-c\sqrt{t}})$
on the convergence rate.
The main new difficulty in this generalization
is the presence of resonances among the two central rotors.
When they both have a large energy, there are two
fast variables and some resonant terms
make the averaging technique developed in \cite{cuneo_nonequilibrium_2014}
insufficient. A large portion of the present paper is devoted
to eliminating such resonant terms by using the rapid thermalization
of the external rotors.

It would be of course desirable to be able to work with a larger
number of rotors. The present paper uses explicit methods to deal
with the averaging phenomena. We hope that by crystallizing the
essentials of our methods, longer chains can be handled in the same
spirit. We expect that for longer chains, the convergence rate
is of the form $\exp(-ct^k)$, for some exponent $k\in (0,1)$ which depends
on the length of the chain. We formulate a conjecture
and explain the main difficulties for longer chains
in \rref{r:conjecture}.

We now introduce the model and state the main results. 
In \sref{sec:23fast}, we study the behavior of the system when one
of the two central rotors is fast, and construct a Lyapunov function
in this region. In \sref{sec:bothfast}, we do the same in the regime
where both central rotors are fast. In \sref{s:constrlyapunov}
we construct a Lyapunov function that is valid across all regimes, and in
\sref{s:controltechnical} we provide the technicalities necessary to
obtain the main result.

\mysubsection{The model}
\begin{figure}[ht]
\centering
\includegraphics[width=3.8in]{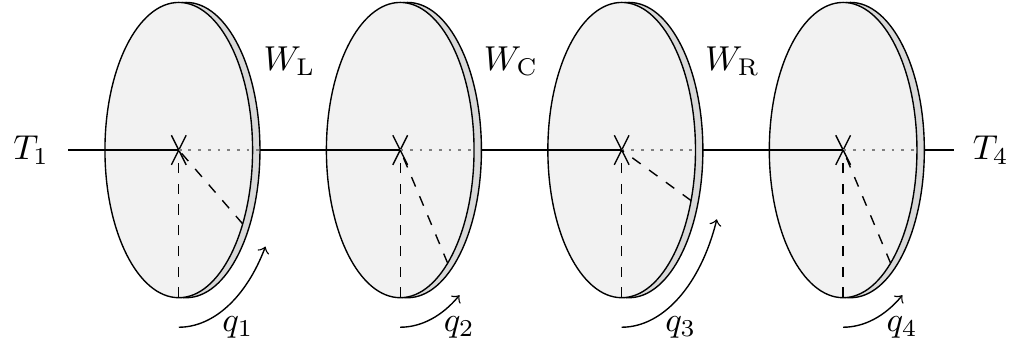}
\caption{A chain of four rotors with two heat baths at temperatures $T_1$ and
$T_4$.}
\label{f:system}
\end{figure}

We study a model of 4 rotors, each given by a momentum $p_i \in \real$
and an angle $q_i \in \torus = \real / 2\pi \mathbb Z$, $i=1, \dots, 4$. 
We write in the sequel $q= (q_1, \dots, q_4) \in \torus^4$,
$p = (p_1, \dots, p_4) \in \real^4$, and $x = (q, p) \in \Omega$,
where $\Omega = \torus^4 \times \real^4$ is the phase space of the
system. We consider the Hamiltonian 
\begin{equ}[eq:hamiltonien]
H(x) = \sum_{i=1}^4 \frac{p_i^2}{ 2}  + W_\L(q_2-q_1) + W_{\C}(q_3-q_2) +
W_{\R}(q_3-q_4)~,
\end{equ}
where $W_\I : \torus \to \real$, $\I = \L, \C, \R$ (standing for left, center
and right) are smooth $2\pi$-periodic interaction potentials (see \fref{f:system}).

\medskip
\noindent{\bf Convention:} Unless specified otherwise,
the arguments of the potentials are always as
above, namely $W_\L= W_\L(q_2-q_1)$, $W_{\C}= W_{\C}(q_3-q_2)$ and
$W_{\R}= W_{\R}(q_3-q_4)$. The same applies to any function
with index $\L, \C$ and $\R$.
Note that the argument for $\R$ is $q_3-q_4$ (and not
$q_4-q_3$) since this choice will lead to more symmetrical
expressions between the sites 1 and 4.
\medskip

To model the interaction with two heat baths, we add at each end of the chain
a Langevin thermostat at temperature $T_b> 0$, with dissipation
constant $\gamma_b > 0$, $b=1,4$.
Introducing the derivative of the potentials $w_\I = W_\I'$, $\I = \L,\C,\R$,
the main object of our study is the SDE:
\begin{equa}[e:SDE]
\dd q_i &= p_i\dt ~,\qquad i=1,\dots,4~,\\
\dd p_1 &=  ( w_\L - \gamma_1 p_1 )\dt + \sqrt{2\gamma_1 T_1} \dd B^1_t~,\\
\dd p_2 &=  (w_{\C}- w_\L)\dt~, \\
\dd p_3 &=  - (w_{\C} + w_{\R})\dt~, \\
\dd p_4 &=  (w_{\R} - \gamma_4 p_4 )\dt+ \sqrt{2\gamma_4 T_4} \dd B^4_t~,
\end{equa}
where $B^1_t, B^4_t$ are independent standard Brownian motions. 
The generator of
the semigroup associated to \eref{e:SDE} reads
\begin{equs}[e:genl]
L &= \sum_{i =1}^4 p_i\partial_{q_i} +
w_\L\cdot(\partial_{p_1}-\partial_{p_2})+w_\C\cdot(\partial_{p_2}-\partial_{p_3})
+w_\R\cdot(\partial_{p_4}-\partial_{p_3})\\
& \qquad \qquad +\sum_{b=1,4}\left(-\gamma_b p_b
\partial_{p_b} + \gamma_bT_b \partial^2_{p_b}\right)~.
\end{equs}	

\begin{remark}In contrast to \cite{cuneo_nonequilibrium_2014},
we do not allow for the presence of pinning potentials $U(q_i)$ and of
constant forces at the ends of the chain, although we believe that the
main result still holds with such modifications. While constant
forces would be easy to handle, the addition of a pinning potential
would require some generalization of a technical result (\pref{prop:rbarref})
which we are currently unable to provide (see \rref{rem:paspinningettau}).
\end{remark}

We consider the measure space $(\Omega, \mathcal B)$, with the Borel
$\sigma$-field $\mathcal B$ over $\Omega$.
The coefficients in \eref{e:SDE} are globally Lipschitz,
and therefore the solutions are almost surely defined for all
times and all initial conditions.
We denote the transition probability  of the corresponding Markov process
by $P^t(x, \argcdot)$, for all $x\in \Omega$ and $t\geq 0$.

\mysubsection{Main results}

We will often refer to the sites 1 and 4 as the {\em outer} (or {\em external})
rotors, and the sites 2 and 3 as the {\em central} rotors.
We require the interactions from
the inner rotors to the outer rotors to be non-degenerate
in the following sense:

\begin{assumption}\label{as:assumptioncoupling}
We assume that for $\I = \L, \R$ and for each $s\in \torus$, at
least one of the derivatives $w^{(k)}_\I(s)$, $k\geq 1$
is non-zero.
\end{assumption}

This assumption is not very restrictive. In particular, it holds if all the
potentials consist of finitely many nonconstant Fourier modes.

Our main result is a statement about the speed of convergence
to a unique stationary state of
the system \eref{e:SDE}. In order to state it, we introduce
for each continuous function $f:\Omega \to (0, \infty)$ the norm
$\|\argcdot \|_{f}$ on the space of signed Borel
measures on $\Omega$:
$$
\|\nu \|_{f} = \sup_{|g|\leq f} \int_\Omega g \dd \nu~.
$$
If $f\equiv 1$, we retrieve the total variation norm.
\begin{theorem}[\bf NESS and rate of convergence]\label{thm:mainthm}Under
\aref{as:assumptioncoupling}, the Markov process defined by \eref{e:SDE}
satisfies:
\begin{myenum}
	\item[(i)] The transition probabilities $P^t(x, \dd y)$ have a
	$\CC ^\infty((0, \infty)\times \Omega\times \Omega)$ density
	$p_t(x, y)$.
	\item[(ii)] There is a unique invariant measure $\pi$, and it has a
	smooth density.
	\item[(iii)]For all $0 \leq \theta_1 < \min(1/T_1, 1/T_4)$ and all $\theta_2 >
\theta_1$,
	there exist constants $C, \lambda > 0$ such that for all $x = (q_1, q_2,
\dots,
p_4)\in\Omega$ and all $t\geq 0$,
	\begin{equ}[eq:stretchedexpconv]
		\|P^t(x, \argcdot) - \pi \|_{e^{\theta_1H}} \leq
Ce^{\theta_2 H(x)}e^{-\lambda t^{1/2}}~.
	\end{equ}
\end{myenum}
\end{theorem}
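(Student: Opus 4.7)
The overall plan is to combine three standard ingredients from the theory of degenerate hypoelliptic diffusions: (a) Hörmander's theorem for smoothness of transition densities, (b) a support/controllability argument for positivity, and (c) a sub-geometric Lyapunov drift condition, fed into a Harris-type theorem adapted to stretched-exponential rates (in the spirit of Douc--Fort--Guillin or Hairer--Mattingly). Most of the actual work of the paper is hidden in step (c), where the Lyapunov function is built in \sref{s:constrlyapunov}.

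\textbf{Smoothness (i) and (ii).} I would first verify Hörmander's bracket condition for the operator $L$ in \eref{e:genl}. The noise directions are $\partial_{p_1}$ and $\partial_{p_4}$. A single bracket with the drift gives $\partial_{q_1}$ and $\partial_{q_4}$; a further bracket produces terms proportional to $w_\L'(q_2-q_1)$ times $(\partial_{p_1}-\partial_{p_2})$ and $w_\R'(q_3-q_4)$ times $(\partial_{p_4}-\partial_{p_3})$, so that \aref{as:assumptioncoupling}, iterated sufficiently many times at any prescribed point, produces $\partial_{p_2}$ and $\partial_{p_3}$ (and hence $\partial_{q_2}$, $\partial_{q_3}$ by another bracket with $L$). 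This gives a full-rank Lie algebra at every $x\in\Omega$, so $P^t(x,\dd y)$ has a smooth density, and the same argument applied to $L^*$ gives (ii) once an invariant measure exists.

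\textbf{Existence/uniqueness.} Existence would be obtained by the Krylov--Bogoliubov procedure, once a Lyapunov function $V$ satisfying $P^tV\leq V+Kt$ (and bounded below by $1$) is in hand; this is supplied by \sref{s:constrlyapunov}. For uniqueness I would combine hypoellipticity with a Stroock--Varadhan support argument: given any two points, one can join them by a smooth control $(B^1,B^4)$, so $p_t(x,y)>0$ everywhere on $(0,\infty)\times\Omega\times\Omega$. Two distinct invariant measures would then both have positive smooth densities, contradicting their mutual singularity on an irreducible chain.

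\textbf{Stretched-exponential rate (iii).} The qualitative picture is that the Lyapunov function behaves like (a power or exponential of) $H$, and the drift condition from \sref{s:constrlyapunov} will be sub-linear: schematically $LV\leq -c\,\phi(V)+K\ind_C$ on a petite set $C$, with $\phi$ concave and $\phi(v)\sim v/\sqrt{\log v}$ for large $v$ (this is the exponent that yields $\exp(-\lambda t^{1/2})$). Plugging this into a sub-geometric Harris theorem as in Douc--Fort--Guillin, together with the weighted total-variation norm $\|\argcdot\|_{e^{\theta_1 H}}$, gives
\begin{equation*}
\|P^t(x,\argcdot)-\pi\|_{e^{\theta_1H}} \leq C\,V(x)\,e^{-\lambda t^{1/2}},
\end{equation*}
and bounding $V(x)\leq C'e^{\theta_2 H(x)}$ yields \eref{eq:stretchedexpconv}. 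The strict inequality $\theta_1<\min(1/T_1,1/T_4)$ is needed exactly so that the marginal Gaussian tails produced by the outer Langevin thermostats still dominate $e^{\theta_1 p_b^2/2}$, which is what makes such a $V$ supermartingale-like after a brief relaxation of $p_1$ and $p_4$.

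\textbf{The main obstacle.} Steps (i)--(ii) and the abstract conversion in (iii) are routine, so the real content is constructing $V$ with a usable sub-linear drift. Following the paper's own roadmap, one would patch together Lyapunov estimates from three regimes: the low-energy region (direct dissipation from the thermostats), the region where exactly one central rotor is fast (\sref{sec:23fast}, averaging out a single fast angle as in \cite{cuneo_nonequilibrium_2014}), and the region where both $p_2$ and $p_3$ are large (\sref{sec:bothfast}). The hard step, and the point where the three-rotor argument breaks down, is the second-order averaging in that last regime: when $|p_2|$ and $|p_3|$ are comparable, the two fast angles $q_2-q_1$ and $q_3-q_4$ can lock into a resonance with the central coupling $w_\C(q_3-q_2)$, producing resonant remainders that are not small on their own. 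Removing these terms by exploiting the rapid thermalization of $p_1$ and $p_4$ (so that the phases entering the resonance are mixed on the fast timescale) is the genuinely new ingredient, and is where I expect to spend the bulk of the effort before the Harris-type conclusion can be invoked.
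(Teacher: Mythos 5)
Your proposal follows the same three-step architecture as the paper: Hörmander's condition for (i)--(ii), irreducibility via a support/controllability argument, and a subgeometric Lyapunov drift fed into Douc--Fort--Guillin for (iii); the patching of Lyapunov estimates across the three momentum regimes and the use of rapid thermalization of the outer rotors to kill the resonant terms is exactly the paper's strategy. One concrete slip: you state that $\phi(v)\sim v/\sqrt{\log v}$ is "the exponent that yields $\exp(-\lambda t^{1/2})$," but it is not. With $H_\phi(u)=\int_1^u \dd s/\phi(s)$, the choice $\phi(v)\sim v/\sqrt{\log v}$ gives $H_\phi(u)\sim(\log u)^{3/2}$ and hence a rate $\exp(-\lambda t^{2/3})$, not $t^{1/2}$. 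The function actually produced by the drift bounds (from the factors $1/p_j^2$ and $1/H_c$) and used in Theorem~\ref{prop:Lyapunov} is $\phi(s)=c_3 s/(2+\log s)\sim s/\log s$; then $H_\phi(u)\sim(\log u)^2$ and the Douc--Fort--Guillin rate $(\phi\circ H_\phi^{-1}(t))^{z-1}$ is indeed of order $\exp(-\lambda t^{1/2})$. This does not affect the structure of your argument, only the bookkeeping that pins down the exponent in the stretched exponential.
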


At thermal equilibrium, namely when $T_1 = T_4 = 1/\beta$ for some $\beta>0$,
the invariant measure is the Gibbs distribution with density $e^{-\beta
H(x)}/Z$,
where $Z$ is a normalization constant.

\tref{thm:mainthm} will be proved in \sref{s:controltechnical} with help
of results of \cite{douc_subgeometric_2009} and the existence of a 
Lyapunov function,
the properties of which are stated in
\begin{theorem}[\bf Lyapunov function]\label{prop:Lyapunov}Let
$0<\theta < \min(1/T_1, 1/T_4)$. Under \aref{as:assumptioncoupling},
there is a function $V: \Omega \to [1,\infty)$ satisfying:
\begin{myenum}
\item[(i)] There are constants $c_1>0$ and $a\in(0,1)$ such that
\begin{equ}\label{e:ineqvn}
	 1+e^{\theta H}\leq V\leq c_1(e^{|p_2|^a}+e^{|p_3|^a})e^{\theta H}~.
\end{equ}
\item[(ii)] There are a compact set $K$ and constants $c_2, c_3 > 0$
such that
\begin{equ}\label{e:ineqLvn}
	LV \leq c_2 \ind_K - \phi(V) ~,
\end{equ}
with  $\phi: [1, \infty) \to (0, \infty)$ the increasing, concave function 
 defined by
\begin{equ}\label{e:defphi}
\phi(s) = \frac {c_3\,s}{2+\log(s)}~.
\end{equ}
\end{myenum}
\end{theorem}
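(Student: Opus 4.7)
The plan is to construct $V$ as a correction of the baseline $V_0 = e^{\theta H}$, modified precisely in the regions where the natural Gibbs-like dissipation fails. A direct computation using \eref{e:genl} gives
\begin{equ}
LV_0 = \theta\sum_{b=1,4}\gamma_b\bigl(T_b - (1-\theta T_b)p_b^2\bigr)V_0~,
\end{equ}
so the assumption $\theta<\min(1/T_1,1/T_4)$ makes the $-p_b^2 V_0$ coefficient negative and produces dissipation of order $-V_0$ whenever $|p_1|$ or $|p_4|$ is large. The obstruction is that on the regions where $p_1, p_4$ are moderate but $p_2, p_3$ are large, the deterministic part of $L$ conserves $H$ and $V_0$ ceases to be a Lyapunov function; localised corrections are needed there.

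Following the structure announced in the introduction, I would decompose $\Omega$ with smooth cutoffs based on the sizes of $|p_2|$ and $|p_3|$ into a compact region and three ``fast'' regions: the two one-central-fast regions where exactly one of $|p_2|,|p_3|$ is large, and the both-central-fast region. On the one-central-fast regions, the averaging strategy of \cite{cuneo_nonequilibrium_2014} should apply essentially unchanged (\sref{sec:23fast}): integrating by parts in the fast central angle replaces the rapidly oscillating coupling force by an effective friction transmitted from the adjacent external rotor, and one encodes the result in corrections $G_2, G_3$ of size $o(V_0)$ such that $L(G_i V_0)$ cancels the unwanted terms of $LV_0$. On the both-central-fast region (\sref{sec:bothfast}), single-angle averaging is insufficient because higher-order terms produce resonant contributions in $(q_2,q_3)$ that survive the simple average; eliminating them requires in addition the rapid thermalisation of $p_1, p_4$ under the Ornstein--Uhlenbeck parts of $L$, yielding a further correction $G_{23}$.

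Accordingly I would set
\begin{equ}
V = e^{\theta H}\bigl(1 + \chi_2 G_2 + \chi_3 G_3 + \chi_{23} G_{23}\bigr) + c~,
\end{equ}
with smooth non-negative cutoffs $\chi_2, \chi_3, \chi_{23}$ localising on the three fast regimes, with $|G_i|\lesssim e^{|p_2|^a}+e^{|p_3|^a}$ for some $a\in(0,1)$, and with a constant $c\geq 0$ chosen so that $V\geq 1+e^{\theta H}$. The upper bound in \eref{e:ineqvn} is then automatic. For \eref{e:ineqLvn}, the rate $\phi(s)=c_3 s/(2+\log s)$ is the quantitative outcome of the averaging: in the fast regimes $\log V$ is comparable to the kinetic energy of the central rotors, while the effective dissipation rate produced by averaging is of the same inverse order, reflecting the scaling of oscillation times with momentum. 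This form of $\phi$ also matches, via \cite{douc_subgeometric_2009}, the $\exp(-c\sqrt{t})$ convergence rate of \tref{thm:mainthm}.

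The main obstacle will be the gluing. The derivatives of the cutoffs produce commutator terms, essentially of the form $w_\I\chi_i'(\argcdot)G_j V_0$, supported on the overlaps between regimes and of the same order as $V_0$, which are not a priori negative. One therefore needs to arrange the corrections of \sref{sec:23fast} and \sref{sec:bothfast} to be compatible on the overlap strips, so that these commutators can be absorbed by the negative contributions on either side. Combining this matching with the new resonance-removal analysis of the both-central-fast regime, while preserving the uniform concave-rate estimate \eref{e:ineqLvn}, is where the bulk of the technical work will lie.
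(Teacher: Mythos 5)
Your outline correctly identifies the three fast regimes, the role of averaging, the presence of resonances, the origin of $\phi(s)=s/(2+\log s)$ from the $p_j^{-2}$ effective dissipation rate, and the cutoff commutator problem. At that level it agrees with the paper's strategy. But as a proof it leaves a genuine gap: the entire technical content is deferred.

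Two specific points deserve emphasis. First, the multiplicative ansatz $V=e^{\theta H}(1+\chi_2G_2+\chi_3G_3+\chi_{23}G_{23})+c$ is structurally different from, and less convenient than, the paper's additive form $V=1+e^{\theta H}+\sum_{j}\psi_j V_j+M\psi_c V_c$. The paper's partial Lyapunov function $V_j$ is a function of the \emph{averaged} variable $\tilde p_j$ alone (plus a small $F_j/p_j^3$ correction), so the key estimate $LV_j\leq-c\,V_j/p_j^2$ follows directly from $L\tilde p_j=-\alpha_j p_j^{-3}+\OO_j(p_j^{-4})$. If you instead write $V_j=G_j e^{\theta H}$, then $G_j\approx e^{|\tilde p_j|^a-\theta(\sum_{i\neq j}p_i^2)/2-\theta W}$ depends on \emph{all} momenta and angles, and the product rule for $L$ produces additional cross-terms $\gamma_bT_b\partial_{p_b}(e^{\theta H})\partial_{p_b}(\chi_j G_j)$ of order $p_b G_j e^{\theta H}$ that are not obviously controllable and do not appear in the paper's computation. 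You would also have to ensure $\chi_j G_j\geq 0$ for the lower bound $V\geq 1+e^{\theta H}$ to hold — a point the additive form makes automatic.

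Second, the hard work is not, as you suggest, the gluing, but what precedes it. In the paper the gluing (\sref{s:constrlyapunov}) is relatively routine given the parameter constraints $2/\ell<a<2/k$ and $k<\ell<m$, which you do not mention but which are crucial: they guarantee that the boundary-layer terms $\ind_{\BB(\Omega_j)}z_j(p)e^{|p_j|^a+\theta p_j^2/2}$ and $\ind_{\BB(\Omega_c)}z_c(p)e^{\theta(p_2^2+p_3^2)/2}$ are absorbed by the good dissipative terms. What is genuinely hard — and what you only name as ``a further correction $G_{23}$'' — is the resonance removal on the both-central-fast region. There, single-variable averaging leaves the two resonant terms $p_1W_\L w_\C/p_2^2$ and $-p_4W_\R w_\C/p_3^2$, which cannot be averaged away near rational lines $p_3/p_2$. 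The paper eliminates them by introducing the \emph{decoupled dynamics} with generator $\barL$, proving that $\barL$ admits a right inverse $\barR$ on the class $\SS_0$ (exploiting the explicit linear solvability of the Ornstein--Uhlenbeck part and its exponential ergodicity at positive temperature), and composing $\barR$ with a change of variables $x\mapsto\bar x$ so that $L(\barR f(\bar x))\approx f(x)$. None of this is in your proposal, and without it the drift estimate $L\tH_c\lesssim-p_2^{-2}-p_3^{-2}$ — hence the conclusion of the theorem — is not reached.
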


Most of the paper will be devoted to proving the existence
of such a Lyapunov function.

\begin{remark}\label{rem:Teq0debut} We assume throughout that $T_1$ and $T_4$
are
strictly positive. While the conclusions of \tref{prop:Lyapunov}
remain true for $T_1 = T_4 = 0$ (with any $\theta > 0$),
part of the argument has to be changed  in this case, 
as sketched in \rref{rem:Teq0Harris}. The positivity of the temperatures
is, however, essential for \tref{thm:mainthm}; at zero temperature,
the system is not irreducible, and none of the conclusions of
\tref{thm:mainthm}
hold. 
\end{remark}

\mysubsection{Overview of the dynamics}

To gain some insight into the strategy of the proof,
we illustrate some essential features
of the dynamics \eref{e:SDE}. 
Since the exterior rotors (at sites 1 and 4) are directly damped
by the $-\gamma_b p_b \dt$ terms in \eref{e:SDE}, we expect their
energy to decrease rapidly with large probability.
More specifically, for $b=1,4$, we find that $Lp_b$ is
equal to $-\gamma_b p_b$ plus some bounded terms, and
thus we expect $p_b$ to decay exponentially (in expectation value) 
when it is large. Therefore, the external rotors
recover very fast from thermal fluctuations, and will not be
hard to deal with.

On the other hand, the central rotors are not damped directly,
and feel the dissipative terms of \eref{e:SDE} only
indirectly, by interacting with the outer rotors.
The interesting issue appears when the energy of the system is very large
and mostly concentrated
in one or both of the central rotors. If most of the energy is at site
2 (meaning that $|p_2|$ is much larger than all other momenta),
the corresponding rotor spins very rapidly, \ie $q_2$ moves very rapidly
on $\torus$. But then, 
the interaction forces $w_\L(q_2-q_1)$ and $w_\C(q_3-q_2)$ oscillate
rapidly,
which causes the site to essentially decouple from its neighbors.
The same happens when most of the energy is at site 3, when  $w_\C(q_3-q_2)$ 
and $w_\R(q_3-q_4)$ oscillate rapidly. And when both $|p_2|$
and $|p_3|$ are large and much larger than $|p_1|$ and $|p_4|$, the 
forces $w_\L(q_2-q_1)$ and $w_\R(q_3-q_4)$ are highly oscillatory, so that
the central two rotors almost decouple from the outer ones (the force
$w_\C$ might or might not oscillate depending on $p_2$ and $p_3$).

This asymptotic decoupling is the interesting feature of the model:
in principle, if the central rotors do
not recover sufficiently fast from thermal fluctuations, the energy
of the chain could grow (in expectation value) without bounds. 
On the other hand, when their energy is large, the decoupling
phenomenon should make the central rotors less affected
by the fluctuations of the heat baths. Our results imply that both
effects combine in a way that prevents overheating. See \cite[Remark 3.10]{cuneo_nonequilibrium_2014}
for a quantitative discussion of these two effects for a chain
of three rotors. See also \cite{hairer_how_2009} for a clear exposition
of the overheating problem in a related model.

\begin{figure}[ht]
\centering
\includegraphics[width=0.90\textwidth, trim = 0 2mm -10mm 0, clip]{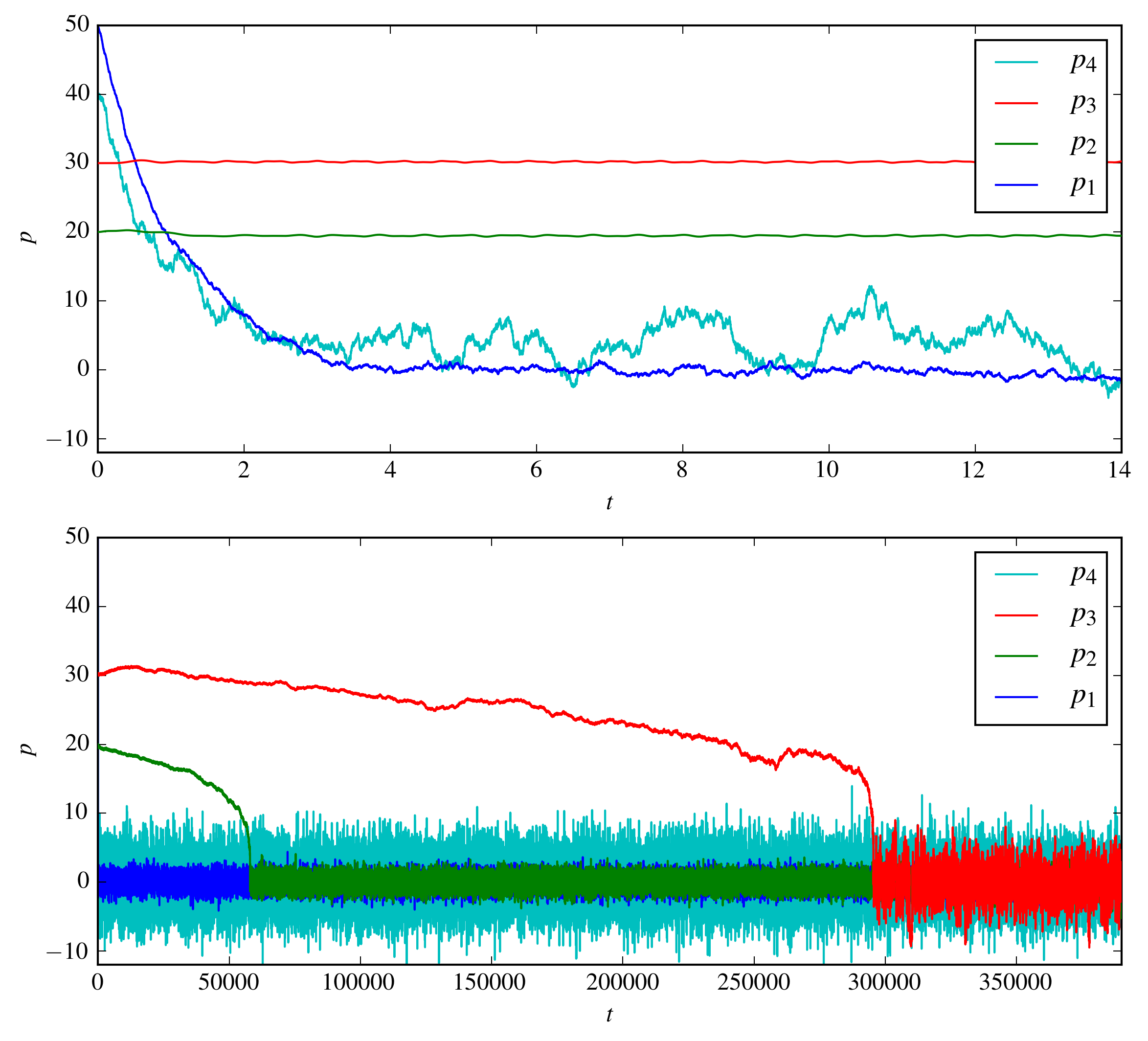}
\caption{Evolution of the momenta $p_1,\dots,p_4$, for $\gamma_1=\gamma_4 = 1$,
$T_1 = 1$, $T_4 = 10$, $q(0) = (0, 0, 0, 0)$
and $p(0) = (50, 20, 30, 40)$. The interaction potentials in the 
simulations here are $W_\I = -\cos$
so that the forces are $w_\I = \sin$, $\I=\L, \C, \R$.
}\label{fig:traj}
\end{figure}

\fref{fig:traj} illustrates the evolution\footnote{
The numerical algorithm used in this paper is based on
the one described in \cite{iacobucci_negative_2011}. The time step
is either $10^{-2}$ or $10^{-3}$ depending on the situation.}
of the momenta at two different 
time scales, starting with $p(0) = (50, 20, 30, 40)$.
The upper graph shows that indeed $p_1$ and $p_4$ decrease very fast,
and the lower graph indicates that $p_2$ and $p_3$
remain large for a significantly longer time, but eventually
also decrease. Since for this initial condition 
$p_3$ is larger than $p_2$, the force $w_\R$ oscillates
faster than $w_\L$. Therefore, $p_3$ couples less effectively to the
outer rotors (where the dissipation happens) than $p_2$, and hence $p_3$
decreases more slowly.

If one were to look at these trajectories for much
longer times, one would eventually observe some
fluctuations of arbitrary magnitude, followed by new
recovery phases. But large fluctuations are very rare.

Since the system is rapidly driven to small $p_1$, $p_4$, it is really
the dynamics of $(p_2, p_3)$ that plays the most important role.
We will often argue in terms of the 8-dimensional dynamics projected
onto the $p_2p_3$-plane. 
We illustrate some trajectories in this plane for several initial
conditions in \fref{fig:traj_avg}. To make the illustration
readable, we used a very small temperature, so that the picture is
dominated by the deterministic dynamics.

\begin{figure}[ht]
\centering
\includegraphics[width=0.95\linewidth, trim = 2mm 2mm -8mm 0, clip]{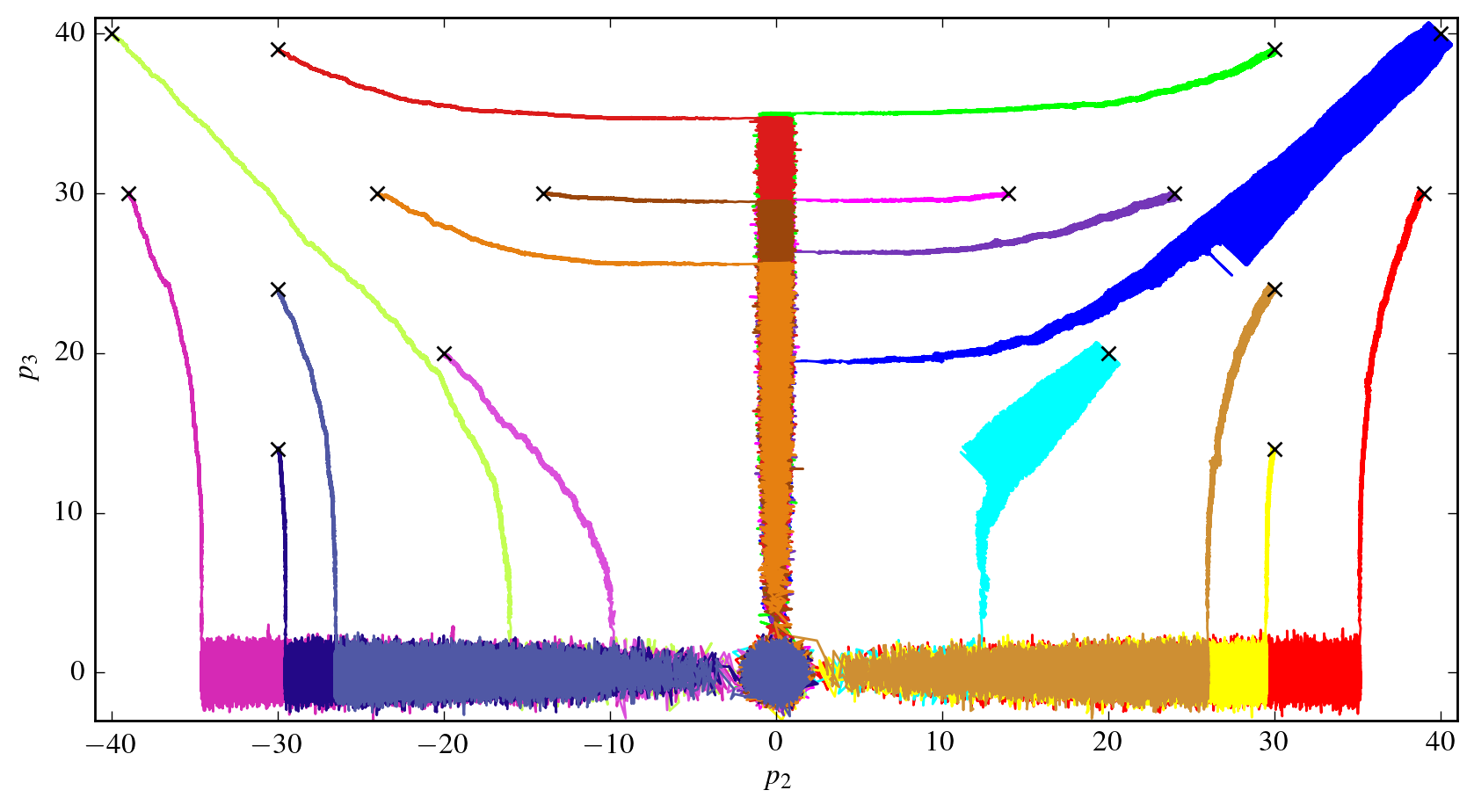}
\caption{The evolution of $p_2$ and $p_3$ for several initial
  conditions. The potentials are $W_\I = -\cos$,
$\I=\L, \C, \R$. Furthermore, $\gamma_1=\gamma_4 = 1$, $T_1 = 0.1$, and $T_4 =
0.4$.
Each ``$\times$'' sign indicates the beginning of a trajectory.
}\label{fig:traj_avg}
\end{figure}

The typical trajectory is as follows. Starting with some large $|p_2|$
and $|p_3|$, the slower of the two central rotors
is damped faster than the other, so that the projection drifts rapidly towards
one of the axes.
This leads to a regime where only one of the
central rotors is fast, while the other is essentially
thermalized. The energy in this fast rotor is gradually dissipated,
so that the orbit follows the axis towards the origin.

The behavior that we observe in \fref{fig:traj_avg} around the diagonal $p_2 =
p_3$
far enough from the origin is easily explained:
in the ``center of mass frame'' of the two central rotors,
we simply see two interacting rotors that oscillate slowly
in opposition, while being almost
decoupled from the outer rotors. More precisely, introducing
$Q = q_3-q_2\in \torus$ and $P=p_3-p_2\in \real$, we see that $(Q, P)$
acts approximately
as a mathematical pendulum with potential $2W_\C$, plus some rapidly
oscillating (and therefore weak) interactions with the outer rotors:
\begin{equ}
	\dot Q = P, \qquad \dot P = -2w_\C(Q) + \text{weak interactions}.
\end{equ}
Typically, if at first the energy in the center of mass frame
is not large enough to make a ``full turn,'' $Q$ oscillates
slowly around a minimum
of $W_\C$, which corresponds to a back-and-forth exchange of
momentum between 2 and 3, and explains the strips that
we observe around the diagonal.
The two central rotors are then gradually slowed down, 
until at some point the interaction with the external
rotors tears them apart.

The picture in the absence of noise (that is, when $T_1 = T_4 = 0$,
which is not covered by our assumptions)
is quite different, due to some resonances. We discuss
their nature in Appendix~\ref{s:resonances}.
These resonances are washed away by the noise, and are therefore not visible here.
They nevertheless play an important role in our computations, as we will see.

\mysubsection{Strategy}

In order to obtain rigorous results about the dynamics
and construct a Lyapunov function, we will apply specific
methods to each regime described above. We present them here in
increasing order of difficulty.

\begin{itemize}
\item When a significant part of the energy is contained in the outer rotors,
then as discussed above, the momenta of the two outer rotors essentially
decrease exponentially fast. In this region, the Lyapunov function
will be $e^{\theta H}$, and we will show that
when $p_1^2 + p_4^2$ is large enough and $\theta<
\min\left(1/{T_1}, 1/{T_4}\right)$, then
$Le^{\theta H} \lesssim -e^{\theta H}$ (\lref{l:LebetaH}).

\item When most of the energy is contained at just one of the central sites,
namely at site $j=2$ or $j=3$,
we will show that $L p_j \sim - p_j^{-3}$ when averaged
appropriately (\pref{prop:p1dansF1D1}). This corresponds to the neighborhood
of the axes in \fref{fig:traj_avg}.
This case is essentially treated as in \cite{cuneo_nonequilibrium_2014}.
In this region, we use a Lyapunov function
$V_j \sim e^{|p_j|^a + \frac\theta 2 p_j^2}$ (with
$a\in (0,1)$) such that $LV_j \lesssim -V_j/p_j^2$ (\pref{c:Lelambdap2tilde}).

\item When both $|p_2|$ and $|p_3|$ are large and hold most
of the energy, we do not
approximate the dynamics of $p_2$ and $p_3$ separately, but
we consider instead the ``central'' Hamiltonian
$H_c = \frac{p_2^2}{ 2} + \frac{p_3^2}{ 2} + W_{\L} + W_{\C}+ W_{\R}$.
We show that when averaged properly, $LH_c \sim - p_2^{-2} - p_3^{-2}$
(\pref{prop:technicalHb}). The Lyapunov function in this region
is $V_c \sim H_ce^{\theta H_c}$,
and we show (\pref{cor:borneOmegac}) that $LV_c \lesssim -V_c/H_c$.
Showing that $LH_c \sim - p_2^{-2} - p_3^{-2}$ is
the most difficult part of our proof.
The averaging of the rapidly oscillating forces will prove
to be insufficient due to some resonances, which
manifest themselves for some rational
values of $p_3/p_2$. We will consider separately the
vicinity of the $p_2 = p_3$ diagonal, which is easy to deal
with (\lref{lem:Hbloinresonances}), and the case
where $|p_3-p_2|$ is large, which requires substantially more
work (\sref{ss:decoupleddyn}). In the latter case,
we will use the rapid thermalization of the
external rotors in order to eliminate the resonant terms.

\end{itemize}

The factors $1/p_2^2$ and $1/H_c$
in $LV_j \lesssim -V_j/p_j^2$ and $LV_c \lesssim
-V_c/H_c$ are the cause of the 
logarithmic contribution in \eref{e:defphi}, which leads to the
subexponential convergence rate.

The final step (\sref{s:constrlyapunov}) is to combine $e^{\theta H}, V_2, V_3$
and $V_c$
(which each behave nicely in a given regime)
to obtain a Lyapunov function $V$ that behaves nicely everywhere
and  satisfies the conclusions of \tref{prop:Lyapunov}.

\mysubsection{The domains}\label{ss:thedomains}

Following the discussion above, we decompose $\Omega$ into several
sub-regions.
This decomposition only involves the momenta,
and not the positions.
All the sets in the decomposition are defined
in the complement of a ball $B_R$ of (large) radius $R$ in $p$-space:
\begin{equ}
B_R = \torus ^4 \times \Big\{ p \in \real^4: \sum_{i=1}^4 p_i^2 \le R^2\Big\}~.
\end{equ}

For convenience, we consider only $R\geq \sqrt{2}$ (see
\rref{rem:increasingOmega}).
We also use (large) integers $k$, $\ell$, and $m$
which will be fixed in \sref{s:constrlyapunov},
and we assume throughout that
\begin{equ}[eq:conditionklm]
1\leq  k < \ell < m~.	
\end{equ}

The first regions we consider are along the $p_2$ and $p_3$ axes:
\begin{equa}[e:omega23]
\Omega_2 &=\Omega_{2}(k, R)= \left\{x\in \Omega : p_2^2
>(p_1^2+p_3^2+p_4^2)^k\right\}\setminus B_{R}~,\\
\Omega_3 &=\Omega_{3}(k, R)= \left\{x\in \Omega : p_3^2 >
(p_1^2+p_2^2+p_4^2)^k\right\}\setminus B_{R}~.
\end{equa}
The region $\Omega_2$ (resp.~$\Omega_3$) corresponds to the configurations
where
most of the energy is concentrated at site 2 (resp.~3).
The next region corresponds to the configurations where
most of the energy is shared among the sites 2 and 3:
\begin{equ}\label{e:omegac}
\Omega_c = \Omega_{c}(\ell, m, R) = \left\{x\in \Omega :   p_2^2 + p_3^2 >
(p_1^2+p_4^2)^m ,~
p_3^{2\ell} >  p_2^2,~
p_2^{2\ell} > p_3^2\right\}\setminus B_{R}~
\end{equ}
(the conditions $p_3^{2\ell} >  p_2^2$ and $p_2^{2\ell} > p_3^2$
ensure that both $|p_2|$ and $|p_3|$ diverge sufficiently
fast when $\|p\|\to \infty$ in $\Omega_c$). 
These regions are illustrated in \fref{fig:intersectregionssphere}
and \fref{fig:intersectregionsp2p3}.
Note that $\Omega_2, \Omega_3, \Omega_c$ {\em do} intersect and {\em do not}
cover $\Omega$.
However, for $R$ large enough, the set $\Omega_2 \cup  \Omega_3 \cup \Omega_c \cup B_R$
contains the $p_2p_3$-plane (more precisely, the product
of $\torus^4$ and some neighborhood of the $p_2p_3$-plane in momentum space), which
is where the determining part of the dynamics lies, as discussed above.

\begin{figure}[ht]
\centering
\includegraphics[width=3.1in, trim = 0 1mm -0.5mm 0, clip]{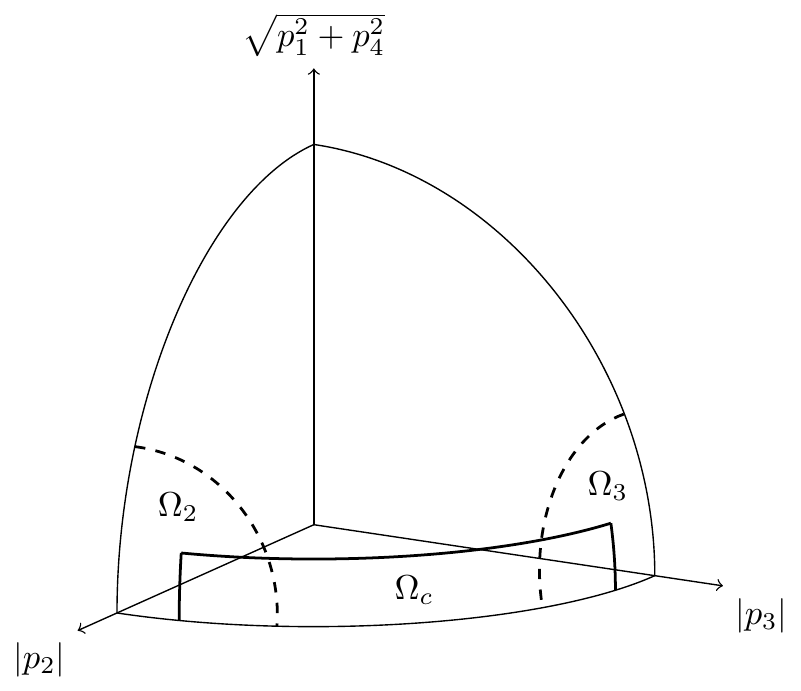}
\caption{A projection of the domains $\Omega_2, \Omega_3, \Omega_c$.
The spherical surface represents $\sum_{i=1}^4 p_i^2=C^2$ for some $C > R$.}
\label{fig:intersectregionssphere}
\end{figure}

\begin{figure}[ht]
\centering
\includegraphics[width=3.5in, trim = -3mm 0 0 0, clip]{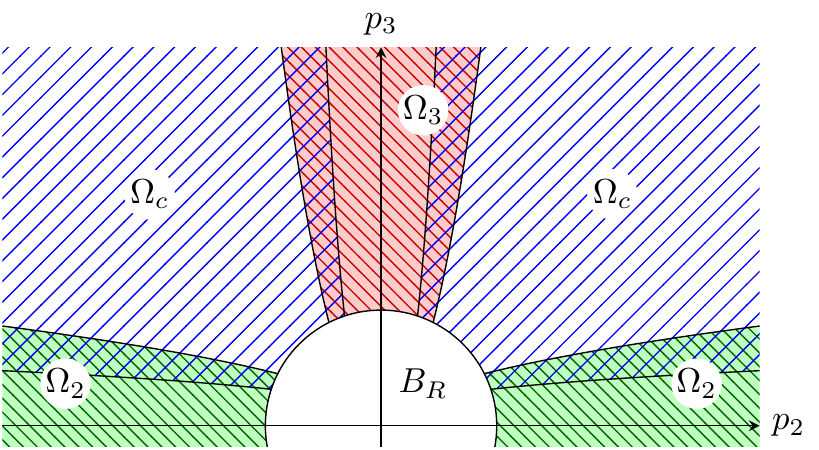}
\caption{The intersection of the sets $\Omega_2, \Omega_3, \Omega_c$
with the $p_2p_3$-plane (the lower half-plane is obtained
by axial symmetry).}
\label{fig:intersectregionsp2p3}
\end{figure}

\begin{remark}\label{rem:increasingOmega}
As a consequence of the restriction $R\geq \sqrt 2$, we have
$\Omega_j(k', R') \subset \Omega_j(k, R)$ for all $k'\geq k$,
$R'\geq R$, and $j=2,3$. Therefore, if a bound holds for all
$x\in \Omega_j(k, R)$, it also holds for all $x\in \Omega_j(k', R')$.
Similarly, at fixed $\ell$, we have
$\Omega_c(\ell, m', R')\subset \Omega_c(\ell, m, R)$
for all $m' \geq m$ and $R'\geq R$. This allows us to
increase $k, m$ and $R$ as needed (but not $\ell$).
We also observe immediately that 
for all $k, \ell, m$, and for $j=2,3$,
\begin{equa}
\lim_{R\to\infty}~\inf_{x\in \Omega_j(k, R)}|p_j| &=
\infty~,\label{eq:pjdivergeomegaj}\\
\lim_{R\to\infty}~\inf_{x\in \Omega_c(\ell, m, R)}|p_j| &=
\infty~.\label{eq:pjdivergeomegac}\\
\end{equa}
\end{remark}

\mysubsection{Notations}

Since averaging functions that rapidly oscillate in time
will play an important role, we introduce the $q_i$-average 
$\avg {f}_i = \frac {1}{2\pi}\int_0^{2\pi}f \dd q_i$
of a function $f:\Omega\to \real$ over one period of $q_i$. The result is
a function of $p$ and $\{q_j : j\neq i\}$.
In the presence of a generic function $f:\torus \to \real$ of one variable,
we write simply 
$\avg{f} =  \frac {1}{2\pi}\int_0^{2\pi}f(s) \dd s$, which is a constant.

For any function $f:\torus \to \real$ satisfying $\avg f = 0$,
one can find a unique integral $F: \torus \to \real$ such 
that $F' = f$ and $\avg F = 0$. More generally, we write
$f^{[j]}$ for the  $j^{\rm{th}}$ integral of $f$ that
averages to zero.

Without loss of generality,
we fix the additive constants of the potentials so that
\begin{equ}[eq:fixaddconst]
\avg{W_\I} = 0, \qquad \I=\L,\C,\R~.
\end{equ}
We also introduce two ``effective dissipation constants'':
\begin{equ}[eq:defalphas]
\alpha_2 = \gamma_1 \avg{W_{\L}^{2}}>0~, \qquad \alpha_3=\gamma_4\avg{
W_{\R}^{2}}>0~,
\end{equ}
where the positivity follows from \aref{as:assumptioncoupling}. Note also
that
because of
\eref{eq:fixaddconst}, there is no indeterminate additive constant in the
$\alpha_j$.

Finally, throughout the proofs, $c$ denotes a generic positive constant 
that can be each time different. These constants are allowed to depend on 
the parameters and functions at hand, but {\em not} on the position $x$.
We sometimes also use $c'$ to emphasize that the constant has changed.

\section{When only one of the central rotors is fast}\label{sec:23fast}

We consider the regime where either $|p_2|$ or $|p_3|$ (but not both) is much
larger than all other momenta.
The estimates for this regime are simple adaptations from
\cite{cuneo_nonequilibrium_2014}, but
we recall here the main ideas.

We start with some formal computations, thinking in terms
of powers of $p_2$ (resp. $p_3$) only. Then, we will restrict
ourselves to the set $\Omega_2(k, R)$  (resp. $\Omega_3(k, R)$) for
some large enough $k$ and $R$, so that the other momenta
are indeed ``negligible'' (see \lref{lem:kgrandOmega2})
compared to $p_2$ (resp. $p_3$).

\mysubsection{Averaging with one fast variable}\label{ssect:avgonevar}

Assume that $|p_2|$ is much larger than the other momenta.
We think in terms of the following fast-slow decomposition:
the variables $q_1, q_3, q_4$ and $p$ evolve slowly,
while $q_2$ evolves rapidly, since $\dot q_2 = p_2$,
and $p_2$ is large. In this regime, the variable
$q_2$ swipes through  $\torus$ many times
before any other variable changes significantly.
The dynamics for short times is
\begin{equs}[eq:approxdyn1fast]
p(t) & \approx p(0)~,\\
q_i(t) &\approx q_i(0)~, \qquad\qquad i=1,3,4~,	\\
q_2(t) &\approx  q_2(0) + p_2(0)t \pmod {2\pi}~.
\end{equs}

We consider an observable $f: \Omega\to \real$ and let $g$ be defined 
by
\begin{equ}[eq:beginaverage]
Lf = g~.
\end{equ}
Under the approximation \eref{eq:approxdyn1fast}, the quantity
$g(x(t))$ oscillates very rapidly
around its $q_2$-average $\avg{g}_2$, which is a function
of the slow variables $q_1, q_3, q_4$ and $p$.
We therefore expect the effective equation $Lf \approx  \avg{g}_2$ 
to describe the evolution of $f$ over several periods of
oscillations, and we now show how to give a precise meaning
to this approximation.

Although the stochastic terms (which appear as the second-order part
of the differential operator $L$) appear in the computations,
they do not play an important conceptual
role in this discussion; the rapid oscillations
that we average are of dynamical nature and
are present regardless of the stochastic forcing exerted by the
heat baths.

The generator of the dynamics \eref{eq:approxdyn1fast}
is simply $$L_2 = p_2 \partial_{q_2}~.$$
Decomposing the  generator $L$ defined in \eref{e:genl} as $L = L_2 + (L-L_2)$
and considering powers of $p_2$, we  view $L_2$
as large, and $L-L_2$ as small.
Note that for all smooth $h: \Omega \to \real$, we have 
$\avg{L_2 h}_2 =p_2\avg{\partial_{q_2} h}_2 = 0$ by periodicity,
so that the image of $L_2$
contains only functions with zero $q_2$-average.
Consider next the indefinite integral $G = \int (g -\avg{g}_2)\dd q_2$
(we choose the integration constant $C(q_1, q_3, q_4, p)$
to our convenience). By construction, we have
$L_2 (G/p_2) = g - \avg{g}_2$, so that
\begin{equ}[eq:avgscheme]
	L\left(f - \frac{G}{p_2}\right) = \avg{g}_2 +(L_2-L)\frac G{p_2}~.
\end{equ}
By subtracting the ``small'' counterterm  ${G}/{p_2}$ from $f$,
we have managed to replace $g$ with its $q_2$-average in the
right-hand side, plus some ``small'' correction. This procedure is
what we refer to as averaging with respect to $q_2$, and it
makes sense only in the regime where $|p_2|$ is very large.
If $\avg{g}_2 = 0$ and $(L_2-L)(G/p_2)$ is still oscillatory,
the procedure must be repeated.

\mysubsection{Application to the central momenta}

We now apply this averaging method to the observable $p_2$,
in the regime where $|p_2|$ is very large. By the definition of
$L$, we find
\begin{equ}[eq:startingLp2]
Lp_2 = w_\C -w_\L~.
\end{equ}
We have $\avg{w_\C}_2 = \avg{w_\L}_2 = 0$. Moreover,
$\partial_{q_2}W_\C(q_3-q_2) = -w_\C(q_3-q_2)$
and $\partial_{q_2}W_\L(q_2-q_1) = w_\L(q_2-q_1)$. Thus, in the notation
above, $G = \int (w_\C -w_\L)\dd q_2 = -W_\C - W_\L$, and we introduce the new variable
\begin{equ}\label{eq:p21}
p_2^{(1)} = p_2 - \frac G {p_2} = p_2 +\frac{W_\C + W_\L}{p_2}~.
\end{equ}
By \eref{eq:avgscheme}, we obtain
\begin{equs}[eq:Lp21]
Lp_2^{(1)} &= -(L_2-L)\left(\frac{W_\C + W_\L}{p_2} \right) \\
&= \frac{p_3w_\C-p_1w_\L}{p_2} + \frac{W_\C w_\L-W_\L w_\C +W_\L w_\L-W_\C
w_\C}{p_2^2}~.
\end{equs}
Observe that the right-hand side of \eref{eq:Lp21} is still
oscillatory, but now with an amplitude of order $1/p_2$, which is much
smaller than the amplitude of  \eref{eq:startingLp2} when $|p_2|$ is large.
Furthermore, the
right-hand side of \eref{eq:Lp21} has zero mean, since $\avg{w_\C}_2 = 
\avg{w_\L}_2 = 0$ and $$\avg{W_\C w_\L-W_\L w_\C +W_\L w_\L-W_\C w_\C}_2 = 
\frac 12 \avg{\partial_{q_2}(W_\C+W_\L)^2}_2 = 0~,$$
by periodicity.
In order to see a net effect, we need to average again. We consider
now the observable $f = p_2^{(1)}$, and apply the same procedure.
Instead of averaging the right-hand side of
\eref{eq:Lp21} in one step, we first deal only with the terms of order
$-1$ in $p_2$, by introducing
\begin{equ}\label{eq:p22}
p_2^{(2)} = p_2^{(1)}+{\frac
{p_1 W_{\L}+p_3 W_\C}{p_2^{2}}}~.
\end{equ}
We postpone further computations to the proof of \pref{prop:p1dansF1D1}
below, and explain here the main steps. We will see that
$Lp_2^{(2)}$ consists of terms of order $-2$ and $-3$ 
(by construction, the contribution of order $-1$ disappears).
The terms of order $-2$ have mean zero, and will be removed
by introducing a new variable $p_2^{(3)}$. We will then find
that $Lp_2^{(3)}$ contains terms of order $-3$ and $-4$.
To replace the terms of order $-3$ with their average
(which is finally non-zero), we will introduce a function
$p_2^{(4)}$. This will complete the averaging procedure.

We illustrate in \fref{fig:ave} the time-dependence
of $p_2, p_2^{(1)}$ and $p_2^{(2)}$ (slightly
shifted for better readability)\footnote{The irregularity
of the envelope of $p_2$ in \fref{fig:ave} is due to
the randomness of the phases of the two oscillatory forces
$w_\L$ and $w_\C$: they sometimes add up, and sometimes
compensate each other. Note also that the trajectory
of $p_2^{(2)}$ is rougher than the other two,
since the definition of  $p_2^{(2)}$ involves $p_1$,
which is directly affected by the stochastic force.}. Clearly, the
oscillations of $p_2^{(1)}$ are much smaller than
those of $p_2$, and we barely perceive the oscillations
of $p_2^{(2)}$, since they are smaller than the
random fluctuations. 

\begin{figure}[ht]
\centering
\includegraphics[width=0.8\linewidth, trim = 2mm 2mm -4mm 0, clip]{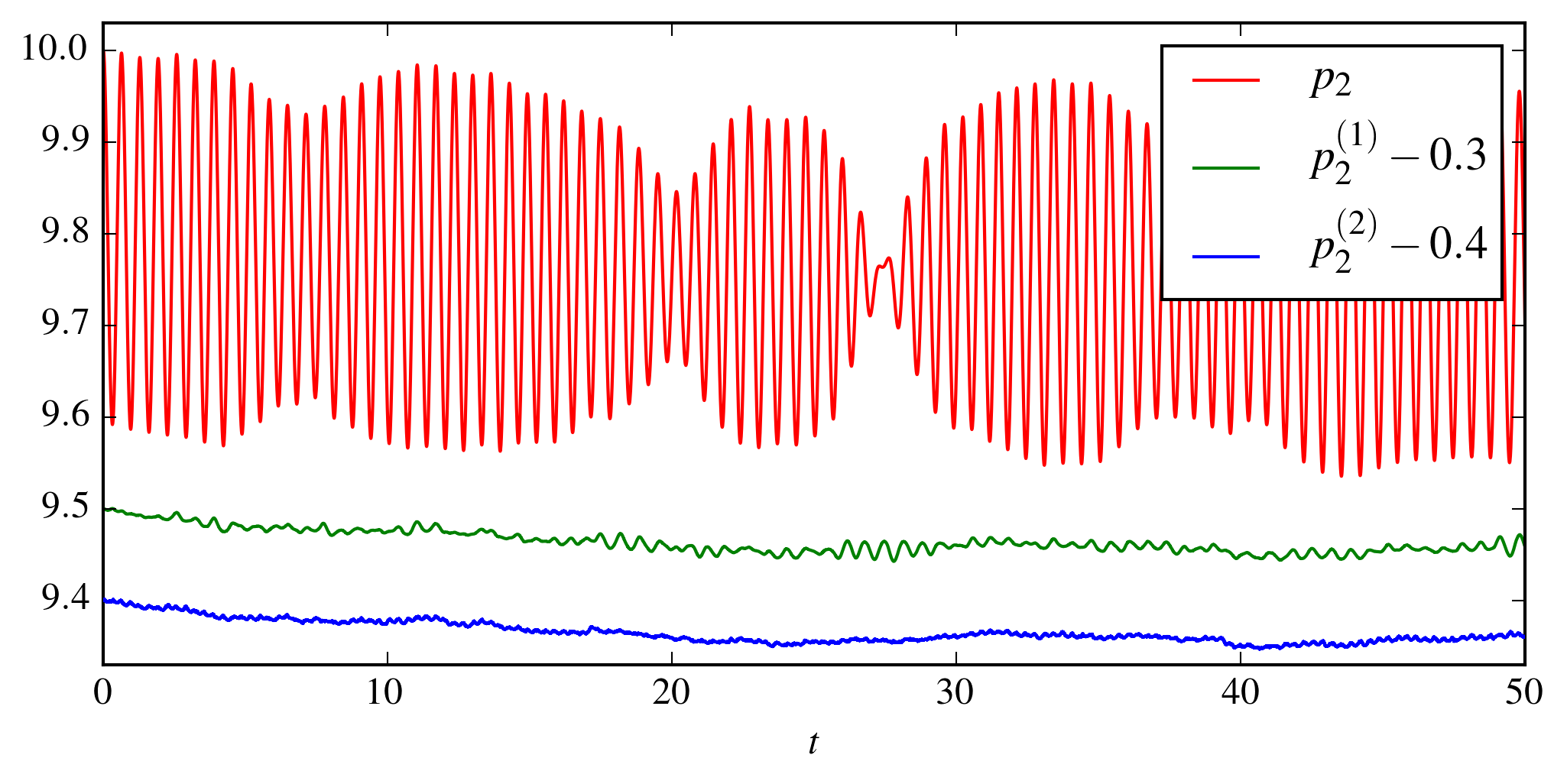}
\caption{The effect of the coordinate changes \eref{eq:p21} and
  \eref{eq:p22} on the effective
oscillations. Note that two of the curves are shifted vertically for easier
readability.
}\label{fig:ave}
\end{figure}

Before we state the result of this averaging process, 
we introduce a convenient notation for the remainders.

\begin{definition}\label{def:definition OOj}
Let $f, g$ be two functions defined on the set $\{x\in \Omega: p_2 \neq 0\}$.
We say that $f$ is $\OO_2(g)$ if there is a polynomial $z$
such that when $|p_2|$ is large enough,
\begin{equ}[eq:defOOj]
  |f(x)| \le z(p_1,p_3,p_4) | g(x)|~.
\end{equ}
The analogous notation $\OO_3$ will be used when $|p_3|$ is large,
and with a polynomial $z(p_1, p_2, p_4)$.
\end{definition}

This notation reflects the fact that when
most of the energy is at site 2 (resp. 3), one can forget
about the dependence on $p_1, p_3, p_4$ (resp. $p_1, p_2, p_4$),
provided that it is at most polynomial
(by the compactness of $\torus^4$, the position $q$ is irrelevant).
For example, the term ${
(p_1 W_{\L}+p_3 W_\C)/{p_2^{2}}}$ in \eref{eq:p22} is
$\OO_2(p_2^{-2})$.

It is easy to realize that the $\OO_j$, $j=2,3$, follow the same basic rules as
the usual $\OO $. In particular, $\OO_j(g_1) + \OO_j(g_2) = 
\OO_j(|g_1| + |g_2|)$ and  $\OO_j(g_1)\OO_j(g_2) = 
\OO_j(g_1g_2)$.

\begin{proposition}\label{prop:p1dansF1D1} There are functions $\pti2$ and
  $\pti3$ of the form
\begin{equa}
\pti2  &=p_2 + {\frac {W_{\L}+W_\C}{p_2}} +{\frac
{p_1 W_{\L} +p_3 W_\C}{p_2^{2}}}+ \OO_2(p_2^{-3})~,\label{eq:defp2tilde}\\
\pti3 &= p_3+ {\frac {W_{\R}+W_\C}{p_3}} +{\frac
{p_4 W_{\R}+p_2 W_\C}{p_3^{2}}} + \OO_3(p_3^{-3})~,\label{eq:defp3tilde}
\end{equa}
such that for $j=2,3$,
\begin{equ}\label{e:lp2}
L \pti j =-{\alpha_j}p_j^{-3} + \OO_j(p_j^{-4})~,
\end{equ}
where $\alpha _2>0$, $\alpha _3>0$ are defined in \eref{eq:defalphas}.
Furthermore,
\begin{equs}[3][eq:partderp2tilde]
\partial_{p_1}\pti2 &= \frac {W_{\L}}{p_2^2}+\OO_2(p_2^{-3})~, &\qquad
\partial_{p_4}\pti2 &= \OO_2(p_2^{-4})~,\\
\partial_{p_1}\pti3 &=\OO_3(p_3^{-4})~, &\qquad \partial_{p_4}\pti3 &=
\frac {W_{\R}}{p_3^2}+\OO_3(p_3^{-3})~.
\end{equs}

\end{proposition}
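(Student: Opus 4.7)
The plan is to build $\tilde p_2$ (resp. $\tilde p_3$) by iterating the averaging scheme \eqref{eq:avgscheme} with respect to $q_2$ (resp. $q_3$) until a non-oscillatory contribution appears at order $p_2^{-3}$ (resp. $p_3^{-3}$). By the symmetry $1\leftrightarrow 4$, $2\leftrightarrow 3$, $\L \leftrightarrow \R$ of the model, only the $\tilde p_2$ case needs a full treatment. I build $\tilde p_2$ in four successive steps $p_2^{(1)}, p_2^{(2)}, p_2^{(3)}, p_2^{(4)} = \tilde p_2$, the first two being precisely those defined in \eqref{eq:p21} and \eqref{eq:p22}. The strategy at each step is the same: compute $Lp_2^{(j)}$, sort terms by their order in $p_2^{-1}$ (using Definition~\ref{def:definition OOj} to absorb inessential factors of $p_1,p_3,p_4$), check that the leading order in the expansion has zero $q_2$-average, and then add an $\OO_2(p_2^{-(j+1)})$ correction $(-1/p_2)\int (\,\cdot\,)\dd q_2$ to cancel it, exactly as in \eqref{eq:avgscheme}.

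Concretely, \eqref{eq:Lp21} already shows $Lp_2^{(1)} = \OO_2(p_2^{-1})$ with zero $q_2$-average, and a short calculation gives that forming $p_2^{(2)}$ kills the $\OO_2(p_2^{-1})$ part of $Lp_2^{(1)}$ by construction, leaving $Lp_2^{(2)}$ as a sum of $\OO_2(p_2^{-2})$ and $\OO_2(p_2^{-3})$ terms. Using $\partial_{q_2}W_\L = w_\L$, $\partial_{q_2}W_\C=-w_\C$, the identities $\avg{W_\I w_\I}_2=\HALF\avg{\partial_{q_2}W_\I^2}_2=0$ (relying on \eqref{eq:fixaddconst}), and $\avg{w_\I}_2 = 0$, one verifies that the $q_2$-average of the $\OO_2(p_2^{-2})$ piece vanishes. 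One then defines $p_2^{(3)}= p_2^{(2)}+\OO_2(p_2^{-3})$ as the $q_2$-antiderivative (divided by $p_2$) of that piece, so that $Lp_2^{(3)}$ reduces to $\OO_2(p_2^{-3}) + \OO_2(p_2^{-4})$.

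The crucial step is to identify the $q_2$-average at order $p_2^{-3}$ in $Lp_2^{(3)}$. Most contributions arising from applying $L-L_2$ to the corrections still integrate to zero over $q_2$ by the same periodicity arguments. The only surviving non-oscillatory contribution comes from the interplay of the dissipative terms $-\gamma_1 p_1\partial_{p_1}$ and the kick $w_\L \partial_{p_1}$ in $L$ with the correction $p_1 W_\L/p_2^{2}$ in $p_2^{(2)}$ and with the $p_1$-dependent piece of $p_2^{(3)}$: tracking these terms across the two iterations and integrating by parts in $q_2$ (using $\partial_{q_2}W_\L = w_\L$ so that $\avg{W_\L^2}_2 = \avg{W_\L^2}$) yields the non-zero average $-\gamma_1\avg{W_\L^2}\,p_2^{-3} = -\alpha_2 p_2^{-3}$, with $\alpha_2$ as in \eqref{eq:defalphas}. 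A last correction $\tilde p_2 = p_2^{(4)} = p_2^{(3)} + \OO_2(p_2^{-4})$ removes the zero-average part of the $\OO_2(p_2^{-3})$ terms, giving $L\tilde p_2 = -\alpha_2 p_2^{-3} + \OO_2(p_2^{-4})$ as claimed.

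Finally, the partial derivative formulas \eqref{eq:partderp2tilde} are read off by differentiating the explicit expansions \eqref{eq:defp2tilde} and \eqref{eq:defp3tilde} and retaining only the leading orders in $p_2^{-1}$ (resp. $p_3^{-1}$): all terms independent of $p_1$ or $p_4$ drop out, and since $W_\R$ and $W_\C$ do not depend on $q_1$, the $\partial_{p_1}\tilde p_2$ expansion only feels the contribution from the $W_\L/p_2$ summand at order $p_2^{-2}$, and similarly for the other three formulas. The main obstacle is the order-$p_2^{-3}$ bookkeeping: among all terms produced by three nested applications of $L$, one must verify that the unique non-vanishing $q_2$-average is $-\gamma_1\avg{W_\L^2}p_2^{-3}$, with all other candidate contributions cancelling via integration by parts; the rest of the argument is routine iteration of the averaging scheme together with the $\OO_2$ calculus.
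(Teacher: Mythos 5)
Your proposal follows the paper's averaging scheme step by step, and you correctly identify the main dissipative contribution $\avg{\gamma_1 w_\L W_\L^{[1]}}_2 = -\gamma_1\avg{W_\L^2} = -\alpha_2$ at order $p_2^{-3}$. However, there is a genuine gap in the claim that ``all other candidate contributions cancel via integration by parts.'' They do not. After averaging the order-$(-3)$ terms in $Lp_2^{(3)}$, one is left not only with $-\alpha_2/p_2^3$ but also with the non-oscillatory residual
\begin{equ}
\frac{\avg{p_3\,W_\L w_\C - p_1\,W_\C w_\L}_2}{p_2^3}
= \frac{(p_3-p_1)\avg{W_\C w_\L}_2}{p_2^3}~,
\end{equ}
which is generically non-zero: for $W_\L=W_\C=-\cos$ one computes $\avg{W_\C w_\L}_2 = -\tfrac12\sin(q_3-q_1)\neq 0$. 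Since this term already has zero oscillation in $q_2$, the averaging mechanism (subtracting a $q_2$-antiderivative divided by $p_2$) cannot remove it, and your iteration stalls with a spurious $\OO_2(p_2^{-3})$ contribution that is neither dissipative nor negligible.

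The paper's fix is a separate, ad hoc observation: because $\avg{W_\L W_\C}_2$ depends only on $q_1,q_3$ (and not on $q_2$ or any momentum), one has
\begin{equ}
\avg{p_3\,W_\L w_\C - p_1\,W_\C w_\L}_2 = \bigl(p_1\partial_{q_1}+p_3\partial_{q_3}\bigr)\avg{W_\L W_\C}_2 = L\avg{W_\L W_\C}_2~,
\end{equ}
so the residual can be absorbed into the counterterm $\tilde p_2 = p_2^{(4)} - \avg{W_\L W_\C}_2/p_2^3$, which is still $\OO_2(p_2^{-3})$ and does not affect the claimed derivative estimates. Without identifying and handling this non-dissipative residual, the key assertion \eref{e:lp2} is not established, so you should add this step to complete the proof.
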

\begin{proof}It suffices to consider the case $j=2$.
The variable $\tilde p_2$ is constructed as in \cite{cuneo_nonequilibrium_2014}.
We continue the averaging procedure started above.
It is easy to check that $Lp_2^{(2)}$ can be written as
\begin{equ}
Lp_2^{(2)} = \frac {1}{{p_2^{2}}}\partial_{q_2} R_1 + 
 \frac{2 p_1\left(W_\L w_\L- W_\L w_\C\right) +
 2p_3\left( W_\C w_\L - W_\C w_\C \right)  }{p_2^3}~,
\end{equ}
with
\begin{equ}
R_1 =-p_1^{2} W_{\L}-p_3^{2} W_\C  -\gamma_1 p_1 \Wl1 +W_{\L}^{2}+W_{\L}
W_\C+W_{\C}^{2}+\Wc1  w_\R~.
\end{equ}
Since it is a total derivative, the term of order $-2$ has zero $q_2$-average, and
by introducing $
p_2^{(3)} = p_2^{(2)} - \frac {R_1} {{p_2^{3}}}
$, we find
\begin{equ}[eq:Lp23]
Lp_2^{(3)} = \frac { \partial_{q_2}R_2 +w_\L \Wl1 \gamma_1+ 
p_1 \left( W_\C w_\L-2 W_\L w_\C \right)+
 p_3\left( 2 W_\C w_\L-W_\L w_\C \right) }{p_2^{3}} + \OO_2(p_2^{-4})~,
\end{equ}
with
\begin{equs}
R_2 & = -p_1^{3}W_{\L} - p_3^{3} W_\C-3 
\gamma_1 p_1^{2}\Wl1 -\gamma_1^{2}p_1 \Wl2+3p_1 W_{\L}^{2}
\\&
\qquad +2 \gamma_1 T_1 \Wl1+(p_3-p_4)w'_\R \Wc2 + 3 p_3{W_\C^2}+3 p_3\Wc1  w_\R~.
\end{equs}
One can then average the terms of order -3 in \eref{eq:Lp23}. We have again
$\avg{  \partial_{q_2}R_2}_2 = 0$
by periodicity, and after integration by parts we find
\begin{equ}
	\avg{W_\C w_\L }_2 = \avg{w_\C W_\L}_2\quad \text{and} \quad \avg{\gamma_1 w_\L \Wl1}_2 =
-\gamma_1\avg{W_\L^2}_2 = -\alpha_2
\end{equ}
(for the signs, recall that $W_\L = W_\L(q_2-q_1)$
and $W_\C = W_\C(q_3-q_2)$). By adding appropriate counterterms
(not written explicitly), 
we obtain a function $p_2^{(4)} = p_2^{(3)} + \OO_2(p_2^{-4})$ such that
$$
Lp_2^{(4)} = - \frac{{\alpha_2}}{p_2^3} + \frac{\avg{p_3 W_\L w_\C- p_1W_\C 
w_\L}_2}{p_2^3}+ \OO_2(p_2^{-4})~.
$$
The first term in the right-hand side is the one we are looking for, 
and we deal with the other term of order ${-3}$ (which is non-zero)
as follows. We observe that
$$\avg{p_3 W_\L w_\C- p_1W_\C  w_\L}_2
= (p_1 \partial_{q_1} +p_3 \partial_{q_3})\avg{ W_\L W_\C}_2 = L\avg{ W_\L
W_\C}_2~,$$
since $\avg{ W_\L W_\C}_2$ is a function of $q_1, q_3$ only. We then set
$$\tilde p_2 =  p_2^{(4)} - \frac{\avg{ W_\L W_\C}_2}{p_2^3}$$
and obtain \eref{e:lp2}. It is immediate by the construction
of $\pti2$ that \eref{eq:partderp2tilde} holds.
\myqed\end{proof}

We now introduce a lemma, which says that
remainders of the kind $\OO_j(|p_j|^{-r})$, $j=2,3$,
can be made very small on $\Omega_j(k, R)$, 
provided that the parameters $k,R$ are large enough.

\begin{lemma}\label{lem:kgrandOmega2}Let $j\in \{2,3\}$ and $\pworkpower>0$.
Fix $\varepsilon > 0$ and a function $f = \OO_j(|p_j|^{-\pworkpower})$.
Then, for all sufficiently large $k$ and $R$, we have
$$
\sup_{x\in \Omega_j(k, R)} |f(x)| \leq \varepsilon~.
$$
\end{lemma}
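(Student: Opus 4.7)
The plan is to use that on $\Omega_j(k,R)$ the ``other'' momenta are polynomially dominated by $|p_j|$, so the polynomial hidden in the $\OO_j$ notation contributes only a fractional power of $|p_j|$ that is easily beaten by $|p_j|^{-r}$ once $k$ is taken large enough. This is a bookkeeping argument; no genuine obstacle is expected.

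By symmetry it suffices to treat $j=2$. First I would unfold \dref{def:definition OOj}: there exist a polynomial $z$ in three variables and a threshold $M>0$ (both depending only on $f$) such that
\begin{equation*}
|f(x)| \leq z(p_1,p_3,p_4)\,|p_2|^{-\pworkpower}\qquad\text{whenever } |p_2|\geq M.
\end{equation*}
Calling $d$ the total degree of $z$, the crude estimate $|z(p_1,p_3,p_4)|\leq C_1(1+|p_1|+|p_3|+|p_4|)^d$ holds with $C_1$ depending only on $z$.

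Next I would exploit the defining inequality of $\Omega_2(k,R)$. Since $p_2^2>(p_1^2+p_3^2+p_4^2)^k$, one gets $|p_i|\leq (p_1^2+p_3^2+p_4^2)^{1/2}\leq |p_2|^{1/k}$ for $i=1,3,4$, so
\begin{equation*}
\bigl(1+|p_1|+|p_3|+|p_4|\bigr)^d \leq \bigl(1+3|p_2|^{1/k}\bigr)^d.
\end{equation*}
By \eref{eq:pjdivergeomegaj}, $|p_2|\to\infty$ uniformly on $\Omega_2(k,R)$ as $R\to\infty$; in particular, for $R$ large enough we have $|p_2|\geq M$ and $1\leq |p_2|^{1/k}$ throughout $\Omega_2(k,R)$, and combining the above I get a uniform bound of the form
\begin{equation*}
|f(x)| \leq C_2\,|p_2|^{d/k-\pworkpower}\qquad\text{on }\Omega_2(k,R),
\end{equation*}
with $C_2$ depending only on $f$ and $d$.

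The conclusion then follows in two final steps: choose $k$ so large that $d/k<\pworkpower$ (making the exponent negative), and then, using \eref{eq:pjdivergeomegaj} once more, increase $R$ further so that $C_2|p_2|^{d/k-\pworkpower}\leq\varepsilon$ uniformly on $\Omega_2(k,R)$. The case $j=3$ is identical up to relabelling. The only point requiring care is that $z$, and therefore $d$ and $C_1$, are determined by $f$ alone, so one is free to pick $k$ as a function of $d$ and $\pworkpower$, and then $R$ as a function of $k$, $d$, $\pworkpower$ and $\varepsilon$; the monotonicity noted in \rref{rem:increasingOmega} ensures the bound persists for any larger choice of $k$ or $R$.
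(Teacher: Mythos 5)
Your proof is correct and follows essentially the same route as the paper's: unfold the $\OO_j$ definition to extract the polynomial $z$, use the defining inequality $p_2^2>(p_1^2+p_3^2+p_4^2)^k$ of $\Omega_2(k,R)$ to trade the polynomial in $(p_1,p_3,p_4)$ for a small fractional power $|p_2|^{d/k}$, then choose $k$ large so the net exponent of $|p_2|$ is negative and finally $R$ large enough (via \eref{eq:pjdivergeomegaj}) to push the bound below $\varepsilon$. The only difference from the paper is cosmetic: you bound $z$ by $(1+|p_1|+|p_3|+|p_4|)^d$ where the paper uses $c+c(p_1^2+p_3^2+p_4^2)^N$.
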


\begin{proof} We prove the result for $j=2$.
By \dref{def:definition OOj} and \eref{eq:pjdivergeomegaj}, there is a
polynomial $z$ such that for all large enough $R$ and all $k$, we have
$|f|\leq  {z(p_1, p_3,p_4)}{|p_2|^{-\pworkpower}}$ on $\Omega_2(k, R)$.
But then, we have on the same set
$$
|f|\leq \frac {z(p_1, p_3,p_4)}{|p_2|^{\pworkpower}} \leq
\frac{c+c(p_1^2+p_3^2+p_4^2)^N}{|p_2|^\pworkpower} \leq \frac{c+c|p_2|^{\frac
{2N}k}}{|p_2|^\pworkpower} \leq c  |p_2|^{\frac {2N}k-\pworkpower}~,
$$
where
the second inequality is immediate for sufficiently large $N$,
the third inequality comes from the definition of $\Omega_2$, and the fourth
inequality holds because $|p_2|$ is bounded away from zero on $\Omega_2(k, R)$.
Recalling \eref{eq:pjdivergeomegaj}, we obtain the desired result when
$k$ is large enough so that $\frac {2N}k -r < -\frac r 2$.
\myqed\end{proof}

We now construct partial Lyapunov functions in the regions $\Omega_2$ and
$\Omega_3$.

\begin{proposition}\label{c:Lelambdap2tilde} Let $0<\theta < \min(1/T_1, 1/T_4)$
and $a\in (0,1)$. Consider the functions\footnote{
The role of the contribution $|\pti j|^a$ 
is to facilitate the patchwork that will lead to a global Lyapunov function in
\sref{s:constrlyapunov}. The corrections involving $F_2$ and $F_3$ help average
some $W_{\L}^2$ and $W_{\R}^2$
that appear in the computations. Without this correction, we would need a
condition on $\theta$ that is more restrictive than the natural condition
$\theta < \min(1/T_1, 1/T_4)$. }
\begin{equs}[eq:defVja]
V_2&= e^{|\pti 2|^a + \frac \theta 2  \ptipow
22}\left(1+F_2(q_2-q_1)/p_2^3\right)~,\\
V_3&= e^{|\pti 3|^a + \frac \theta 2  \ptipow
32}\left(1+F_3(q_3-q_4)/p_3^3\right)~,
\end{equs}
with the $\pti{j}$ of \pref{prop:p1dansF1D1}, and 
$F_2,F_3:\torus \to \real$ such that respectively $F_2'(s) =
\theta^2\gamma_1 T_1(\avg{W_{\L}^2}-W_{\L}^2(s) )$
and $F_3'(s) =
\theta^2\gamma_4 T_4(\avg{W_{\R}^2}-W_{\R }^2(s) )$.
Then, there are constants $\CstOmegaia,\CstOmegaib, \CstOmegaid>0$, independent
of $a\in (0,1)$, such that for all sufficiently large $k$ and $R$,
we have for $j=2,3$ the following inequalities on $\Omega_j$:
\begin{equ}\label{e:ptij}
\CstOmegaia e^{|p_j|^a + \frac \theta 2  p_j^2} < V_j < \CstOmegaib
e^{|p_j|^a + \frac \theta 2  p_j^2}~,
\end{equ}
\begin{equ}[e:Lfp2tilde]
L  V_j \leq -\CstOmegaid
p_j^{-2}e^{|p_j|^a + \frac \theta 2  p_j^2}~.
\end{equ}
\end{proposition}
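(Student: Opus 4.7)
The plan is to treat both estimates via formal expansions in powers of $p_j^{-1}$ on $\Omega_j$ and then invoke \lref{lem:kgrandOmega2} to absorb every remainder into an arbitrarily small quantity by enlarging $k$ and $R$. I restrict throughout to $j=2$; the case $j=3$ is symmetric.

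For \eqref{e:ptij}, I start from $\pti 2 = p_2 + \OO_2(p_2^{-1})$ in \eqref{eq:defp2tilde}, which gives $\ptipow 2 2 - p_2^2 = \OO_2(1)$ and $|\pti 2|^a - |p_2|^a = \OO_2(|p_2|^{a-2})$. On $\Omega_2$, the latter remainder is bounded by $R^{a-2} \leq R^{-1}$ \emph{uniformly in $a \in (0,1)$}, which is the key observation behind the $a$-independence of the constants $\CstOmegaia,\CstOmegaib$. The factor $1 + F_2(q_2-q_1)/p_2^3$ lies in $(1/2,3/2)$ once $R$ is large enough, since $F_2$ is bounded on $\torus$, and \eqref{e:ptij} follows.

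For \eqref{e:Lfp2tilde}, set $\Psi_2 = |\pti 2|^a + \tfrac\theta 2 \ptipow 2 2$ and use the standard identity
$$Le^{\Psi_2} = e^{\Psi_2}\bigl(L\Psi_2 + \gamma_1 T_1(\partial_{p_1}\Psi_2)^2 + \gamma_4 T_4(\partial_{p_4}\Psi_2)^2\bigr).$$
Because $F_2(q_2-q_1)/p_2^3$ depends on neither $p_1$ nor $p_4$, the carré-du-champs cross term in $L\bigl(e^{\Psi_2}(1+F_2/p_2^3)\bigr)$ vanishes, so
$$\frac{LV_2}{V_2} = L\Psi_2 + \gamma_1 T_1(\partial_{p_1}\Psi_2)^2 + \gamma_4 T_4(\partial_{p_4}\Psi_2)^2 + \frac{L(F_2/p_2^3)}{1+F_2/p_2^3}.$$
Using \pref{prop:p1dansF1D1}, I expand each term: $L\ptipow 2 2 = 2\pti 2 L\pti 2 + 2\gamma_1 T_1(\partial_{p_1}\pti 2)^2 + \OO_2(p_2^{-8}) = -2\alpha_2/p_2^2 + \OO_2(p_2^{-3})$, while $L|\pti 2|^a = \OO_2(|p_2|^{a-4})$; from \eqref{eq:partderp2tilde}, $(\partial_{p_1}\Psi_2)^2 = \theta^2 W_\L^2/p_2^2 + \OO_2(p_2^{-3})$ and $(\partial_{p_4}\Psi_2)^2 = \OO_2(p_2^{-6})$. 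A direct computation gives $L(F_2/p_2^3) = (p_2-p_1)F_2'(q_2-q_1)/p_2^3 + \OO_2(p_2^{-4}) = F_2'/p_2^2 + \OO_2(p_2^{-3})$, and the denominator $1+F_2/p_2^3$ is absorbed harmlessly into a further $\OO_2(p_2^{-3})$.

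Collecting, the bracket equals
$$\frac{-\alpha_2\theta + \gamma_1 T_1\theta^2 W_\L^2 + F_2'}{p_2^2} + \OO_2(p_2^{-3}),$$
and by the defining property of $F_2$, the oscillatory piece $\gamma_1 T_1\theta^2 W_\L^2 + F_2'$ equals the constant $\gamma_1 T_1\theta^2\avg{W_\L^2} = T_1\theta^2\alpha_2$. The bracket therefore reduces to $-\alpha_2\theta(1-T_1\theta)/p_2^2 + \OO_2(p_2^{-3})$, whose principal term is strictly negative since $\theta < 1/T_1$. By \lref{lem:kgrandOmega2}, taking $k$ and $R$ large enough forces $|p_2^2 \cdot \OO_2(p_2^{-3})|$ below half of $\alpha_2\theta(1-T_1\theta)$ on $\Omega_2$, yielding $LV_2 \leq -cV_2/p_2^2$; combining with the lower bound in \eqref{e:ptij} gives \eqref{e:Lfp2tilde}. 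The main obstacle is the orderly bookkeeping of this expansion, and in particular verifying that the contributions from $|\pti 2|^a$, whose derivatives involve the potentially awkward factors $|\pti 2|^{a-1}$ and $|\pti 2|^{a-2}$, remain negligible for every $a\in(0,1)$; since $a<1$ makes these exponents strictly less than those arising from the $\tfrac\theta 2 \ptipow 2 2$ part, one obtains the desired $a$-independent constants.
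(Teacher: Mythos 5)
Your proof is correct and follows the same approach as the paper's, with only minor organizational differences. You compute $LV_j/V_j$ by expanding $L\Psi_j$, the carr\'e du champ terms, and $L(F_j/p_j^3)$ separately, whereas the paper packages the first two into $Lf(\tilde p_j)$ with $f(s)=e^{|s|^a+\frac\theta2 s^2}$ and estimates $f',f''$; the arithmetic is identical. Your uniformity-in-$a$ argument ($|\pti j|^a-|p_j|^a=\OO_j(|p_j|^{a-2})\subset\OO_j(|p_j|^{-1})$ since $a<1$ and $|p_j|>1$) is a slightly sharper route than the paper's comparison $\bigl||\pti j|^a-|p_j|^a\bigr|\le|\ptipow j2-p_j^2|$, and you correctly make explicit the vanishing of the mixed carr\'e-du-champ term between $e^{\Psi_j}$ and $1+F_j/p_j^3$, which the paper leaves implicit. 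One small imprecision: the remark ``bounded by $R^{a-2}\le R^{-1}$'' glosses over the polynomial $z(p_1,p_3,p_4)$ hidden in the $\OO_2$; the precise statement is that the remainder is $\OO_2(|p_2|^{-1})$ uniformly in $a$, after which \lref{lem:kgrandOmega2} gives the $a$-independent constants. This does not affect the validity of the argument.
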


\begin{proof}By symmetry, it suffices to prove
the result for $j=2$. In this proof, we do not allow the $\OO_2$ to depend on
$a\in (0,1)$ (that is, we want the bound \eref{eq:defOOj} to hold uniformly in
$a\in (0,1)$). We start by proving \eref{e:ptij}. For large enough $R$, we
have that $|p_2| > 2$ on $\Omega_2$. Moreover,
since $\pti 2 = p_2 + \mathcal O_2(p_2^{-1})$ and $F_2(q_2-q_1)/p_2^3 =
\mathcal O_2(p_2^{-3})$,
we have by \lref{lem:kgrandOmega2}
that for large enough $k, R$, it holds on $\Omega_2$ that
\begin{equ}[eq:pti2geq1]
|\pti 2| > 1 \quad \text{and}\quad  \left|\frac{F_2(q_2-q_1)}{p_2^3}\right|<
\frac 12 ~.
\end{equ}
Moreover, since both $|\pti 2|$ and
$|p_2|$ are $>1$, \eref{eq:defp2tilde} implies, for all $a\in (0,1)$,
\begin{equs}
||\pti 2|^a - |p_2|^a|\leq |\ptipow 22 - p_2^2| = \left|2({W_{\L}+W_\C}) +
\mathcal O_2(p_2^{-1})\right|~.
\end{equs}
Since  $W_{\L}$ and $W_\C$ are bounded, it follows from \lref{lem:kgrandOmega2}
that we can bound the right-hand side by a constant, so that
we find
\begin{equ}\label{e:exptildeetc}
ce^{|p_2|^a + \frac \theta 2  p_2^2} <  e^{|\pti 2|^a + \frac \theta 2  \ptipow
22} < c' e^{|p_2|^a + \frac \theta 2  p_2^2}~,
\end{equ}
uniformly in $a$.
By this, by the definition of $V_2$, and by \eref{eq:pti2geq1}, we obtain
\eref{e:ptij}. We now prove \eref{e:Lfp2tilde}. Let $f(s) =
e^{|s|^a + \frac \theta 2  s^2}$ and note that
\begin{equs}[eq:LV2calc1]
L \big(e^{|\pti2|^a + \frac \theta 2  \ptipow22}\big) & = L f(\pti2)
= f'(\pti2) L\pti2 + f''(\pti2) \sum_{b=1,4}\gamma_b T_b
(\partial_{p_b}\pti2)^2~.
\end{equs}
By \pref{prop:p1dansF1D1}, we have on $\Omega_2$ that
\begin{equs}[eq:LV2calc2]
f'(\pti2) L\pti2 & = e^{|\pti2|^a + \frac \theta 2  \ptipow22}\left(\frac
{a|\pti2|^a}{\pti2}+\theta \pti2\right)\left(-{\alpha_2}p_2^{-3} +  \mathcal
O_2(p_2^{-4})\right)\\
&= e^{|\pti2|^a + \frac \theta 2  \ptipow22}\left(-\alpha_2\theta p_2^{-2} +
\OO_2(p_2^{-3})\right)~,
\end{equs}
where we have used that $\pti2 = p_2 + \OO_2(p_2^{-1})$, and that
$a|\pti2|^{a-1} < 1$ (since $|\pti2| > 1$), so that on $\Omega_2$, the
$\OO_2(p_2^{-3})$ obtained is indeed uniform in $a$.
Next, one can verify that uniformly in $a\in (0,1)$ and $|\pti2|> 1$,
\begin{equs}
f''(\pti2) 
&\leq f(\pti2)\left(\theta^2 \ptipow 22 + 2\theta |\pti2| + c\right)~.
\end{equs}
Moreover,
by \eref{eq:partderp2tilde} we have $ \sum_{b=1,4}\gamma_b T_b
(\partial_{p_b}\pti2)^2  = \gamma_1T_1  {W_{\L}^2}/{p_2^4} + \OO_2
(p_2^{-5})$, so that on $\Omega_2$,
\begin{equs}[eq:LV2calc3]
f''(\pti2) \sum_{b=1,4}\gamma_b T_b (\partial_{p_b}\pti2)^2 &\leq e^{|\pti2|^a
+
\frac \theta 2  \ptipow22}\left(\theta^2 \ptipow22 + 2\theta |\pti2| +
c\right)\left(\gamma_1T_1 \frac {W_{\L}^2}{p_2^4} + \OO_2
(p_2^{-5})\right)\\
& =e^{|\pti2|^a + \frac \theta 2  \ptipow22}\left( \theta^2\gamma_1T_1 \frac
{W_{\L}^2}{p_2^2} + \OO_2 (p_2^{-3})\right)~.
\end{equs}
Therefore, by \eref{eq:LV2calc1}, \eref{eq:LV2calc2} and \eref{eq:LV2calc3},
\begin{equs}
L \big(e^{|\pti2|^a + \frac \theta 2  \ptipow22}\big) & \leq
\frac 1 {p_2^2}e^{|\pti2|^a + \frac \theta 2  \ptipow22}\left(-\alpha_2\theta +
\theta^2\gamma_1T_1 W_{\L}^2 + \OO_2(p_2^{-1}) \right)~.
\end{equs}
But then
\begin{equs}
L V_2 &=  \left(1+\frac {F_2(q_2-q_1)}{p_2^3}\right)L\left(e^{|\pti 2|^a + \frac \theta 2  \ptipow
22}\right) + e^{|\pti 2|^a + \frac
\theta 2  \ptipow 22}L\left(\frac {F_2(q_2-q_1)}{p_2^3}\right)\\
& \leq \left(1+\frac {F_2(q_2-q_1)}{p_2^3}\right)\frac 1 {p_2^2}e^{|\pti2|^a + \frac \theta 2  \ptipow22}\left(-\alpha_2\theta + \theta^2\gamma_1T_1 W_{\L}^2 + \OO_2(p_2^{-1}) \right) \\
&\qquad\qquad + e^{|\pti 2|^a + \frac
\theta 2  \ptipow 22}\left(\frac {\theta^2\gamma_1T_1 \big(\avg{W_{\L}^2}_2 - W_{\L}^2\big)}{p_2^2} + \OO_2(p_2^{-3})\right)\\
& = 
\frac 1 {p_2^2}e^{|\pti2|^a + \frac \theta 2  \ptipow22}\left(-\alpha_2\theta +
\theta^2\gamma_1T_1 \avg{W_{\L}^2} + \OO_2(p_2^{-1}) \right)~.
\end{equs}
Using the definition of $\alpha_2$ in \eref{eq:defalphas} and the
condition on $\theta$, we find that $-\alpha_2\theta +
\theta^2\gamma_1T_1 \avg{W_{\L}^2}$ is negative. Using then
\lref{lem:kgrandOmega2}
to make the $ \OO_2(p_2^{-1})$ very small, and combining the result with
\eref{e:exptildeetc} completes the proof.
\myqed\end{proof}

\section{When both central rotors are fast}\label{sec:bothfast}

We now study the regime where {\em both} $|p_2|$ and $|p_3|$ are large (not necessarily
of the same order of magnitude), and
$|p_1|$ and $|p_4|$ are much smaller. We then have two fast variables: $q_2$
and $q_3$. As we will see, this will lead to some trouble related to
resonances, and averaging the rapid oscillations will not be enough.
We start with some formal computations thinking in terms of 
powers of $p_2$ and $p_3$, and then restrict ourselves to
the set $\Omega_c(\ell, m, R)$ for some appropriate parameters.

\mysubsection{Averaging with two fast variables: resonances}\label{ss:averagingtwofast}

Now the fast-slow decomposition is as follows: $q_1, q_4$
and $p$ are the slow variables, and $q_2,q_3$ are the
fast variables, with the approximate dynamics (for short times)
\begin{equs}[eq:fastslowtwofast]
p(t) & \approx p(0)~,\\
q_i(t) &\approx q_i(0)~, \qquad\qquad i=1,4~,	\\
q_2(t) &\approx  q_2(0) + p_2(0)t \pmod {2\pi}~, \\
q_3(t) &\approx  q_3(0) + p_3(0)t \pmod {2\pi}~,
\end{equs}
generated by  $L_2 + L_3 = p_2\partial_{q_2}+p_3\partial_{q_3}$,
which we see as the most important contribution in $L$.
Let again $f,g:\Omega \to \real$ and assume that
$$
Lf = g~.
$$
We would like, as above, to add a correction to $f$
in the left-hand side in order to replace $g$ with its
average in the right-hand side.
However, since the fast motion of $(q_2,q_3)$ on $\torus^2$
(in the dynamics \eref{eq:fastslowtwofast}) follows orbits that
are open or closed depending on whether $p_2$
and $p_3$ are commensurable or not,
there seems to be no natural notion of ``average of $g$'' that
is continuous with respect to the slow variables.
 
Consider for example $g(x) = \sin(2q_2-q_3)$. In our approximation, 
$\sin(2q_2(t)-q_3(t))$ oscillates with frequency
$(2p_2-p_3)/2\pi $. The average is zero when $p_3 \neq 2p_2$,
and $\sin(2q_2(t)-q_3(t))$ remains constant when $p_3 = 2p_2$.
When $p_3$ is close to $2p_2$, the oscillations are slow,
and one cannot simply average $\sin(2q_2(t)-q_3(t))$.
More generally, any smooth function $g$ on $\Omega$ can be written as
$\sum_{n, m\in \mathbb Z}a_{n,m}\sin(nq_2 + m q_3 + \phi_{n,m})$ 
for some coefficients $a_{n,m}$ and $\phi_{n,m}$
which depend on the slow variables $q_1, q_4$ and $p$.
Each such term
gives rise to problems close to the line $p_3/p_2 = - n/m$
in the $p_2p_3$-plane.

However, if
$g$ depends on $q_2$ but not on $q_3$, then no problem appears. In the
approximation \eref{eq:fastslowtwofast}, the quantity
$g(x(t))$ then oscillates rapidly around $\avg{g}_2$, which is
then a function of the slow variables $q_1, q_4$ and $p$. Then,
as in \sref{ssect:avgonevar},
we use $G = \int(g -\avg{g}_2)\dd q_2$ (we choose the integration
constant independent of $q_3$), so that
$(L_2 + L_3)(G/p_2) = L_2(G/p_2) =  g -\avg{g}_2$. Thus,
$L(f- G/p_2) = \avg{g}_2 + (L-L_2-L_3)(f- G/p_2)$, which has the
desired form. Similarly,
if $g$ depends on $q_3$ but not on $q_2$, we use the counterterm
$G/p_3$ with $G = \int (g -\avg{g}_3)\dd q_3$.
And of course, if $g$ can be decomposed as the sum of a function
not involving $q_3$ and a function not involving $q_2$, then
we can average each part separately and sum the two counterterms.

It turns out that we will mostly encounter terms that depend only on one of the
fast variables, and are therefore easy to average. We will go as far as
possible averaging such terms, and then introduce a method
to deal with the resonant terms (involving both $q_2$ and $q_3$) that appear.

\mysubsection{Application to the central energy}

As a starting point, we use the central energy
$$
H_c = \frac{p_2^2}{ 2} + \frac{p_3^2}{ 2} + W_{\L} + W_{\C}+ W_{\R}~.
$$

\begin{definition}\label{def:definition OOc} Let $A_* \equiv
\{x\in \Omega: p_2 \neq 0, p_3\neq 0\}$ and let $f, g$ be two functions
defined on a  set $A\subset A_*$. We say that $f$ is
$\OO_c(g)$ (on the set $A$) if there is a polynomial $z$ such that for all
$x\in
A$ with $\min(|p_2|, |p_3|)$ large enough, we have
  \begin{equ}
    |f(x)| \leq z(p_1,p_4) |g(x)|~.
  \end{equ}
Unless explicitly stated otherwise, we take $A = A_*$. 
\end{definition}

We state the main result of this section.

\begin{proposition}\label{prop:technicalHb} There is a function of the form
\begin{equ}[eq:deftHc]
\tH_c = H_c + {\frac {p_1W_{\L}}{p_2}} +{\frac {p_4W_{\R}}{p_3}} +
\OO_c(|p_2|^{-2}+|p_3|^{-2})~,
\end{equ}
such that
\begin{equ}
\label{e:Lhbn23}
L\tH_c =-{\frac {\alpha_2 }{  p_2^{2}}}-{\frac {\alpha_3}{p_3^{2}}}+
\OO_c(|p_2|^{-5/2}+|p_3|^{-5/2})~,
\end{equ}
with $\alpha_j$ as defined in \eref{eq:defalphas}.
Furthermore,
\begin{equ}[e:Hctildep1p4derivatives]
\partial_{p_1}\tH_c = \frac {W_{\L}}{p_2}+\OO_c(p_2^{-2})~, \qquad
\partial_{p_4}\tH_c = \frac {W_{\R}}{p_3}+\OO_c(p_3^{-2})~.
\end{equ}
\end{proposition}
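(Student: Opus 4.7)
I would construct $\tH_c$ by iterative averaging, in the spirit of \pref{prop:p1dansF1D1} but now with two fast variables $q_2$ and $q_3$. A direct computation gives $LH_c = -p_1 w_\L - p_4 w_\R$: in the Hamiltonian, the drift $\sum p_i\partial_{q_i}$ and the interaction part produce several terms that telescope away, leaving only the exchange with the outer rotors. Since $w_\L$ depends only on $q_2$ among the fast variables (and $w_\R$ only on $q_3$), both terms are of the ``easy'' type discussed in \qref{ss:averagingtwofast}, and adding the first-level correction $p_1 W_\L/p_2 + p_4 W_\R/p_3$ cancels them. Iterating, the resulting $O(p_2^{-1})$ contributions---$-p_1^2 w_\L/p_2$, $-\gamma_1 p_1 W_\L/p_2$, and $w_\L W_\L/p_2$---all have zero $q_2$-average (the last being a total derivative) and can be removed by a second-level counterterm $(p_1^2 W_\L + \gamma_1 p_1 \Wl1 - W_\L^2/2)/p_2^2$, together with its $p_3$-symmetric analogue. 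Applying $L$ to the piece $\gamma_1 p_1 \Wl1/p_2^2$ yields $\gamma_1 w_\L \Wl1/p_2^2$, whose $q_2$-average is $-\gamma_1\langle W_\L^2\rangle/p_2^2 = -\alpha_2/p_2^2$ (by integration by parts, using that $\Wl1$ is the zero-mean antiderivative of $W_\L$); the symmetric computation yields $-\alpha_3/p_3^2$. These are the desired main contributions. The remaining $O(p_2^{-2})$ terms are either zero-$q_2$-mean (and are killed by further $O(p_2^{-3})$ counterterms) or ``resonant.''

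The delicate step is the resonant contribution $-w_\C p_1 W_\L/p_2^2$ already produced by $L(p_1 W_\L/p_2)$: its $q_2$-average $-p_1 K(q_3-q_1)/p_2^2$, where $K(s) := \langle w_\C(\argcdot) W_\L(\argcdot)\rangle_2(s)$, is a nonzero function of $q_1$ and $q_3$ that no $L_2$-averaging can kill. Since $\langle w_\C\rangle = \langle W_\L\rangle = 0$, a direct check shows $\langle K\rangle = 0$, so $K$ admits a zero-mean antiderivative $\tilde K$; the counterterm $p_1\tilde K(q_3-q_1)/(p_2^2 p_3)$ then satisfies $L_3(p_1\tilde K/(p_2^2 p_3)) = p_1 K/p_2^2$ and cancels the resonance. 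A term-by-term expansion shows that $(L-L_3)$ applied to this counterterm produces only expressions of the form $z(p_1,p_4)/(p_2^a p_3^b)$ with $a+b\ge 3$ and $a\le 2$. Invoking the $\Omega_c$-inequality $|p_3|\ge|p_2|^{1/\ell}$, each such expression is $\OO_c(|p_2|^{-a-b/\ell})$; the worst case $(a,b)=(2,1)$ at the choice $\ell=2$ yields $\OO_c(|p_2|^{-5/2})$, matching the target error. A symmetric construction handles the resonance $-w_\C p_4 W_\R/p_3^2$. The derivative identities \eref{e:Hctildep1p4derivatives} then follow by inspection, since among the corrections only $p_1 W_\L/p_2$ contributes to $\partial_{p_1}\tH_c$ at leading order (the rest being $\OO_c(p_2^{-2})$ times at most a polynomial in $p_1,p_4$), and symmetrically for $\partial_{p_4}\tH_c$.

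The main obstacle is exactly this resonant step: standard averaging by $L_2$ fails on the $q_2$-independent quantity $p_1 K(q_3-q_1)/p_2^2$, and the remedy---using $L_3$ as a second fast operator at the price of an extra factor $1/p_3$---is affordable only because the region $\Omega_c$ is engineered so that $|p_3|\gtrsim|p_2|^{1/\ell}$. The specific choice $\ell=2$, permitted by the condition $k<\ell<m$, is what ties the exponent $5/2$ in the statement to this geometric parameter of the decomposition.
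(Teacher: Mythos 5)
Your proposal correctly reproduces the straightforward part of the proof — the computation of $LH_c = -p_1 w_\L - p_4 w_\R$, the first two rounds of counterterms, the identification of $-\alpha_j/p_j^2$ via $\gamma_1\avg{w_\L W_\L^{[1]}}_2 = -\alpha_2$, and the isolation of the resonant terms $-p_1 W_\L w_\C/p_2^2$ and $+p_4 W_\R w_\C/p_3^2$. The derivative identities \eref{e:Hctildep1p4derivatives} are also argued correctly. But your treatment of the resonances has a genuine gap, and it is precisely at the point the paper labels as the main difficulty.

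You propose to handle $-p_1 W_\L w_\C/p_2^2$ by first $q_2$-averaging it to $-p_1 K(q_3-q_1)/p_2^2$, and then $q_3$-averaging that residue with the counterterm $p_1 \tilde K(q_3-q_1)/(p_2^2 p_3)$. The second step is fine, but you have silently discarded the first: to pass from $-p_1 W_\L w_\C/p_2^2$ to its $q_2$-average you must subtract a counterterm $\bar G/p_2^3$, where $\bar G$ is a zero-mean $q_2$-antiderivative of $p_1(W_\L w_\C - K)$. Since $W_\L w_\C - K$ genuinely depends on $q_3$, so does $\bar G$, and applying $L_3 = p_3\partial_{q_3}$ to $\bar G/p_2^3$ produces a term of order $p_3/p_2^3$ — which is \emph{not} $\OO_c(|p_2|^{-5/2}+|p_3|^{-5/2})$ (on $\Omega_c$ it can even grow like $|p_2|^{\ell-3}$). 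This is exactly the small-divisor obstruction the paper highlights in \sref{ss:averagingtwofast}: iterated single-variable averaging cannot treat terms that depend on both fast variables, because the inverse of $L_2+L_3$ on such terms involves denominators $(n-m)p_2 + mp_3$ that vanish on the resonant lines. Grouping $W_\L w_\C - K$ under your category ``zero-$q_2$-mean, killed by $\OO(p_2^{-3})$ counterterms'' is the error — that statement is only valid for functions independent of $q_3$.

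There is also a quantitative inconsistency even in the part you do bound. Your worst-case estimate $1/(p_2^2 p_3) = \OO_c(|p_2|^{-2-1/\ell})$ only reaches $|p_2|^{-5/2}$ when $\ell = 2$; but the global construction in \sref{s:constrlyapunov} requires $2/\ell < a < 1$, so $\ell \geq 3$, which makes $|p_2|^{-2-1/\ell}$ strictly larger than $|p_2|^{-5/2}$. Moreover \pref{prop:technicalHb} is stated as an $\OO_c$ bound on $A_*$, not merely on $\Omega_c(\ell,m,R)$, so invoking $|p_3| \geq |p_2|^{1/\ell}$ is not available at that level of generality. The paper's actual remedy is structurally different: it splits $A_*$ into a near-diagonal part $\OA\cup\OB$, where $|p_3-p_2|$ is small compared with $\sqrt{|p_2|}$ and direct averaging of the fast $W_\L$, $W_\R$ factors works (\lref{lem:Hbloinresonances}), and a far-from-diagonal part $\OB\cup\OC$, where it constructs a right inverse $\barR$ of the \emph{decoupled} generator $\barL$ using the exponential ergodicity of the thermalized outer rotors (\pref{prop:rbarref}, \pref{prop:changevarours}, \lref{lem:Hbloindiagonale}), then patches the two counterterms with a cutoff $\rho$. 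That probabilistic inversion is not a cosmetic substitute for iterated averaging — it is what makes the argument survive on and near the resonant rays, where your construction breaks down.
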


In order to reduce the length of some symmetric formulae,
we use the notation ``$+ \lr$'' as a shorthand for the other half
of the terms with the indices exchanged as follows: $1\lr 4$, $2\lr 3$,
$\L \lr \R$, and the sign of $w_\C$ changed (due to the 
asymmetry of the argument $q_3-q_2$ of $W_\C$).

In order to prepare the proof of \pref{prop:technicalHb}, we proceed
as follows. We first see that
\begin{equ}
L(H_c) = -p_1 w_{\L} - p_4 w_{\R}~.
\end{equ}
Since $w_\L$ does not involve $q_3$ and $w_\R$ does not involve $q_2$, it 
easy to find appropriate counterterms: we introduce
\begin{equa}[e:HC1]
\Hc1 &= H_c + \frac{p_1 W_{\L}}{ p_2} +  \frac{p_4 W_{\R}}{p_3}~,
\end{equa}
and obtain 
\begin{equ}
L\Hc1 = 
{\frac {-\gamma_1 p_1W_{\L}  
-  p_1^{2}w_{\L} 
+w_{\L}  W_{\L}  }{p_2}}+\frac {p_1 W_{\L}(w_\L -w_\C )}{p_2^{2}} 
~+ \lr.
\end{equ}
The terms of order $1/p_2$ do not depend on $q_3$
and have mean zero with respect to $q_2$
(again $w_\L W_\L =\partial_{q_2}W_\L^2/2$ has zero $q_2$-average by
periodicity). Similarly, the terms in $1/p_3$ do not involve $q_2$
and average to zero with respect to $q_3$.
Therefore, we introduce a next round of counterterms:
\begin{equ}
\Hc2 = \Hc1 + \left(\frac{\gamma_1p_1\Wl1
+p_1^2 W_{\L}-W_{\L}^{2}/2}{p_2^{2}}~ + \lr\right)~,
\end{equ}
and obtain 
\begin{equs}[eq:Lhc2] 
L\Hc2 &= \left(\frac{-p_1^3w_\L-3\gamma_1 p_1^2W_\L-\gamma_1^2p_1W_L^{[1]}+ 4p_1
W_\L w_\L+2\gamma_1 T_1 W_\L}{p_2^2} ~+ \lr  \right)\\
&  \qquad + \frac{ \gamma_1 w_\L W_\L^{[1]} }{p_2^2}+ \frac{ \gamma_4 w_\R 
W_\R^{[1]} }{p_3^2} -{\frac {p_1W_\L w_\C    }{p_2^{2}}}+{\frac {
p_4W_\R  w_\C}{p_3^{2}}}+ \mathcal \OO_c(|p_2|^{-3} +|p_3|^{-3})~.
\end{equs}
The terms in the first line are easy to eliminate, since each one depends on
only one of the fast variables and averages to zero. The terms $\gamma_1 w_\L
W_\L^{[1]}/p_2^2$ and $\gamma_4 w_\R  W_\R^{[1]}/p_3^2$ are the ones we are
looking for, since after integrating
by parts, we find $\big\langle \gamma_1 w_\L  W_\L^{[1]}\big\rangle_2 =
-\gamma_1\big\langle W_\L^2\big\rangle_2 =
-\alpha_2$ and $\big\langle \gamma_4 w_\R  W_\R^{[1]}\big\rangle_3 =
-\gamma_4\big\langle W_\R^2\big\rangle_3 = -\alpha_3$.
The two ``resonant'' terms
involving $W_\L w_\C$ and $W_\R  w_\C$ are more problematic and we leave them
untouched for now. By introducing the appropriate counterterms (which we
do not write explicitly), we obtain a function $
\Hc3 = \Hc2 + \OO_c(|p_2|^{-3}+|p_3|^{-3})
$ such that
\begin{equ}[e:round3]
L\Hc3 =-{\frac {\alpha_2}{p_2^{2}}}-{\frac {
\alpha_3}{p_3^{2}}} -{\frac {p_1W_\L w_\C    }{p_2^{2}}}+{\frac {
p_4W_\R  w_\C}{p_3^{2}}}+  \OO_c(|p_2|^{-3}+|p_3|^{-3})~.
\end{equ}

In order to obtain \eref{e:Lhbn23}, we must get rid of the 
two ``mixed'' terms involving $W_\L w_\C$ and $W_\R w_\C$, which
are of the same order as the dissipative contributions involving
$\alpha_2$ and $\alpha_3$.
Since they each depend on both $q_2$ and $q_3$, these terms are not easy to get
rid of, due to the resonance phenomenon discussed above. 
In fact, as discussed in Appendix~\ref{s:resonances},
these resonances have a physical meaning. 
Their effect becomes clearly visible when $T_1 = T_4=0$ (which is not covered by
our assumptions): they alter the dynamics
in the $p_2p_3$-plane, but do not prevent $H_c$ from decreasing in average.
We postpone to \sref{ss:decoupleddyn} the construction of the counterterms
that will eliminate these resonant terms. 

We introduce next two technical lemmata and an application of
\pref{prop:technicalHb}. The following lemma
is analogous to \lref{lem:kgrandOmega2}.
\begin{lemma}\label{lem:mgrandOmegac}Let $j\in \{2,3\}$ and $\pworkpower>0$.
Fix an integer $\ell > 0$, and an $\varepsilon > 0$. Let $f$ be some
$\OO_c(|p_j|^{-\pworkpower})$ on the
set $A_* = \{x\in \Omega: p_2 \neq 0, p_3\neq 0\}$. Then, for all
sufficiently large $m$ and $R$, we have
$$
 \sup_{x\in \Omega_c(\ell,m, R)} |f(x)| < \varepsilon~.
$$
\end{lemma}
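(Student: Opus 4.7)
The plan is to mimic the proof of \lref{lem:kgrandOmega2}, substituting the defining inequalities of $\Omega_c(\ell, m, R)$ for those of $\Omega_j(k, R)$. The essential ingredient is the built-in constraint $p_2^2 + p_3^2 > (p_1^2 + p_4^2)^m$, which lets large values of $m$ convert the polynomial $p_1,p_4$-dependence of the remainder into a harmless small power of $|p_j|$.

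First I would unfold \dref{def:definition OOc}: there are a polynomial $z$ and a constant $M > 0$ such that $|f(x)| \leq z(p_1, p_4)\, |p_j|^{-\pworkpower}$ on $A_*$ as soon as $\min(|p_2|, |p_3|) > M$. By \eref{eq:pjdivergeomegac}, this threshold is eventually met on $\Omega_c(\ell, m, R)$ for every sufficiently large $R$ (independently of $m$). I would then bound $z(p_1, p_4) \leq c\,(1 + (p_1^2 + p_4^2)^N)$ for some integer $N$, and use the defining inequality of $\Omega_c$ to get $(p_1^2 + p_4^2)^N \leq (p_2^2 + p_3^2)^{N/m}$.

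Next I exploit the constraints $p_2^{2\ell} > p_3^2$ and $p_3^{2\ell} > p_2^2$ (together with $|p_j| \geq 1$, which holds for large $R$) to conclude $p_2^2 + p_3^2 \leq 2\,|p_j|^{2\ell}$ for $j=2,3$. Combining these estimates yields, on $\Omega_c(\ell, m, R)$ for $R$ large,
$$
|f(x)| \leq c'\, |p_j|^{2N\ell/m - \pworkpower}.
$$
Choosing $m$ large enough so that $2N\ell/m < \pworkpower/2$ makes the exponent smaller than $-\pworkpower/2$, hence $|f(x)| \leq c'\, |p_j|^{-\pworkpower/2}$; a final enlargement of $R$, using once more \eref{eq:pjdivergeomegac}, pushes the right-hand side below $\varepsilon$.

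I do not anticipate any serious obstacle, since $\ell$ is kept fixed while both $m$ and $R$ are free to grow, consistently with \rref{rem:increasingOmega}. The only minor point of care is to perform the choices of parameters in the order $N$, then $m$, then $R$, so that no circular dependency appears between the constants.
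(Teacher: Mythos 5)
Your proposal is correct and follows essentially the same route as the paper's own proof: unfold the $\OO_c$ bound, dominate the polynomial $z(p_1,p_4)$ by $(p_1^2+p_4^2)^N$, convert that via the $\Omega_c$ constraints into a small power of $|p_j|$, and then take $m$ and $R$ large. The only cosmetic difference is that you state the intermediate bound $p_2^2+p_3^2 \leq 2|p_j|^{2\ell}$ symmetrically for $j=2,3$, while the paper writes out the $j=2$ chain of inequalities and invokes symmetry.
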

\begin{proof}We prove the result for $f = \OO_c(|p_2|^{-\pworkpowerd})$ and
proceed
as in \lref{lem:kgrandOmega2}.
By \dref{def:definition OOc} and \eref{eq:pjdivergeomegac}, there is a
polynomial
$z$ such that for all $m$ and all sufficiently large $R$, we have on
$\Omega_c(\ell, m, R)$,
\begin{equs}
|f| &\leq \frac {z(p_1, p_4)}{|p_2|^{\pworkpowerd}} \leq
\frac{c+c(p_1^2+p_4^2)^N}{|p_2|^\pworkpowerd} \leq \frac{c+c(p_2^2
+p_3^2)^{\frac {N}m}}{|p_2|^\pworkpower}\\
& \leq \frac{c+(p_2^2 +p_2^{2\ell})^{\frac {N}m}}{|p_2|^\pworkpower}\leq
\frac{c+c|p_2|^{\frac {2\ell N}m}}{|p_2|^\pworkpower}\leq c|p_2|^{\frac {2\ell
N}m-r}~,
\end{equs}
where we choose $N$ large enough and use
the definition of $\Omega_c$. By \eref{eq:pjdivergeomegac},
we conclude that the desired result holds
for $m$ large enough so that $\frac {2\ell N}m-r < -\frac r2$.
\myqed\end{proof}

\begin{lemma}\label{lem:repartitionpowers} Let  $ f = \mathcal
O_c(p_2^{z_1}p_3^{z_2})$ for some $z_1, z_2 \in  \mathbb  R$ such that $z_1,
z_2$ have the same sign.
Then,
 $f =
\OO_c(|p_2|^{z_1+z_2} + |p_3|^{z_1+z_2})$.
\end{lemma}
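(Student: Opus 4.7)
The plan is to unwind \dref{def:definition OOc} and do a case split on the common sign of $z_1, z_2$. By hypothesis there is a polynomial $z(p_1,p_4)$ and a threshold $R_0$ such that $|f(x)| \le z(p_1,p_4)\,|p_2|^{z_1}|p_3|^{z_2}$ whenever $\min(|p_2|,|p_3|) \ge R_0$. By possibly enlarging $R_0$ I may assume $\min(|p_2|,|p_3|) \ge 1$, which is what lets the monotonicity $a \mapsto a^\alpha$ be applied without having to track sign changes. The polynomial prefactor $z(p_1,p_4)$ will be carried through untouched; only the factor $|p_2|^{z_1}|p_3|^{z_2}$ needs to be bounded by a sum of the required shape.

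In the case $z_1, z_2 \ge 0$, I would assume without loss of generality that $|p_2| \le |p_3|$, so that $|p_2|^{z_1}|p_3|^{z_2} \le |p_3|^{z_1}|p_3|^{z_2} = |p_3|^{z_1+z_2}$; the symmetric subcase yields $|p_2|^{z_1+z_2}$. In the case $z_1, z_2 \le 0$, the useful inequality reverses and is applied to the larger momentum: assuming again $|p_2| \le |p_3|$, and using that both are at least $1$, one has $|p_3|^{z_2} \le |p_2|^{z_2}$, so $|p_2|^{z_1}|p_3|^{z_2} \le |p_2|^{z_1+z_2}$, and symmetrically in the other subcase. In either scenario the product is dominated by $|p_2|^{z_1+z_2} + |p_3|^{z_1+z_2}$, which, after reinstating the polynomial $z(p_1,p_4)$, gives exactly the $\OO_c$-bound required in the conclusion.

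There is no real obstacle here: this is a bookkeeping lemma whose purpose is to collapse mixed bounds of the form $\OO_c(p_2^{a}p_3^{b})$ with $a,b$ of common sign into single-variable remainders, which is convenient when combining estimates coming out of \pref{prop:technicalHb} with those of \lref{lem:mgrandOmegac}. The only thing to be mindful of is enlarging the lower threshold on $\min(|p_2|,|p_3|)$ so that the power functions behave monotonically; this costs nothing since the $\OO_c$-notation is by definition only constrained in the large-momentum regime.
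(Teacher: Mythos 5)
Your proof is correct, but it takes a genuinely different route from the paper. The paper proves the inequality $|p_2|^{z_1}|p_3|^{z_2} \le |p_2|^{z_1+z_2}+|p_3|^{z_1+z_2}$ in one stroke via Young's inequality $xy\le x^a+y^b$ with the conjugate exponents $a=(z_1+z_2)/z_1$, $b=(z_1+z_2)/z_2$ and $x=|p_2|^{z_1}$, $y=|p_3|^{z_2}$; the same-sign hypothesis on $z_1,z_2$ is exactly what guarantees $a,b>1$ with $1/a+1/b=1$, and this works uniformly for both the all-nonnegative and all-nonpositive cases. You instead split on the common sign and on which of $|p_2|,|p_3|$ is larger, using monotonicity of $t\mapsto t^\alpha$, which is more elementary and arguably more transparent, though it costs a four-way case analysis where the paper has none. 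One small remark: your insistence on enlarging $R_0$ so that $\min(|p_2|,|p_3|)\ge 1$ is unnecessary — the monotonicity of $t\mapsto t^\alpha$ (increasing for $\alpha\ge 0$, decreasing for $\alpha\le 0$) holds on all of $(0,\infty)$, and the $\OO_c$ notation already restricts to $p_2,p_3\ne 0$, so nothing goes wrong at small momenta; the extra condition is harmless but does no work.
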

\begin{proof}
We apply Young's inequality in the form $xy \leq x^a + y^b$ with
$a = \frac
{z_1+z_2}{z_1} > 1$, $b = \frac{z_1+z_2}{z_2}>1$,
and  $x = |p_2|^{z_1}$, $y = |p_3|^{z_2}$. We obtain
$|p_2|^{z_1}|p_3|^{z_2} \leq |p_2|^{z_1+z_2}+|p_3|^{z_1+z_2}$. 
This, and the definition of $\OO_c$, complete the proof. 
\myqed\end{proof}

As a consequence of \pref{prop:technicalHb} we have:

\begin{proposition}\label{cor:borneOmegac}
Let $0<\theta < \min(1/T_1, 1/T_4)$ and define
$$
V_c=\tH_ce^{ \theta  \tH_c}\left(1+ \frac{F_2(q_2-q_1)}{p_2^3}+
\frac{F_3(q_3-q_4)}{p_3^3}
\right)~,
$$
with the $\tH_c$ of \pref{prop:technicalHb} and $F_2$, $F_3$ as in
\pref{c:Lelambdap2tilde}. Let $\ell>1$ be a fixed integer.
Then, there are constants
$\CstOmegaca,\CstOmegacb, \CstOmegacd>0$ such that
for all large enough $m$ and $R$, the following inequalities
hold on $\Omega _c(m, \ell,R)$:
\begin{equ}\label{e:bornesVcOmegac}
\CstOmegaca (p_2^2 + p_3^2)e^{\frac \theta 2 (p_2^2 + p_3^2)}< V_c< \CstOmegacb
(p_2^2 + p_3^2)e^{\frac \theta 2 (p_2^2 + p_3^2)}~,
\end{equ}
\begin{equ}[e:bornesLVcOmegac]
L  V_c \leq -\CstOmegacd e^{\frac \theta 2 (p_2^2 + p_3^2)}~.
\end{equ}
\end{proposition}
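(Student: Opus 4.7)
The argument mirrors that of \pref{c:Lelambdap2tilde}, replacing $\pti j$ by $\tH_c$ and exploiting \pref{prop:technicalHb} for the dissipation estimate. The role of the factors $F_2(q_2-q_1)/p_2^3$ and $F_3(q_3-q_4)/p_3^3$ in the definition of $V_c$ is, exactly as in \pref{c:Lelambdap2tilde}, to convert the unaveraged terms $W_{\L}^2$ and $W_{\R}^2$ that come out of the diffusion part of $L$ acting on $e^{\theta \tH_c}$ into their $q$-averages, which ensures that the condition $\theta < \min(1/T_1,1/T_4)$ alone suffices to produce a strictly negative leading coefficient.

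\textbf{Bounds \eref{e:bornesVcOmegac}.} From \eref{eq:deftHc}, $\tH_c = H_c + \OO_c(|p_2|^{-1}+|p_3|^{-1})$, and $H_c - (p_2^2+p_3^2)/2$ is bounded in absolute value by the sup norms of the potentials. By \lref{lem:mgrandOmegac}, the $\OO_c$ remainder is arbitrarily small on $\Omega_c(\ell,m,R)$ for $m,R$ large, so $|\tH_c - (p_2^2+p_3^2)/2|$ is uniformly bounded on $\Omega_c$. This immediately yields constants $c,c'>0$ with $c\,(p_2^2+p_3^2)e^{\frac\theta 2(p_2^2+p_3^2)} \le \tH_c e^{\theta \tH_c} \le c'\,(p_2^2+p_3^2)e^{\frac\theta 2(p_2^2+p_3^2)}$, while the correction factor $1+F_2/p_2^3+F_3/p_3^3$ lies in $[1/2,3/2]$ on $\Omega_c$ by \lref{lem:mgrandOmegac}. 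Combining yields \eref{e:bornesVcOmegac}.

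\textbf{Computing $LV_c$.} Write $V_c = f(\tH_c)\,G$ with $f(s)=se^{\theta s}$ and $G = 1+F_2/p_2^3+F_3/p_3^3$. Since $G$ is independent of $p_1,p_4$ while the diffusion part of $L$ acts only on $p_1,p_4$, no cross term appears and
\[
LV_c = G\, L f(\tH_c) + f(\tH_c)\, LG, \quad Lf(\tH_c) = f'(\tH_c) L\tH_c + \sum_{b=1,4}\gamma_b T_b f''(\tH_c)(\partial_{p_b}\tH_c)^2.
\]
Substituting \pref{prop:technicalHb} together with $f'(s)=(1+\theta s)e^{\theta s}$ and $f''(s)=\theta(2+\theta s)e^{\theta s}$, the leading contribution, after factoring out $\tH_c e^{\theta\tH_c}$, is
\[
 \frac{Lf(\tH_c)}{\tH_c e^{\theta\tH_c}} = \theta\left(-\frac{\alpha_2}{p_2^2}-\frac{\alpha_3}{p_3^2}\right) + \theta^2\left(\frac{\gamma_1 T_1 W_\L^2}{p_2^2} + \frac{\gamma_4 T_4 W_\R^2}{p_3^2}\right) + r_1,
\]
where $r_1$ collects $\OO_c$ remainders. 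A direct computation using $LF_j(q_j - q_{j'}) = p_j F_j' - p_{j'} F_j'$ and the choice of $F_2,F_3$ gives
\[
LG = \frac{\theta^2\gamma_1 T_1(\avg{W_{\L}^2}-W_{\L}^2)}{p_2^2}+\frac{\theta^2\gamma_4 T_4(\avg{W_{\R}^2}-W_{\R}^2)}{p_3^2} + \OO_c(|p_2|^{-3}+|p_3|^{-3}).
\]
The non-averaged squared potentials cancel exactly between $G\,Lf(\tH_c)$ and $f(\tH_c)\,LG$ (since $G\to 1$ and $f(\tH_c)/(\tH_c e^{\theta\tH_c})=1$), leaving
\[
 \frac{LV_c}{\tH_c e^{\theta\tH_c}} \le -\frac{\theta(\alpha_2 - \theta T_1\alpha_2)}{p_2^2} -\frac{\theta(\alpha_3 - \theta T_4\alpha_3)}{p_3^2}+ r_2,
\]
using $\alpha_2=\gamma_1\avg{W_\L^2}$, $\alpha_3=\gamma_4\avg{W_\R^2}$. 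The prefactors are strictly positive by $\theta<\min(1/T_1,1/T_4)$, so for $m,R$ large (absorbing $r_2$ via \lref{lem:mgrandOmegac}),
\[
LV_c \le -c\,\tH_c e^{\theta \tH_c}\bigl(1/p_2^2+1/p_3^2\bigr) \le -c'\,e^{\theta\tH_c} \le -\CstOmegacd\,e^{\frac\theta 2(p_2^2+p_3^2)},
\]
where the second inequality uses $\tH_c/p_j^2 \ge 1/4$ on $\Omega_c$ for large $R$ (since $\tH_c \ge (p_2^2+p_3^2)/2 - C$ dominates each $p_j^2/2$).

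\textbf{Main obstacle.} The delicate bookkeeping is to show that every $\OO_c$ remainder, in particular those generated by $\theta^2\tH_c (\partial_{p_b}\tH_c)^2 = \theta^2\tH_c(W_\L^2/p_2^2 + \OO_c(|p_2|^{-3})) + \cdots$ where $\tH_c$ itself grows like $(p_2^2+p_3^2)/2$, is dominated by $\varepsilon(1/p_2^2+1/p_3^2)$ on $\Omega_c(\ell,m,R)$. For this one invokes the constraints $p_3^{2\ell}>p_2^2$, $p_2^{2\ell}>p_3^2$ and $p_2^2+p_3^2>(p_1^2+p_4^2)^m$, which force $|p_j|\gtrsim (p_2^2+p_3^2)^{1/(2\ell)}$ for $j=2,3$ while $z(p_1,p_4) \lesssim (p_2^2+p_3^2)^{N/m}$ for any polynomial $z$; choosing $m$ sufficiently large compared to $\ell$ and $N$ and applying \lref{lem:mgrandOmegac} makes all such products negligible, completing the proof.
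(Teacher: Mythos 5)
Your proof is correct and follows essentially the same route as the paper: you decompose $LV_c$ into $G\cdot Lf(\tH_c) + f(\tH_c)\cdot LG$ with $f(s)=se^{\theta s}$, invoke \pref{prop:technicalHb} for $L\tH_c$ and the $p_1,p_4$-derivative bounds, use the $F_2,F_3$ correction terms to swap $W_\L^2,W_\R^2$ for their averages so that the sign condition reduces to $\theta<\min(1/T_1,1/T_4)$, and then absorb remainders via \lref{lem:mgrandOmegac} and \lref{lem:repartitionpowers}. The only cosmetic difference is at the end: the paper factors out $(p_2^2+p_3^2)/2$ via the identity $\tH_c+C = \tfrac{p_2^2+p_3^2}{2}(1+\OO_c(p_2^{-2}+p_3^{-2}))$ and concludes with $(p_2^2+p_3^2)(p_2^{-2}+p_3^{-2})>2$, whereas you divide by $\tH_c e^{\theta\tH_c}$ directly and close with $\tH_c/p_j^2\geq 1/4$; both are equivalent.
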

\begin{proof}
We first prove \eref{e:bornesVcOmegac}. By \eref{eq:deftHc}, the boundedness of
the potentials, and \lref{lem:mgrandOmegac},
we have for $m, R$ large enough that on $\Omega_c$,
\begin{equ}[e:borneHctildeHc]
\left|\tH_c - \frac {p_2^2}2 - \frac {p_3^2}2\right|< c \qquad \text{and}\qquad
\left|\frac{F_2(q_2-q_1)}{p_2^3}+ \frac{F_3(q_3-q_4)}{p_3^3}\right|< \frac 12~.
\end{equ}
In addition, if $m, R$ are large enough, $p_2^2 + p_3^2$
is large on $\Omega_c$, so that the first part of \eref{e:borneHctildeHc} implies that
$c(p_2^2 + p_3^2)<\tH_c <c'(p_2^2 + p_3^2)$. This and \eref{e:borneHctildeHc}
imply \eref{e:bornesVcOmegac}.

We next prove \eref{e:bornesLVcOmegac}. Define $f(s) = se^{\theta s}$.
By \pref{prop:technicalHb},
\begin{equs}[eq:LHcethetaHcfirst]
L \big(\tH_c e^{ \theta  \tH_c}\big)  & = L f(\tH_c)
= f'(\tH_c) L \tH_c + f''(\tH_c) \sum_{b=1,4}\gamma_b T_b
(\partial_{p_b}\tH_c)^2 \\
& = e^{ \theta  \tH_c} \left(\theta \tH_c + 1\right)\left(-{\frac
{\alpha_2 }{ p_2^{2}}}-{\frac {\alpha_3}{p_3^{2}}}+  \OO_c
(|p_2|^{-5/2}+|p_3|^{-5/2})\right) \\
&\quad +  e^{ \theta  \tH_c} \left(\theta^2 \tilde
H_c+2\theta\right)\left(\gamma_1 T_1 \frac {W_{\L}^2}{p_2^2} +\gamma_4 T_4
\frac {W_{\R}^2}{p_3^2} +   \OO_c (|p_2|^{-3}+|p_3|^{-3})\right)~.
\end{equs}
Now observe that for any $C\in \real$, we have
$$
\tH_c + C =   \frac{p_2^2+p_3^2}2 + \OO_c(1) =\frac{p_2^2+p_3^2}2\left( 1 + \OO_c(p_2^{-2}+p_3^{-2})\right),
$$
since trivially $(p_2^2+p_3^2)^{-1} \leq p_2^{-2} + p_3^{-2}$.
But then, by \eref{eq:LHcethetaHcfirst} and  \lref{lem:repartitionpowers},
we find that
\begin{equs}
L \big(\tH_c e^{ \theta  \tH_c}\big)  & \leq e^{ \theta  \tH_c}
 \frac{p_2^2+p_3^2}2\left(\frac{\theta^2\gamma_1T_1
W_{\L}^2-\theta\alpha_2 + \OO_c(|p_2|^{-1/2})}{p_2^2} ~+\lr\right)~.
\end{equs}
As in the proof of \pref{c:Lelambdap2tilde}, the corrections involving
$F_2$ and $F_3$ replace the oscillatory terms $W_\L^2$ and $W_\R^2$
with their averages:
\begin{equs}
L V_c &=  L\left(\tH_ce^{ \theta  \tH_c}\right)\left(1+\frac {F_2}{p_2^3}+\frac
{F_3}{p_3^3}\right) + \tH_ce^{ \theta  \tH_c}L\left(\frac {F_2}{p_2^3}+\frac
{F_3}{p_3^3}\right)\\
& \leq e^{ \theta  \tH_c}
 \frac{p_2^2+p_3^2}2\left(\frac{\theta^2\gamma_1T_1
\avg{W_{\L}^2}-\theta\alpha_2 + \OO_c(|p_2|^{-1/2})}{p_2^2} ~+\lr\right)~.
\end{equs}
Therefore, by the definition \eref{eq:defalphas} of $\alpha_j$
and the condition on $\theta$, we have
\begin{equs}
L V_c & \leq e^{ \theta  \tH_c}
 \frac{p_2^2+p_3^2}2\left(\frac{-c + \OO_c(|p_2|^{-1/2})}{p_2^2}+
 \frac{-c + \OO_c(|p_3|^{-1/2})}{p_3^2}\right)~.
\end{equs}
Finally, by \lref{lem:mgrandOmegac},
and using that $(p_2^2+p_3^2)(p_2^{-2}+p_3^{-2}) > 2$, we
indeed obtain \eref{e:bornesLVcOmegac}.
\myqed\end{proof}

We now return to the proof of \pref{prop:technicalHb}. We need to
find some counterterms to eliminate the mixed terms in \eref{e:round3}.
For this, we use a subdivision of $A_* = \{x\in \Omega: p_2\neq 0, p_3\neq
0\}$ into 3 disjoint pieces, as shown in \fref{fig:intersectregionssphere2}:
\begin{equa}[e:subregions]
\OA&=\{x\in A_*:|p_2+p_3| \geq (p_2-p_3)^2 \}~,\\
\OB&=\left\{x\in A_* :(p_2-p_3)^2 > |p_2+p_3| > (p_2-p_3)^2/2
\right\}~,\\
\OC& =\left\{x\in A_*: (p_2-p_3)^2 \geq 2|p_2+p_3| \right\}~.
\end{equa}

By construction, $A_1$ is close to the diagonal $p_2=p_3$,
$A_3$ is far from it, and $A_2$ is some transition region.

\begin{figure}[ht]
\centering
\includegraphics[width=2.5in]{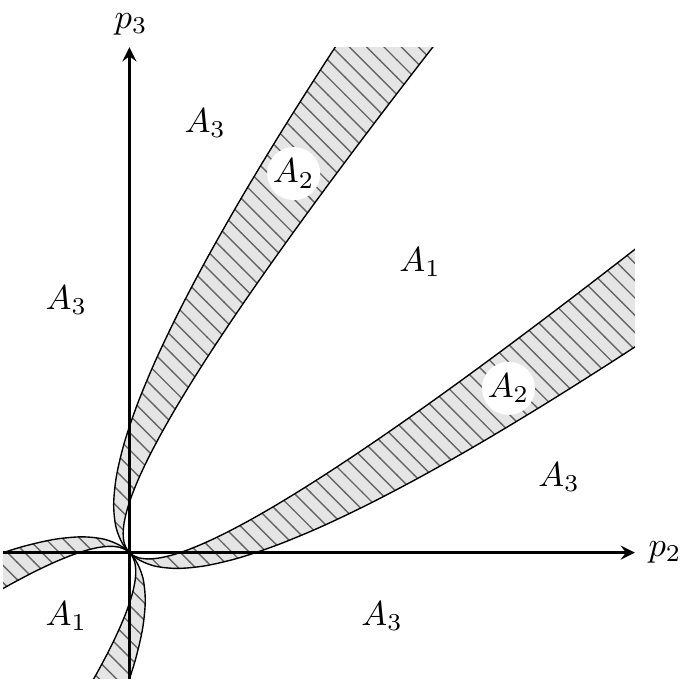}
\caption{Projection of the partition $A_*= \OA \cup \OB \cup \OC$ onto the
$p_2p_3$-plane. Note that the sets $A_1, A_2$ and $A_3$ 
do not include the $p_2$ and $p_3$ axes.}
\label{fig:intersectregionssphere2}
\end{figure}

\begin{lemma}\label{lem:propOAOBOC}The following holds:
\begin{enumerate}
\item[(i)] On $\OA \cup \OB$, the quantity $|p_2-p_3|$ is both
$\OO_c(\sqrt{|p_2|})$ and
$\OO_c(\sqrt{|p_3|})$.
\item[(ii)] On $\OB \cup \OC$, the quantity $|p_2-p_3|^{-1}$ is both
$\OO_c(|p_2|^{-1/2})$ and $\OO_c(|p_3|^{-1/2})$.
\end{enumerate}
\end{lemma}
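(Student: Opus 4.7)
The plan is to obtain both parts by elementary algebraic manipulations on the defining inequalities of $\OA, \OB, \OC$; since the desired bounds involve only $|p_2|$ and $|p_3|$, no dependence on $p_1,p_4$ enters and the polynomial $z$ in \dref{def:definition OOc} may be taken to be a constant.

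For part~(i), I would first combine the defining conditions of $\OA$ and $\OB$ to observe that on $\OA\cup\OB$,
\begin{equ}
(p_2-p_3)^2 \leq 2|p_2+p_3| \leq 2(|p_2|+|p_3|)~.
\end{equ}
Setting $s=|p_2-p_3|$, the argument splits into two cases. If $|p_2|\geq |p_3|$, then $s^2\leq 4|p_2|$ and $s\leq 2\sqrt{|p_2|}$ directly. If $|p_3|>|p_2|$, then $|p_3|\leq |p_2|+s$, which gives $s^2 \leq 4|p_2|+2s$, and the quadratic formula yields $s\leq 1+\sqrt{1+4|p_2|}\leq c\sqrt{|p_2|}$ once $|p_2|$ is large enough. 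By the $p_2\leftrightarrow p_3$ symmetry of the defining inequalities, the same argument with the roles swapped gives $|p_2-p_3|\leq c\sqrt{|p_3|}$.

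For part~(ii), I would use that on $\OB\cup\OC$ we have the lower bound $(p_2-p_3)^2 > |p_2+p_3|$. The strategy is to separate according to whether $|p_2+p_3|$ dominates $|p_2|$ or not. If $|p_2+p_3|\geq |p_2|$, the hypothesis immediately gives $(p_2-p_3)^2 > |p_2|$, so $|p_2-p_3|>\sqrt{|p_2|}$. If instead $|p_2+p_3|<|p_2|$, the triangle identity
\begin{equ}
|p_2-p_3|+|p_2+p_3| \geq 2\max(|p_2|,|p_3|) \geq 2|p_2|
\end{equ}
yields $|p_2-p_3|\geq 2|p_2|-|p_2+p_3|>|p_2|\geq \sqrt{|p_2|}$ whenever $|p_2|\geq 1$. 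Combining the two cases gives $|p_2-p_3|\geq \sqrt{|p_2|}$ for $|p_2|$ large, hence $|p_2-p_3|^{-1}=\OO_c(|p_2|^{-1/2})$; the bound in terms of $|p_3|$ follows by exchanging $p_2$ and $p_3$.

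No serious obstacle is anticipated: all the work is in setting up the right case distinctions from the piecewise definitions \eref{e:subregions}. The only point requiring mild care is that $\min(|p_2|,|p_3|)$ must be taken large enough (as allowed in \dref{def:definition OOc}) so that the quadratic estimate $s\leq 1+\sqrt{1+4|p_2|}$ in part~(i) can be absorbed into $c\sqrt{|p_2|}$, and so that the comparison $|p_2|\geq \sqrt{|p_2|}$ holds in part~(ii).
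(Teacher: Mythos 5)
Your proof is correct and rests on the same elementary idea as the paper's — both proofs do nothing but manipulate the defining inequalities of $\OA,\OB,\OC$ — but your case distinctions are slightly different, and in one place more explicit. For part~(i), the paper compresses the argument into the scaling $|p_3-p_2|\lesssim\sqrt{|p_2+p_3|}\sim\sqrt{|p_2|}\sim\sqrt{|p_3|}$; the middle $\sim$'s implicitly require $|p_2|$ and $|p_3|$ to be of comparable size on $\OA\cup\OB$, which is true but takes a short argument, whereas your quadratic estimate $s^2\leq 4|p_2|+2s$ gets the bound directly without invoking that comparability. For part~(ii), the paper splits by the sign of $p_2p_3$ (same sign: $|p_2+p_3|=|p_2|+|p_3|$; opposite sign: $|p_2-p_3|=|p_2|+|p_3|$), which makes the cancellation structure transparent; you instead split by whether $|p_2+p_3|\geq|p_2|$, recovering the same conclusion via the inequality $|p_2-p_3|+|p_2+p_3|\geq 2\max(|p_2|,|p_3|)$, and then appeal to the $p_2\leftrightarrow p_3$ symmetry for the other bound. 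Both routes are sound; the paper's sign split is a bit more symmetric and physically suggestive, while your version is more mechanically verifiable.
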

\begin{proof}
Trivially, (i) holds because on $\OA \cup \OB$, we have the scaling
 $|p_3 - p_2| \lesssim \sqrt{|p_2+p_3|} \sim \sqrt{p_2} \sim \sqrt{p_3}$.
To obtain (ii), observe that either $p_2$
and $p_3$ have the same sign and by the definition of $\OB \cup \OC$,
$|p_2-p_3|\gtrsim  \sqrt{|p_2+p_3|} =  \sqrt{|p_2|+|p_3|}  \geq
\max(\sqrt{|p_2|}, \sqrt{|p_3|})$,
or they have a different sign and $|p_2-p_3| = |p_2| + |p_3| \geq \max(|p_2|,
|p_3|) \gtrsim \max(\sqrt{|p_2|}, \sqrt{|p_3|})$. In both cases, we have the
desired bound.
\myqed\end{proof}

We first work on $\OA\cup\OB$. In this region, $p_2$ and $p_3$ are close to
each other, and are both large in absolute value. It is then easy to find
a counterterm for 
${ {p_1W_\L w_\C    }/{p_2^{2}}}$
and
${ {p_4W_\R  w_\C}/{p_3^{2}}}$. Indeed, $W_\L$ and $W_\R$
oscillate very rapidly (the respective frequencies are approximately
$p_2/2\pi$ and $p_3/2\pi$), while $w_\C$ oscillates only ``moderately'', 
with frequency $(p_3-p_2)/2\pi$. One can then simply average
the rapidly oscillating part, and obtain
\begin{lemma}\label{lem:Hbloinresonances} Let $ \RAB  =  {p_1\Wl1 w_{\C}
  }/{p_2^{3}} - {  p_4\Wr1  w_{\C}}/{p_3^{3}}$.
Then,
\begin{equ}
L \RAB   ={\frac {p_1W_{\L} w_{\C}    }{p_2^{2}}} -{\frac {  p_4W_{\R}
w_{\C}}{p_3^{2}}} +  \OO_c(|p_2|^{-5/2}+|p_3|^{-5/2}) \qquad (\text{on }\OA
\cup
\OB)~.
\end{equ}
\end{lemma}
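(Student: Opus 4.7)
The plan is to compute $L\RAB$ by splitting $\RAB = A + B$ with $A = {p_1 \Wl1 w_\C}/{p_2^3}$ and $B = -{p_4 \Wr1 w_\C}/{p_3^3}$, and to argue that the dominant contribution comes from the fast transport $L_2 + L_3 = p_2\partial_{q_2} + p_3\partial_{q_3}$ in the generator, since all other pieces (the $w_\I$ drifts, the $-\gamma_b p_b\partial_{p_b}$ damping, and the $\gamma_b T_b \partial_{p_b}^2$ diffusion) either vanish on $A$ and $B$ or produce terms of order $|p_2|^{-3}$ or $|p_3|^{-3}$, which are $\OO_c(|p_2|^{-5/2}+|p_3|^{-5/2})$. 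The symmetry $\lr$ reduces everything to the analysis of $A$.

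The core computation is transparent. Using $\partial_{q_2} \Wl1 = W_\L$, $\partial_{q_2} w_\C = -w_\C'$ and $\partial_{q_3} w_\C = w_\C'$, a direct calculation gives
\begin{equ}
(L_2+L_3) A \;=\; \frac{p_1 W_\L w_\C}{p_2^{2}} \;+\; \frac{p_1 \Wl1 w_\C'\,(p_3-p_2)}{p_2^{3}}.
\end{equ}
The first term is exactly the resonant contribution we wanted to match. The second term is the ``resonance leftover'': it is $\OO_c(|p_3 - p_2|/p_2^3)$, which would only be $\OO_c(|p_2|^{-3})$ in general, \emph{but} on $\OA\cup\OB$ \lref{lem:propOAOBOC}(i) upgrades it to $\OO_c(\sqrt{|p_2|}/p_2^3) = \OO_c(|p_2|^{-5/2})$. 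The analogous computation on $B$ yields $-{p_4 W_\R w_\C}/{p_3^{2}}$ plus a remainder controlled by $|p_3-p_2|/|p_3|^3 = \OO_c(|p_3|^{-5/2})$ via the same lemma. Summing and using $\lr$ symmetry gives the claim up to the remaining pieces of $L$.

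The remaining pieces of $L$ are routine to check: $p_1\partial_{q_1}$ applied to $A$ brings down a $-w_\L$ that still carries a $1/p_2^3$, hence $\OO_c(|p_2|^{-3})$; the drift terms $w_\L(\partial_{p_1}-\partial_{p_2}) + w_\C(\partial_{p_2}-\partial_{p_3})$ either differentiate the explicit $p_1$ (producing $\OO_c(|p_2|^{-3})$) or differentiate $1/p_2^3$ (producing $\OO_c(|p_2|^{-4})$); the damping $-\gamma_1 p_1 \partial_{p_1}A$ produces another $\OO_c(|p_2|^{-3})$; and $\partial_{p_4}A = \partial_{p_1}^2 A = 0$, so the bath at site $4$ and the diffusion contribute nothing to $A$. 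All these contributions are $\OO_c(|p_2|^{-5/2})$, and the symmetric statement holds for $B$. The main obstacle, and the reason this step is restricted to $\OA\cup\OB$, is that the leftover $w_\C'(p_3-p_2)/p_2^3$ captures precisely the resonance: only the near-diagonal bound $|p_3-p_2| \lesssim \sqrt{|p_2|}$ can squeeze it to the required half-order improvement. Outside $\OA\cup\OB$, this argument fails and a different counterterm must be built, which is the content of the subsequent \sref{ss:decoupleddyn}.
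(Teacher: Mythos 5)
Your proposal is correct and follows essentially the same route as the paper: compute $L$ on each piece of $\RAB$, identify the leading term and the resonance leftover proportional to $(p_3-p_2)$, and use \lref{lem:propOAOBOC}~(i) to upgrade the latter from $\OO_c(|p_j|^{-3})$ to $\OO_c(|p_j|^{-5/2})$ on $\OA\cup\OB$. The paper writes the computation of $L\bigl(p_1\Wl1 w_\C/p_2^3\bigr)$ in one line and leaves the drift/damping/diffusion bookkeeping implicit, whereas you organize it via the $(L_2+L_3)$ vs.\ $(L-L_2-L_3)$ split; this is a matter of presentation, not substance.
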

\begin{proof}
We have for the first term:
\begin{equs}
L  {\frac {p_1\Wl1 w_{\C}    }{p_2^{3}}} &= {\frac {p_1W_{\L} w_{\C}
}{p_2^{2}}} + \frac{p_1 \Wl1 w_{\C}' \cdot (p_3-p_2)}{p_2^3} + 
\OO_c(p_2^{-3}) \\
&= {\frac {p_1W_{\L} w_{\C}
}{p_2^{2}}} + 
\OO_c(|p_2|^{-5/2})~,
\end{equs}

where the last equality uses \lref{lem:propOAOBOC} (i).
A similar computation for the second term completes the proof.
\myqed\end{proof}

The counterterm $\RAB$ works well on $\OA \cup \OB$ because $|p_3 - p_2|$ 
is small compared to $p_2$ and $p_3$. 
We now have to find a counterterm $\RBC$ that works on $\OB \cup \OC$
and then patch the two counterterms 
together on $\OB$.
We state the properties of the counterterm $\RBC$ in the following
lemma, but postpone its construction to \sref{ss:decoupleddyn}.

\begin{lemma}\label{lem:Hbloindiagonale} There is a function
$\RBC  = \mathcal O_c(|p_2|^{-2} + |p_3|^{-2})$ defined on $\OB \cup \OC$
such that
\begin{equa}
L\RBC   &={\frac {p_1W_{\L} w_{\C}    }{p_2^{2}}} -{\frac {  p_4W_{\R}
w_{\C}}{p_3^{2}}}+   \OO_c(|p_2|^{-5/2}+|p_3|^{-5/2}) \qquad (\text{on }\OB
\cup
\OC)
\end{equa}
and
\begin{equ}[e:extderiveesr3]
\partial_{p_1}\RBC  = \OO_c (p_2^{-2}) \qquad \text{and} \qquad
\partial_{p_4}\RBC  = \OO_c (p_3^{-2})~.
\end{equ}
\end{lemma}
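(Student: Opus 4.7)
The plan is to cancel the two resonant terms $\tfrac{p_1 W_\L w_\C}{p_2^2}-\tfrac{p_4 W_\R w_\C}{p_3^2}$ of \eref{e:round3} up to a remainder of $\OO_c(|p_2|^{-5/2}+|p_3|^{-5/2})$ on $\OB\cup\OC$. The construction combines two ingredients specific to this region: the smallness of $|p_3-p_2|^{-1}$, which is $\OO_c(|p_2|^{-1/2})$ and $\OO_c(|p_3|^{-1/2})$ by \lref{lem:propOAOBOC}(ii), and the Langevin damping $Lp_b = -\gamma_b p_b + w_\I$ on the outer rotors $b=1,4$. By the $1\lr 4,\ 2\lr 3,\ \L\lr\R$ symmetry, I would work out only the half cancelling $\tfrac{p_1 W_\L w_\C}{p_2^2}$ and obtain the partner via the ``$+\lr$'' substitution.

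First I would introduce the leading counterterm
\begin{equation*}
\RBC^{(1)} = \frac{p_1 W_\L W_\C}{p_2^2(p_3-p_2)} - \frac{p_4 W_\R W_\C}{p_3^2(p_3-p_2)}\,.
\end{equation*}
By \lref{lem:propOAOBOC}(ii), $\RBC^{(1)} = \OO_c(|p_2|^{-5/2}+|p_3|^{-5/2})$, which in particular fits within the required size $\OO_c(|p_2|^{-2}+|p_3|^{-2})$. A direct computation, using the identity $(p_2\partial_{q_2}+p_3\partial_{q_3})W_\C(q_3-q_2)=(p_3-p_2)\,w_\C(q_3-q_2)$, produces
\begin{equation*}
(L_2+L_3)\RBC^{(1)} = \frac{p_1 W_\L w_\C}{p_2^2} - \frac{p_4 W_\R w_\C}{p_3^2} + Y\,,
\end{equation*}
where the sub-leading residue $Y = \tfrac{p_1 w_\L W_\C}{p_2(p_3-p_2)}-(1\lr 4)$ is $\OO_c(|p_2|^{-3/2}+|p_3|^{-3/2})$. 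The remaining contributions $(L-L_2-L_3)\RBC^{(1)}$, coming from $-\gamma_b p_b\partial_{p_b}$, the slow drifts $p_i\partial_{q_i}$, and the noise $\gamma_b T_b\partial_{p_b}^2$, each inherit the extra $(p_3-p_2)^{-1}$ factor of $\RBC^{(1)}$ and are therefore already $\OO_c(|p_2|^{-5/2}+|p_3|^{-5/2})$.

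Next I would cancel $Y$ by adding correctors $\RBC^{(2)}$ of the form $p_1\,A(q)/(p_2^a(p_3-p_2)^b)$ with $a\geq 2$, $b\geq 1$, using the primitives $W_\C^{[1]}$ (since $W_\C$ has zero mean) and $\partial_{q_2}W_\L=w_\L$, $\partial_{q_1}W_\L = -w_\L$ to engineer the cancellation. The key is that $(p_3-p_2)^{-1}$ provides the $\OO_c(|p|^{-1/2})$ gain needed to bring each candidate corrector into the allowed range $\OO_c(|p_2|^{-2}+|p_3|^{-2})$ without reintroducing a residue larger than the current order. The process is iterated finitely many times: each round kills the previous round's residue and produces a new residue that is $\OO_c(|p|^{-1/2})$ smaller, so that after at most two iterations the remainder is $\OO_c(|p_2|^{-5/2}+|p_3|^{-5/2})$. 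The derivative bounds \eref{e:extderiveesr3} then follow directly because each summand of $\RBC$ depends on $p_1$ (respectively $p_4$) only through at most one linear factor, so that $\partial_{p_1}\RBC$ is of size $\OO_c(p_2^{-2}(p_3-p_2)^{-1})=\OO_c(|p_2|^{-5/2})$, a fortiori $\OO_c(p_2^{-2})$.

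The main obstacle is the interplay of the two mechanisms: neither suffices alone. Using only the Langevin damping, the natural counterterm $\RBC\propto -\tfrac{p_1 W_\L w_\C}{\gamma_1 p_2^2}$ produces, via the $p_3\partial_{q_3}$ piece of $L$, a term $\tfrac{p_1 p_3 W_\L w_\C'}{\gamma_1 p_2^2}$ which is $\OO_c(|p_2|^{-1})$ on $\OB\cup\OC$ and hence far too large. Using only the $(p_3-p_2)^{-1}$-trick, the natural second-round counterterm for $Y$ is of size $\OO_c(|p_2|^{-3/2})$, above the budget for $\RBC$. Threading the needle requires a careful mixing of both tricks and a bookkeeping that tracks all the contributions from $L-L_2-L_3$ to each new corrector, ensuring that no hidden resonance is reintroduced and that the iteration terminates before the accumulated size of corrections breaches $\OO_c(|p_2|^{-2}+|p_3|^{-2})$.
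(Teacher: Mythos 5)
Your leading counterterm $\RBC^{(1)}=\tfrac{p_1 W_\L W_\C}{p_2^2(p_3-p_2)}-\tfrac{p_4 W_\R W_\C}{p_3^2(p_3-p_2)}$ is correctly computed, and the identity $(L_2+L_3)W_\C=(p_3-p_2)w_\C$ does produce the target plus the residue $Y=\tfrac{p_1 w_\L W_\C}{p_2(p_3-p_2)}-\tfrac{p_4 w_\R W_\C}{p_3(p_3-p_2)}$, of size $\OO_c(|p_2|^{-3/2}+|p_3|^{-3/2})$. But this is where the argument breaks down, and the gap is not a matter of ``bookkeeping'': $Y$ has exactly the same resonance structure as the term you started from. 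Writing $w_\L(q_2-q_1)W_\C(q_3-q_2)=\sum_{n,m\ne 0}a_n b_m e^{in(q_2-q_1)+im(q_3-q_2)}$, the operator $L_2+L_3$ multiplies each mode by $i\bigl((n-m)p_2+mp_3\bigr)$, which vanishes on lines $p_3/p_2=(m-n)/m$ that pass through $\OC$. So the formal $(L_2+L_3)^{-1}Y$ has poles inside $\OB\cup\OC$ and cannot be used, no matter how you choose to distribute primitives $W_\C^{[1]}$, $W_\L$. The other tool, inverting the damping via $\RBC^{(2)}\propto\gamma_1^{-1}\tfrac{p_1 w_\L W_\C}{p_2(p_3-p_2)}$, fails for the two reasons you yourself note: the streaming part $p_2\partial_{q_2}$ acting on it generates $\tfrac{p_1 w_\L' W_\C}{\gamma_1(p_3-p_2)}=\OO_c(|p_2|^{-1/2})$ (much larger than the allowed remainder), and the corrector itself is only $\OO_c(|p_2|^{-1}|p_3|^{-1/2})$, which is not $\OO_c(|p_2|^{-2}+|p_3|^{-2})$ uniformly on $\OC$. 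You acknowledge that ``neither suffices alone'' and that ``threading the needle'' is required, but you do not supply the mechanism, and the iteration you sketch does not converge: each round reproduces a resonant product of functions of the two fast variables $q_2$, $q_3$, so there is no finite stage at which the remainder becomes $\OO_c(|p_2|^{-5/2}+|p_3|^{-5/2})$.

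The missing idea, which is the whole point of Section 3.3, is to invert the free streaming of $q_2,q_3$ and the Langevin damping of the outer rotors \emph{jointly}, not alternately. The paper introduces the decoupled generator $\barL$, shows that $f=p_1W_\L w_\C$ lies in the space $\SS_0$ of observables with zero $\bar\pi_e$-average, and defines $\barR f=-\int_0^\infty \bEE_x f(x(t))\,\dd t$. The integral converges exponentially because $p_1(t)$ relaxes under the Ornstein--Uhlenbeck dynamics of the outer rotor; this decay is \emph{uniform in $p_2,p_3$} and therefore oblivious to the resonance lines. One then sets $\RBC=U_1(\bar x)/p_2^2+U_4(\bar x)/p_3^2$ with $U_b=\barR f_b$ and a mild change of variables $\bar x$ (\lref{lem:lbar}, \pref{prop:changevarours}) that controls the discrepancy between $L$ and $\barL$. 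Your proposal does not contain this construction, does not produce a function satisfying the required $L\RBC$ identity, and in fact cannot, since the counterterm is not an algebraic function of $x$ built from finitely many primitives of the potentials.
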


Assuming that \lref{lem:Hbloindiagonale} is proved, we
next join the two counterterms $\RAB$ and $\RBC$
by a smooth interpolation on $A_2$
in order to prove \pref{prop:technicalHb}.
\begin{proof}[Proof of \pref{prop:technicalHb}]
We introduce a smooth function $\rho : {\real}\cup\{-\infty, \infty\}\to [0,1]$
such
that $\rho(x) = 1$ when $|x|\leq 1$ and $\rho(x) = 0$ when $|x| \geq 2$.
We then consider the function
\begin{equ}[e:defchiargs]
	\rho\left(\frac{(p_3-p_2)^2}{ p_2+p_3}\right)~,
\end{equ}
which is well-defined and smooth on the set $A_* =  \OA \cup \OB \cup \OC = 
\{x\in \Omega: p_2\neq 0, p_3\neq 0\}$. Moreover,
it is equal to 1 on $\OA$, and 0 on $\OC$. We now omit the arguments
and simply write $\rho$ instead of \eref{e:defchiargs}.
Using \lref{lem:Hbloinresonances} and \lref{lem:Hbloindiagonale}, we obtain
\begin{equa}[eq:e:r1mr3Lx]
&L\left(\rho \RAB  +(1-\rho) \RBC \right) = \rho L\RAB   + (1-\rho)L\RBC  +
(\RAB  -\RBC )L\rho\\
&\qquad   ={\frac {p_1W_{\L} w_{\C}    }{p_2^{2}}} - {\frac {  p_4W_{\R}
w_{\C}}{p_3^{2}}}  + \OO_c (|p_2|^{-5/2}+|p_3|^{-5/2})+ (\RAB  -\RBC )L\rho~.
\end{equa}
Observe next that
\begin{equs}[e:bornechi]
L \rho= \rho'\cdot
&\left(2\frac{(p_3-p_2)}{ p_2+p_3}(w_{\L}-w_\R -2w_{\C}) + 
\frac{(p_3-p_2)^2}{ (p_2+p_3)^2}(w_{\L}+w_\R)\right)~.
\end{equs}
Since $\rho'$ has support in $\OB$, where $|p_3-p_2| \sim
|p_2+p_3|^\frac{1}{
2}  \sim |p_2|^{1/2}\sim |p_3|^{1/2} $, we see that $L\rho$
is simultaneously 
$\OO_c(|p_2+p_3|^{-1/2})$, $\OO_c(|p_2|^{-1/2})$ and $\OO_c(|p_3|^{-1/2})$.
But then, by \eref{eq:e:r1mr3Lx} and using that $\RAB  -\RBC =
\mathcal O_c(|p_2|^{-2} + |p_3|^{-2})$, we find
\begin{equ}\label{e:r1mr3Lx2}
L\left(\rho \RAB  +(1-\rho) \RBC \right)  ={\frac {p_1W_{\L} w_{\C}   
}{p_2^{2}}}
-{\frac {  p_4W_{\R}  w_{\C}}{p_3^{2}}}  +   \OO_c
(|p_2|^{-5/2}+|p_3|^{-5/2})~.
\end{equ}
We set now
\begin{equ}
\tH_c = \Hc3 + \rho \RAB   + (1-\rho)\RBC ~.
\end{equ}
From \eref{e:r1mr3Lx2} and \eref{e:round3},
we deduce immediately that \eref{e:Lhbn23} holds. Moreover,
\eref{e:Hctildep1p4derivatives} follows
from \eref{e:extderiveesr3}, the expressions
for $\Hc3$ and $\RAB$, and the fact that $\rho$ does not depend on $p_1$ and $p_4$.
This completes the proof of
\pref{prop:technicalHb}. 
\myqed\end{proof}

\mysubsection{Fully decoupled dynamics approximation}\label{ss:decoupleddyn}

We construct here the counterterm $\RBC$ of \lref{lem:Hbloindiagonale},
which eliminates the two resonant terms $-{ {p_1W_\L w_\C    }/{p_2^{2}}}$
and
${ {p_4W_\R  w_\C}/{p_3^{2}}}$ on $\OB\cup\OC$ when
both $|p_2|$ and $|p_3|$ are large. 
In this regime, 
all three interaction forces $w_\L, w_\C, w_\R$ oscillate rapidly
(since $|p_2|, |p_3-p_2|$ and $|p_3|$ are all large)
and we expect
the dynamics to be well approximated by the following {\em decoupled dynamics},
where all the interaction forces are removed.

\begin{definition}
We call {\em decoupled dynamics} the SDE
\begin{equs}[3][e:SDEdec]
\dd  q_i &= p_i\dt ~,\qquad &i=1,\dots,4~,\\
\dd p_b &=  - \gamma_b p_b \dt + \sqrt{2\gamma_b T_b} \dd B^b_t~,\qquad
&b=1,4~,\\
\dd p_j &=  0\dt~,\qquad &j=2,3~,
\end{equs}
with generator
\begin{equ}[eq:defLbar]
\barL  = \sum_{i =1}^4 p_i\partial_{q_i} +\sum_{b=1,4}\left(-\gamma_b p_b
\partial_{p_b} + \gamma_b T_b \partial^2_{p_b}\right)~,
\end{equ}
and denote by 
$\bar{\mathbb E}_{x}$ the corresponding expectation value
with initial condition $x \in \Omega$.
\end{definition}

We will construct two
functions $U_1$, $U_4$ such that $\bar L U_1 =p_1W_{\L} w_{\C}$
and $\bar LU_4 =  -p_4W_{\R}  w_{\C}$. Then, we will
introduce a change of variable $x\mapsto \bar x(x)$ such that
$\bar x$ approximately obeys the decoupled dynamics, so that
$L(U_1(\bar x)) \approx p_1W_{\L} w_{\C}  $ and 
$L(U_4(\bar x)) \approx  -p_4W_{\R}  w_{\C}$ in the regime of interest. 
Finally, we will show that the choice
$\RBC(x)  =  {U_1(\bar x)}/{p_2^2}  + {U_4(\bar x)}/{ p_3^2}$ satisfies
the conclusions of \lref{lem:Hbloindiagonale}.

The decoupled dynamics can be integrated explicitly for any
initial condition $x = (q_1, \dots, p_4)\in \Omega$.
For the outer rotors $b=1,4$, we have
\begin{equs}[3][e:dynamiqueLbarreexpliciteouter]
p_b(t) &= e^{-\gamma_b t}p_b +\sqrt{2\gamma_b T_b}\int_0^t e^{-(t-s)\gamma_b} \d
B_s^b~,\\
q_b(t) &= q_b +\frac{1-e^{-\gamma_b t}}{ \gamma_b}p_b  +\sqrt{2\gamma_b T_b}\int_0^t \Big(\int_0^s
e^{-(s-s')\gamma_b} \d B_{s'}^b\Big)\dd s~,
\end{equs}
and for the central ones ($j=2,3$) we simply have
\begin{equs}[3][e:dynamiqueLbarreexplicitecentre]
p_j(t) &= p_j~,\\
q_j(t) &= q_j + p_j t \pmod {2\pi}~,
\end{equs}
which is deterministic.
We decompose the variables between the central and external rotors as
\begin{equ}
	x = (x_e, x_c) \quad \text{with } x_e = (q_1, p_1, q_4, p_4) \quad \text{and } x_c = (q_2, p_2, q_3, p_3)~.
\end{equ}
Under the decoupled dynamics,
the two processes $x_e(t)$ and $x_c(t)$ are independent and $x_c(t)$ 
is deterministic. Moreover, under the decoupled dynamics, $x_e(t)$ has the generator 
$$
\barLO  = \sum_{b=1,4}(p_b\partial_{q_b} -\gamma_b p_b \partial_{p_b} + \gamma_b
T_b \partial_{p_b}^2 )~,
$$
and admits the invariant probability measure 
${\bar \pi}_e$ on  $(\torus \times \real)^2$
given by 
\begin{equ}
\d{\bar \pi}_e(x_e) = \frac{1}{Z} e^{- \frac{p_1^2}{ 2T_1}-
  \frac{p_4^2}{ 2 T_4}} \d q_1 \d p_1 \d q_4 \d p_4~,
\end{equ}
where $Z$ is a normalization constant (recall
that $T_1, T_4>0$ by assumption).

\begin{definition}
We denote by ${\SS}$ the set of functions $f\in \CC^\infty(\Omega,
\real)$ for which the norm
\begin{equ}[eq:normef]
\vertiii{f} = \sup_{x\in \Omega} \frac{| f(x) |}{1+p_1^{2}+p_4^{2}}
\end{equ}
is finite. We denote by ${\SS_0}$ the subspace of functions $f\in \SS$ for which
\begin{equ}
\int_{(\torus \times \real)^2} f(x_e, x_c)\d{\bar \pi}_e(x_e)  = 0 \qquad \text{for all
$x_c \in (\torus \times \real)^2$}~.
\end{equ}
\end{definition}

We will later consider $f = p_1W_{\L} w_{\C} $ and $f = -p_4W_{\R}  w_{\C}$,
which are manifestly in $\SS_0$.

\begin{lemma}\label{lem:esperancedecroitexp}
There are constants $C_*, c_*>0$ such that for all  $f\in {\SS}_0$, all $x\in \Omega$,
and all $t\ge0$,
\begin{equ}\label{e:expectationbardecroit}
\left|\bEE_{x} f(x(t))\right| \leq
C_*e^{-c_*t}\vertiii{f}\left(1+p_1^{2}+p_4^{2}\right)~.
\end{equ}
\end{lemma}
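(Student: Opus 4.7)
The plan is to exploit the product structure of the decoupled dynamics \eref{e:SDEdec} and then invoke standard exponential ergodicity for the two independent Langevin rotors making up $x_e$. Since $x_c(t)$ is deterministic and independent of $x_e(t)$ under $\bEE_x$, we have
\begin{equ}
\bEE_x f(x(t)) = \bEE_{x_e} f\bigl(x_e(t),\, x_c(t)\bigr) = (P_e^t g_t)(x_e)~,
\end{equ}
where $P_e^t$ is the semigroup generated by $\barLO$ and $g_t(\argcdot) = f(\argcdot,\, x_c(t))$ is viewed as a function of $x_e$ only (with $x_c(t)$ frozen to its deterministic value). By the definition of $\SS_0$, we have $\int g_t \d \bar\pi_e = 0$ for every $t$, and $|g_t(x_e)|\leq \vertiii f\,(1+p_1^2+p_4^2)$ uniformly in $t$.

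It therefore suffices to show that $P_e^t$ is geometrically ergodic in a weighted total variation norm with weight $\Psi(x_e) = 1+p_1^2+p_4^2$, namely that for some $C_*, c_*>0$,
\begin{equ}
\|P_e^t(x_e,\argcdot) - \bar\pi_e\|_{\Psi} \leq C_* \Psi(x_e)\, e^{-c_* t}\qquad\text{for all } x_e, t\ge 0~.
\end{equ}
Applied with $g = g_t/\vertiii f$ (which satisfies $|g|\leq \Psi$ and $\int g \d \bar\pi_e = 0$), this immediately yields \eref{e:expectationbardecroit}.

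To prove such a weighted geometric ergodicity, I would appeal to a Harris-type theorem (e.g.\ Meyn--Tweedie, or the version in Hairer--Mattingly). The required ingredients are standard for two decoupled free Langevin rotors on $\torus\times \real$:
\begin{myitem}
\item \emph{Lyapunov condition.} A direct computation gives $\barLO \Psi = -2\gamma_1 p_1^2 - 2\gamma_4 p_4^2 + 2\gamma_1 T_1 + 2\gamma_4 T_4 \le - c \Psi + c'$ for appropriate constants, so $\Psi$ is a Foster--Lyapunov function for $\barLO$.
\item \emph{Smoothness and irreducibility.} The operator $\barLO$ is hypoelliptic (H\"ormander's condition is satisfied since the drifts $p_b\partial_{q_b}$ and the noise $\partial_{p_b}$ bracket to give every direction), and the process is irreducible on $(\torus\times\real)^2$, so every compact set is a small set for $P_e^t$ at any fixed $t_0>0$.
\end{myitem}
Combined, these yield geometric ergodicity in the $\Psi$-norm; see the standard references mentioned above.

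The decoupling between $x_e$ and $x_c$ is the feature that makes the argument work: without it, even writing $f(x_e,x_c(t))$ as a function purely of $x_e$ at each fixed $t$ would not make sense. The slightly delicate point is that the constants $C_*,c_*$ must be uniform in the (arbitrary) frozen value of $x_c$, but this is automatic since $x_c$ enters only through the bound $\vertiii f$ and not through the dynamics of $x_e$. I expect the only nontrivial step to be the verification (or citation) of the weighted Harris theorem with the quadratic weight $\Psi$, which is by now well-established for Langevin-type diffusions.
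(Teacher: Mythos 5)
Your proof is correct and follows essentially the same route as the paper: exploit the independence of $x_e(t)$ and $x_c(t)$ under the decoupled dynamics, freeze the deterministic $x_c(t)$, and apply weighted geometric ergodicity (Harris/Meyn--Tweedie) for the two-rotor Langevin process $x_e(t)$ with the quadratic Lyapunov function $1+p_1^2+p_4^2$. The paper phrases the reduction slightly differently (proving a uniform-in-$v$ bound for $g_v(x_e)=f(x_e,v)$ and then setting $v=x_c(t)$, rather than introducing the time-dependent $g_t$), but the mathematical content and the cited ergodicity machinery are the same.
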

\begin{proof}
As mentioned, $x_e(t)$ and $x_c(t)$ are independent under the decoupled dynamics.
Introducing the expectation value $\Enull$  with respect to the process
$x_e(t)$ under the decoupled dynamics, we obtain that for any function $f$ on $\Omega$,
\begin{equ}[eq:decompexpectation]
\bEE_{x} f(x(t))  = \Enull_{x_e } f(x_e(t), x_c(t))~,
\end{equ}
where $x_c(t)$ is (deterministically) given by
 \eref{e:dynamiqueLbarreexplicitecentre}.
 
The process $x_e(t)$ under the decoupled dynamics
is exponentially ergodic, with
the unique invariant measure ${\bar \pi}_e$ defined above.
Indeed, one can check explicitly that this measure is invariant,
and introducing the Lyapunov function $V_e(x_e) =1+ p_1^{2} + p_4^{2}$,
we easily obtain that $\barLO V_e \leq c-cV_e$. It follows
from \cite[Theorem 6.1]{meyn_stability_1993}\footnote{One should also check
that there is a skeleton with respect to which every compact set
is petite. This is obvious, but can be proved with methods
similar to those of §5 in \cite{cuneo_nonequilibrium_2014}.} that 
there are two constants $C_*, c_*>0$ such that for
any function $g:(\torus\times \real)^2 \to \real$ such that
$g/V_e$ is bounded,
\begin{equ}[eq:supxg]
\sup_{x_e}\frac{|\Enull_{x_e}
g(x_e(t)) - {\bar \pi}_e(g)|}{ 1+  p_1^{2} + p_4^{2}}\leq
C_*e^{-c_*t} \sup_{x_e} \frac{|g(x_e) - {\bar \pi}_e(g)
|}{ 1 +  p_1^{2} + p_4^{2}}~.
\end{equ}
Let now $f\in \SS_0$. For any fixed $v \in (\torus \times \real)^2$, we apply \eref{eq:supxg}
to the function $g_v(x_e) = f(x_e, v)$. Since $f\in \SS_0$,
we have ${\bar \pi}_e(g_v) = 0$. Therefore, for any $t\geq 0$, 
\begin{equ}\label{e:fexp}
\sup_{x_e}\frac{|\Enull_{x_e} f(x_e(t), v)|}{ 1+p_1^{2} +p_4^{2}}\leq C_*e^{-c_*t}
\sup_{x_e} \frac{|f(x_e,v)|}{1+p_1^{2}+p_4^{2}} \leq C_*e^{-c_*t}\vertiii{f}~.
\end{equ}
This holds for all $v$, and in particular for $v = x_c(t)$. Therefore, by
\eref{eq:decompexpectation}, we have the desired result.
\myqed\end{proof}

The next proposition constructs a right inverse of $\barL$ on $\SS_0$
\cite{MR1872736,MR1988467,MR2135314}.
We use here the notation
\begin{equ}[eq:notationx1x8]
	x=(x_1, \dots, x_8) = (q_1, \dots, q_4, p_1, \dots, p_4)~.
\end{equ}

\begin{proposition}\label{prop:rbarref}
Let $f\in  \SS_0$ be a function such that for all multi-indices
$\underline a$, we have $\partial^{\underline a}f\in  \SS_0$,
and let
\begin{equ}[eq:defKbar](\barR  f)(x)=-\int_0^\infty  \bar{\mathbb E}_{x} f(x(t)) \,
  \d t~.
\end{equ}
Then:
\begin{enumerate}
\item[(i)] $\barR  f$ and its derivatives of all orders are in $\SS$.
\item[(ii)] We have
\begin{equ}
\barL  \barR  f = f~.
\end{equ}
\end{enumerate}
\end{proposition}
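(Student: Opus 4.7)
The plan is to proceed in three steps: first establish convergence and an $\SS$-bound for $\barR f$ itself; second, propagate analogous bounds to all derivatives of $\barR f$ by exploiting the explicit form of the decoupled dynamics; third, verify the identity $\barL \barR f = f$ via the Kolmogorov backward equation.

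For step one, I would apply \lref{lem:esperancedecroitexp} directly to $f \in \SS_0$, obtaining the bound $|\bar{\mathbb E}_x f(x(t))| \leq C_* e^{-c_* t} \vertiii{f}(1+p_1^2+p_4^2)$. Integrating against $\dt$ yields both absolute convergence of the integral defining $\barR f$ and the uniform estimate $\vertiii{\barR f} \leq (C_*/c_*) \vertiii{f}$, in particular $\barR f \in \SS$.

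Step two is the main technical work. Because the decoupled dynamics \eref{e:dynamiqueLbarreexpliciteouter}--\eref{e:dynamiqueLbarreexplicitecentre} is explicit, with noise entering additively and independently of the initial condition, the trajectory $x(t)$ is a smooth function of the initial datum $x$ whose Jacobian entries are deterministic, independent of the noise, and either bounded --- the OU-type factors $e^{-\gamma_b t}$ and $(1-e^{-\gamma_b t})/\gamma_b$ for $b=1,4$ --- or at worst linear in $t$, namely $\partial_{p_j} q_j(t) = t$ for $j=2,3$. Differentiating $\bar{\mathbb E}_x f(x(t))$ under the expectation via the chain rule produces, by induction on the order $|\underline a|$, a finite sum
\[\partial^{\underline a} \bar{\mathbb E}_x f(x(t)) = \sum_{\underline\beta} c_{\underline a,\underline\beta}(t)\, \bar{\mathbb E}_x (\partial^{\underline\beta} f)(x(t)),\]
where each coefficient $c_{\underline a,\underline\beta}(t)$ is polynomial in $t$ of degree at most $|\underline a|$ and the multi-indices appearing satisfy $|\underline\beta| \leq |\underline a|$. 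Since by hypothesis every $\partial^{\underline\beta} f$ lies in $\SS_0$, \lref{lem:esperancedecroitexp} applies termwise and yields $|\partial^{\underline a} \bar{\mathbb E}_x f(x(t))| \leq P_{\underline a}(t) e^{-c_* t} (1+p_1^2+p_4^2)$ for some polynomial $P_{\underline a}$. This estimate is integrable in $t$, which both justifies exchanging derivatives with the $t$-integral and shows that $\partial^{\underline a} \barR f \in \SS$, establishing part (i).

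For step three, once $\barR f$ and all its derivatives are controlled, I can interchange $\barL$ with the $t$-integral and use the Kolmogorov backward equation $\partial_t \bar{\mathbb E}_x f(x(t)) = \barL \bar{\mathbb E}_x f(x(t))$. The integral then telescopes:
\[\barL \barR f(x) = -\int_0^\infty \partial_t \bar{\mathbb E}_x f(x(t)) \dt = \bar{\mathbb E}_x f(x(0)) - \lim_{t \to \infty} \bar{\mathbb E}_x f(x(t)) = f(x),\]
the final limit vanishing by the exponential decay of step one. The main obstacle is the bookkeeping of step two: one must verify carefully that the chain-rule expansion indeed only produces derivatives of $f$ multiplied by at-most-polynomial coefficients in $t$, so that the $\SS_0$-hypothesis can be used on every summand. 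The assumption that \emph{every} derivative of $f$ lies in $\SS_0$ is crucial --- without it, the $\bar\pi_e$-mean of some $\partial^{\underline\beta} f$ could fail to vanish and \lref{lem:esperancedecroitexp} could not be invoked for the corresponding term.
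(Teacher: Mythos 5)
Your proposal is correct and follows essentially the same route as the paper's proof: apply \lref{lem:esperancedecroitexp} for the zeroth-order bound, differentiate under the expectation using the deterministic and at-most-linearly-growing Jacobian entries of the explicit affine flow \eref{e:dynamiqueLbarreexpliciteouter}--\eref{e:dynamiqueLbarreexplicitecentre} together with the hypothesis that all derivatives of $f$ lie in $\SS_0$, and then obtain part (ii) from the Kolmogorov backward equation plus the decay at $t\to\infty$. The only cosmetic difference is that, since the flow is affine in the initial datum, the paper's induction produces only multi-indices $\underline\beta$ with $|\underline\beta|=|\underline a|$ rather than $|\underline\beta|\le|\underline a|$, but your slightly weaker bookkeeping loses nothing.
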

\begin{proof}
By \lref{lem:esperancedecroitexp}, the integral \eref{eq:defKbar} converges
absolutely for all $x$ and we have $\barR f \in \SS$.
We now prove the result about the derivatives. By
\eref{e:dynamiqueLbarreexpliciteouter}
and \eref{e:dynamiqueLbarreexplicitecentre}, we can write
\begin{equs}[eq:linearitydec]
	\frac{\partial x_i(t)}{\partial x_j} = h_{ij}(t)~,
\end{equs}
where the $h_{ij}$ are deterministic functions of $t$ only that grow
at most linearly (namely $0$, $1$,
$e^{-\gamma_b t}$, $(1-e^{-\gamma_b t})/{ \gamma_b}$ and $t$).
We then have
\begin{equ}
\frac{\partial }{\partial x_j} \left(\bar{\mathbb
E}_{x} f(x(t))\right) = \sum_{i=1}^8 h_{ij}(t)\bar{\mathbb
E}_{x}[(\partial_i f)(x(t))]~.
\end{equ}
For the derivatives of order $n$, we find by induction
\begin{equ}[eq:deriveeKf]
\partial_{j_1, \dots, j_n}\left(\bar{\mathbb
E}_{x} f(x(t))\right)  = \sum_{i_1, \dots, i_n}
\Big(\prod_{k=1}^{n} h_{i_k j_k}(t)\Big)\bar{\mathbb
E}_{x}[(\partial_{i_1, \dots, i_n} f)(x(t))]~,
\end{equ} 
where the sum is taken over all $(i_1, \dots, i_n)\in \{1, 2, \dots, 8\}^n$.
Since by assumption $\partial_{i_1, \dots, i_n} f\in \SS_0$, we have
by \lref{lem:esperancedecroitexp} that
$$\left|\bar{\mathbb
E}_{x}[(\partial_{i_1, \dots, i_n} f)(x(t))]\right| \leq c e^{-c_*t}
\left(1+p_1^{2}+p_4^{2}\right)~.$$
But then, by \eref{eq:deriveeKf}, we have
\begin{equs}
\left|\partial_{i_1, \dots, i_n} \barR f(x)\right|& = \left| \int_{0}^\infty
\partial_{i_1, \dots, i_n}\left(\bar{\mathbb
E}_{x} f(x(t))\right) \d t \right|\\
& \leq c \left(1+p_1^{2}+p_4^{2}\right) \sum_{i_1, \dots, i_n} \left|\int_0^\infty 
\Big(\prod_{k=1}^{n} h_{i_k j_k}(t)\Big) e^{-c_*t} \dt\right|~.
\end{equs} 
Since the $h_{ij}$ grow at most linearly, the time-integrals in the right-hand
side converge. Therefore, $\barR f$ is $\CC^\infty$ and (i) holds.

For the second statement, we observe that
\begin{equ}
 \barL  \barR f = - \int_{0}^\infty \barL  \bar{\mathbb E}_{x} f(x(t))\d t =-
\int_{0}^\infty \frac{\dd}{\dt} \bar{\mathbb E}_{x}f(x(t)) \d t =\bar{\mathbb E}_{x}
f(x(0)) = f(x)~,
\end{equ}
where we have used that $\lim_{t\to\infty}  \mathbb E_x f(x(t)) = 0$ by
\eref{e:fexp}.
\myqed\end{proof}

\begin{remark}\label{rem:paspinningettau} 
The proof of \pref{prop:rbarref}, 
and in particular \eref{eq:linearitydec}, relies on the linear nature of
the decoupled dynamics.
If we add constant forces $\tau_1$ and $\tau_4$
at the ends of the chain (as in \cite{cuneo_nonequilibrium_2014}),
the method above applies with little
modification, and with the replacements $p_b \to p_b-\tau_b/\gamma_b$,
$b=1,4$, in the invariant measure $\bar \pi_e$.
However, if we add pinning potentials of the kind $U(q_i)$,
the decoupled dynamics cannot be solved explicitly, and we
do not have \eref{eq:linearitydec} for some deterministic functions
$h_{ij}(t)$. Although we believe there exists an analog of \pref{prop:rbarref}
in that case, we are currently unable to provide it.
The situation is even worse in the simultaneous presence of constant forces
and pinning potentials. In that case, the expression of
${\bar \pi}_e$ is not known \cite{gallavotti_nonequilibrium_2013}, which makes
it difficult to decide whether a given function is in $\SS_0$.
(Of course, although there is no difficulty there, 
the averaging of $p_2$, $p_3$
and $H_c$ also needs to be adapted to accommodate for such
modifications of the model.)
\end{remark}

We now have an inverse of $\bar L$ on a given class of functions.
We next use it to find an approximate inverse of $L$.
The key is to introduce a change of variables $\bar x = (\bar q_1,
\bar p_1, \dots, \bar  q_4, \bar  p_4)$ such that for nice
enough functions $f$, it holds that
$L(f(\bar x)) \approx (\barL f)(\bar x)$ in the regime of interest.
Here and in the sequel,
it is always understood that $\bar x$ is viewed as a function of $x$.
We compare the actions of $L$ and $\bar L$
in \lref{lem:lbar}. We state this lemma with the notation \eref{eq:notationx1x8},
and write generically
\begin{equ}[eq:defLLbar]
L =  \sum_i( b_i(x)\partial_i + \sigma_i\partial_i^2) \qquad \text{and}\quad 
\bar L = \sum_i (  \bar b_i(x)\partial_i + \sigma_i\partial_i^2)~.
\end{equ}
In our case, only $\sigma_5$ and $\sigma_8$, which correspond to
the variables $p_1$ and $p_4$, are non-zero.

\begin{lemma}\label{lem:lbar}
Consider a change of coordinates
$x\mapsto \bar{x}(x) = x + s(x)$, defined on some set $\Omega_0\subset \Omega$.
Assume that for all $j$, 
$$
L (\bar x_j) = \bar b_j(\bar x) + \epsilon_j(x)
$$
for some $\varepsilon_j$. Then, for any smooth function $h$, we
have
for all $x\in \Omega_0$ that  
\begin{equ}
L(h(\bar x)) = (\bar L h)(\bar{x}) + \zeta(x)~,
\end{equ}
where
\begin{equ}\label{eq:zetax}
\zeta(x) =  \sum_j (\partial_j h)(\bar{x}) \epsilon_j(x) + 2\sum_{i,k}\sigma_i
(\partial_{ik}h)(\bar{x})\partial_i s_k(x) + \sum_{i,j,k}\sigma_i
(\partial_{jk}
h)(\bar{x})\partial_i s_j(x)\partial_i s_k(x)~.  
\end{equ}
\end{lemma}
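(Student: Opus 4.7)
The plan is to compute $L(h(\bar x))$ directly using the chain rule and isolate the terms that reproduce $(\bar L h)(\bar x)$, with the leftover terms matching the expression given for $\zeta(x)$. This is a purely mechanical computation; there is no conceptual difficulty, and the only ``obstacle'' is keeping the bookkeeping of indices clean.

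First, I would apply the chain rule to each piece of $L$. For the first-order part, $\partial_i(h(\bar x)) = \sum_j (\partial_j h)(\bar x)\,\partial_i \bar x_j$, and for the second-order part,
\begin{equ}
\partial_i^2 (h(\bar x)) = \sum_{j,k}(\partial_{jk}h)(\bar x)\,\partial_i \bar x_j \,\partial_i \bar x_k + \sum_j (\partial_j h)(\bar x)\,\partial_i^2 \bar x_j~.
\end{equ}
Regrouping the terms obtained by substituting into $L = \sum_i (b_i \partial_i + \sigma_i \partial_i^2)$, the coefficient of $(\partial_j h)(\bar x)$ is exactly $\sum_i b_i \partial_i \bar x_j + \sum_i \sigma_i \partial_i^2 \bar x_j = L(\bar x_j)$, which by hypothesis equals $\bar b_j(\bar x) + \epsilon_j(x)$. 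Hence
\begin{equ}
L(h(\bar x)) = \sum_j (\partial_j h)(\bar x)\,\bar b_j(\bar x) + \sum_j (\partial_j h)(\bar x)\,\epsilon_j(x) + \sum_{i,j,k}\sigma_i (\partial_{jk}h)(\bar x)\,\partial_i \bar x_j\,\partial_i \bar x_k~.
\end{equ}

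The second step is to extract $(\bar L h)(\bar x)$. Using $\bar x_j = x_j + s_j(x)$, we have $\partial_i \bar x_j = \delta_{ij} + \partial_i s_j$, so
\begin{equ}
\partial_i \bar x_j \,\partial_i \bar x_k = \delta_{ij}\delta_{ik} + \delta_{ij}\partial_i s_k + \delta_{ik}\partial_i s_j + \partial_i s_j\,\partial_i s_k~.
\end{equ}
Multiplying by $\sigma_i (\partial_{jk}h)(\bar x)$, summing over $i,j,k$, and using symmetry of the Hessian to combine the two middle terms, this produces $\sum_i \sigma_i (\partial_i^2 h)(\bar x) + 2\sum_{i,k}\sigma_i(\partial_{ik}h)(\bar x)\,\partial_i s_k + \sum_{i,j,k}\sigma_i(\partial_{jk}h)(\bar x)\,\partial_i s_j\,\partial_i s_k$. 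The first of these three pieces, combined with $\sum_j (\partial_j h)(\bar x)\bar b_j(\bar x)$, assembles to $(\bar L h)(\bar x)$ exactly, since $\bar L$ and $L$ share the same $\sigma_i$. The remaining terms, together with $\sum_j (\partial_j h)(\bar x)\,\epsilon_j(x)$, reproduce the three-term expression \eqref{eq:zetax} for $\zeta(x)$, completing the proof.
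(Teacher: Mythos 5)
Your proof is correct and follows essentially the same route as the paper: chain rule on $h(\bar x)$, substitute the hypothesis $L(\bar x_j) = \bar b_j(\bar x) + \epsilon_j(x)$ to recognize the first-order part of $\bar L$, then expand $\partial_i\bar x_j\,\partial_i\bar x_k = (\delta_{ij}+\partial_i s_j)(\delta_{ik}+\partial_i s_k)$ to peel off the second-order part of $\bar L$ and collect the leftovers into $\zeta$. The only stylistic difference is that the paper's proof carries out the computation in the one-variable case and remarks that the multivariate case follows by the same pattern, whereas you perform the multivariate bookkeeping explicitly (correctly using Hessian symmetry to merge the two cross-terms into the factor of $2$); your version is if anything slightly more complete.
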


\begin{proof}We do the computation for the case of just one variable $x\in \real$.
Let $g(x) = \bar x(x) = x + s(x)$.
From the definition of $L$ and $\bar L$, and since by
assumption $Lg=\bar b \circ g + \epsilon $,
we find
\begin{equa}
L(h\circ g) &=  (h'\circ g)\cdot  L g +  \sigma \cdot ( h''\circ g )\cdot
{g'}^2\\
& =  (h'\circ g)  \cdot (\bar b \circ g + \epsilon)  +  \sigma\cdot (
h''\circ g)\cdot {g'}^2\\
& =  (\bar Lh)\circ g  +  ( h'\circ g )\cdot  \epsilon  + \sigma \cdot 
(h''\circ g)\cdot ( {g'}^2 - 1) \\
& =  (\bar Lh)\circ g  +  ( h'\circ g )\cdot  \epsilon  + \sigma \cdot 
(h''\circ g)\cdot ( 2s' + s'^2)~.
\end{equa}
The desired result follows from generalizing to the multivariate case.
\myqed\end{proof}

We consider now the following change of variables defined
on $\OB\cup\OC $: 
\begin{equa}[2][e:bar]
\bar q_1 = q_1~,&\qquad\quad \bar p_1 =p_1-{\frac {W_{\L} \left( q_2-q_1\right)
}{p_2}} = p_1 + \OO_c(p_2^{-1})~,\\
\bar q_2 = q_2~,&\qquad\quad \bar p_2 =p_2+{\frac {W_{\L} \left( q_2-q_1\right)
}{p_2}}-{\frac {W_\C
\left( q_3-q_2\right) }{p_3-p_2}} = p_2 + \OO_c(|p_2|^{-1/2})~,
\end{equa}
with analogous expressions for the indices 3, 4. Here, we have used
\lref{lem:propOAOBOC} (ii) to replace $W_\L/(p_3-p_2)$
with $\OO_c(|p_2|^{-1/2})$.
Straightforward computations show that, on $\OB\cup\OC $,
\begin{equa}[1][e:lpbar]
L(\bar p_1)  &= -\gamma_1 p_1+ \OO_c(p_2^{-1}) = - \gamma_1 \bar p_1 +
\OO_c(p_2^{-1})~,\\
 L(\bar q_1) & = p_1 =  \bar p_1 +\OO_c(p_2^{-1})~,\\
L(\bar p_2)  &= \OO_c(|p_2|^{-1}+(p_2-p_3)^{-2})= \OO_c(p_2^{-1})~,\\
L(\bar q_2)  &=  p_2 =  \bar p_2 + \OO_c(|p_2|^{-1/2})~,
\end{equa}
with similar expressions for the indices $3,4$ (we have again used \lref{lem:propOAOBOC} (ii)).

While one could choose a more refined change of variables
by going to higher orders, the change \eref{e:bar} is good enough
for our purpose.

\begin{lemma}\label{lem:passageOOxi}
Let $f\in \SS$. Then $f$ is $\OO_c(1)$. Moreover,
given any function $\xi: \Omega \to [0,1]$, we have that
$f\bigl(x + \xi(x)(\bar x -x)\bigr)
= \OO_c(1)$ on $\OB\cup\OC$. In particular, $f(\bar x) = \OO_c(1)$
on $\OB\cup\OC$.
\end{lemma}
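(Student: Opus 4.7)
The first assertion is immediate from the definitions: if $f\in\SS$ then $|f(x)|\le\vertiii f(1+p_1^2+p_4^2)$ for every $x\in\Omega$, and the polynomial $z(p_1,p_4)=\vertiii f(1+p_1^2+p_4^2)$ witnesses $f=\OO_c(1)$ (the restriction ``$\min(|p_2|,|p_3|)$ large enough'' in \dref{def:definition OOc} is not even needed here).

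For the second assertion the key observation is that only the $p_1$ and $p_4$ components of the argument matter for the $\vertiii{\cdot}$ bound. Writing $y = x+\xi(x)(\bar x - x)$, the plan is to read off from the explicit change of variables \eref{e:bar} (and its analogue for the indices $3,4$) that the $p_1$ and $p_4$ components of $\bar x-x$ are respectively $-W_\L(q_2-q_1)/p_2=\OO_c(p_2^{-1})$ and $-W_\R(q_3-q_4)/p_3=\OO_c(p_3^{-1})$. Since $\xi$ takes values in $[0,1]$, the $p_1$-component $y_5$ and $p_4$-component $y_8$ of $y$ therefore satisfy
\begin{equation*}
y_5 = p_1 + \OO_c(p_2^{-1}), \qquad y_8 = p_4 + \OO_c(p_3^{-1}),
\end{equation*}
uniformly in $\xi$.

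On $\OB\cup\OC$ both $|p_2|$ and $|p_3|$ are large (by \lref{lem:propOAOBOC} (ii), $|p_2-p_3|^{-1}$ is bounded there, so in particular $\min(|p_2|,|p_3|)$ can be taken as large as required for the above $\OO_c$ estimates to be valid). In particular, there exists a constant $C$ such that $|y_5-p_1|\le C$ and $|y_8-p_4|\le C$ on the region of interest, hence
\begin{equation*}
1+y_5^{\,2}+y_8^{\,2} \;\le\; c\bigl(1+p_1^2+p_4^2\bigr)
\end{equation*}
for a suitable constant $c$. Applying the bound $|f(y)|\le\vertiii f\,(1+y_5^{\,2}+y_8^{\,2})$ available for $f\in\SS$, we conclude $|f(y)|\le c\vertiii f\,(1+p_1^2+p_4^2)$, which is exactly $f(y)=\OO_c(1)$ on $\OB\cup\OC$. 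Specialising to $\xi\equiv1$ gives $f(\bar x)=\OO_c(1)$.

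There is no real obstacle: the argument is a bookkeeping exercise combining the definition of $\vertiii{\cdot}$ with the fact that the correction $\bar x-x$ is uniformly bounded on $\OB\cup\OC$ in the only two coordinates that enter the $\SS$-norm.
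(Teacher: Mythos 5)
Your proof is correct and follows the same route as the paper: observe that only the $p_1$ and $p_4$ components of the argument enter the $\SS$-norm, note that the corresponding corrections $-\xi(x)W_\L/p_2$ and $-\xi(x)W_\R/p_3$ in the change of variables \eref{e:bar} are bounded once $\min(|p_2|,|p_3|)$ is large (as always permitted in \dref{def:definition OOc}), and apply $|f(y)|\le\vertiii f\,(1+y_5^2+y_8^2)$. One small remark: the parenthetical appeal to \lref{lem:propOAOBOC} (ii) to justify that ``$\min(|p_2|,|p_3|)$ can be taken as large as required'' is both unnecessary and slightly off — that lemma controls $|p_2-p_3|^{-1}$, not $|p_2|$ or $|p_3|$ themselves, and the set $\OB\cup\OC$ does contain points with small momenta. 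The correct justification is simply that $\OO_c$-estimates, by definition, only need to hold for $\min(|p_2|,|p_3|)$ large; the paper implicitly uses exactly this and never invokes \lref{lem:propOAOBOC} in this proof.
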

\begin{proof} By assumption,
$|f(x)| \leq \vertiii{f}(1+p_1^2+p_4^2)$ on $\Omega$,
so that immediately $f = \OO_c(1)$. Moreover, 
$f\bigl(x + \xi(x)(\bar x -x)\bigr)$ is well-defined on
$\OB\cup\OC$, and
\begin{equs}
\left|f\bigl(x + \xi(x)(\bar x -x)\bigr)\right| &\leq \vertiii{f}\left(1+ \left(p_1 -
\xi(x)\frac{W_{\L}}{p_2}\right)^2 +\left(p_4
- \xi(x)\frac{W_{\R}}{p_3} \right)^2 \right)~,
\end{equs}
which is indeed a $\OO_c(1)$ on this set. The claim about $f(\bar x)$
follows from the choice $\xi \equiv 1$.
\myqed\end{proof}

\begin{proposition}\label{prop:changevarours}Let $f$ satisfy
the assumptions of \pref{prop:rbarref}, and consider the change of coordinates 
\eref{e:bar}. Let $h = \bar Kf$. Then, on the set $\OB \cup \OC$ (meaning
that we take $x\in \OB \cup \OC$, and not necessarily $\bar x\in \OB\cup \OC$), we have
$h(\bar x) = \OO_c(1)$ and
\begin{equs}
L (h(\bar x)) &= f( x) +  \OO_c(|p_2|^{-1/2} + |p_3|^{-1/2})~.
\end{equs}
\end{proposition}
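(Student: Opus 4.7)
The plan is to combine \lref{lem:lbar} (which compares $L$ with $\bar L$ through the change of variables), the identity $\bar L h = f$ from \pref{prop:rbarref}, and a first-order Taylor estimate for $f$ between $x$ and $\bar x$. Writing $\bar x = x + s(x)$ with $s$ read off from \eref{e:bar}, the key structural observation is that each component $s_k$ depends only on the $q_i$'s and on $p_2, p_3$, but \emph{not} on $p_1$ or $p_4$. Since the only non-zero coefficients $\sigma_i$ in \eref{eq:defLLbar} are $\sigma_5$ and $\sigma_8$ (attached to $\partial_{p_1}$ and $\partial_{p_4}$), all terms of the form $\sigma_i \partial_i s_k$ that enter the formula \eref{eq:zetax} vanish identically on $\OB\cup\OC$.

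With those two sums gone, \lref{lem:lbar} reduces to
\begin{equ}
L(h(\bar x)) = (\bar L h)(\bar x) + \sum_j (\partial_j h)(\bar x)\, \epsilon_j(x)~,
\end{equ}
where $\epsilon_j = L(\bar x_j) - \bar b_j(\bar x)$. From the explicit estimates in \eref{e:lpbar} and their analogues for the indices $3,4$, each $\epsilon_j$ is $\OO_c(|p_2|^{-1/2}+|p_3|^{-1/2})$ on $\OB\cup\OC$ (the worst case being $\epsilon_{\bar q_2}$, which comes from the $W_\C/(p_3-p_2)$ correction in $\bar p_2$ controlled via \lref{lem:propOAOBOC}~(ii)). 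By \pref{prop:rbarref}, the partial derivatives $\partial_j h$ all belong to $\SS$, hence by \lref{lem:passageOOxi} we have $(\partial_j h)(\bar x) = \OO_c(1)$. Therefore the error term is $\OO_c(|p_2|^{-1/2}+|p_3|^{-1/2})$, and $(\bar L h)(\bar x) = f(\bar x)$ by \pref{prop:rbarref}.

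It then remains to replace $f(\bar x)$ by $f(x)$. Using the integral form of Taylor's theorem,
\begin{equ}
f(\bar x) - f(x) = \sum_k s_k(x) \int_0^1 (\partial_k f)\bigl(x + t(\bar x - x)\bigr)\dd t~.
\end{equ}
From \eref{e:bar}, every $s_k$ is $\OO_c(|p_2|^{-1/2}+|p_3|^{-1/2})$ on $\OB\cup\OC$. The integrand is bounded by $\OO_c(1)$ on the same set by \lref{lem:passageOOxi} applied to $\partial_k f \in \SS_0 \subset \SS$ (with the choice $\xi(x) = t$ in that lemma, uniformly in $t\in[0,1]$). Consequently $f(\bar x) - f(x) = \OO_c(|p_2|^{-1/2}+|p_3|^{-1/2})$, and adding this to the previous display yields the advertised identity for $L(h(\bar x))$. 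The bound $h(\bar x) = \OO_c(1)$ follows directly from $h\in\SS$ (by \pref{prop:rbarref}) and one further application of \lref{lem:passageOOxi}.

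The only non-routine step is recognizing that the second-derivative terms in $\zeta(x)$ disappear because $s$ does not depend on $p_1$ or $p_4$; this is what keeps the error at order $|p_2|^{-1/2}+|p_3|^{-1/2}$ rather than being spoiled by diffusive contributions. Once this observation is made, the rest is book-keeping of remainders using \lref{lem:propOAOBOC} and \lref{lem:passageOOxi}.
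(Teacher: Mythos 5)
Your proposal is correct and follows essentially the same route as the paper's proof: apply \lref{lem:lbar}, observe that the second-derivative terms in $\zeta$ vanish because $s(x)$ does not depend on $p_1$ or $p_4$, use $\bar L h = f$ from \pref{prop:rbarref}, bound $(\partial_j h)(\bar x)\epsilon_j(x)$ via \lref{lem:passageOOxi}, and replace $f(\bar x)$ by $f(x)$ via a Taylor/mean-value argument controlled again by \lref{lem:passageOOxi}. The only cosmetic difference is that you use the integral form of Taylor's theorem where the paper invokes the mean value theorem, which is an equivalent step.
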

\begin{proof}
We use again the notations $x=(x_1, \dots, x_8) = (q_1, \dots, q_4, p_1, \dots,
p_4)$ and \eref{eq:defLLbar}.
We apply \lref{lem:lbar} with the coordinate change $\bar x = x + s(x)$ defined
by \eref{e:bar}. Then, the
$s_j$ are given by \eref{e:bar}, and the 
$\varepsilon_j$ are given by \eref{e:lpbar}. Observe then
that on $\OB\cup\OC$, all the $s_j$ and $\varepsilon_j$
and are at most $\OO_c(|p_2|^{-1/2})$ or $\OO_c(|p_3|^{-1/2})$.
The only non-zero $\sigma_i$ are
$\sigma_{5} = \gamma_1 T_1$ and $\sigma_{8} = \gamma_4 T_4$.
Moreover, $\partial_{x_5}s_j = \partial_{p_1} s_j =0$ for
all $j\in \{1, 2, \dots, 8\}$, and similarly
$\partial_{x_8}s_j = \partial_{p_4} s_j =0$.
Therefore, from \eref{eq:zetax} we are left with
$\zeta(x) = \sum_j (\partial_j h)(\bar{x}) \epsilon_j(x)$.
We now apply this to the function $h = \bar K f$. 
By \pref{prop:rbarref}, we have $\bar L h = f$, so that
\begin{equa}[e:LU1]
L(h(\bar{x})) & =  f(\bar x) +  \sum_j(\partial_j h)(\bar{x}) \epsilon_j(x)~.
\end{equa}
To obtain the desired results, it remains to make the following two
observations.
First, by the mean value theorem, there is for each $x$ some
$\xi(x) \in [0,1]$ such that on $\OB\cup\OC$,
\begin{equ}[eq:fbarxmfbar]
f(\bar x)-f(x) = \sum_j  s_j(x) (\partial_j f)(x+\xi(x)s(x)) =
\OO_c(|p_2|^{-1/2}+|p_3|^{-1/2})~,
\end{equ}
where we have applied \lref{lem:passageOOxi} to $\partial_jf$,
which is in $\SS$ by assumption.
Secondly, using \lref{lem:passageOOxi}
and the fact that $\partial_j h\in \SS$ by \pref{prop:rbarref}, we find 
\begin{equa}
\sum_j(\partial_j h)(\bar{x}) \epsilon_j(x) =\OO_c(|p_2|^{-1/2}+|p_3|^{-1/2})~,
\end{equa}
which, together with \eref{e:LU1} and \eref{eq:fbarxmfbar}, completes the
proof.
\myqed\end{proof}

We are now ready for the

\begin{proof}[Proof of \lref{lem:Hbloindiagonale}]
Let
\begin{equa}
U_1(q_1, \dots, q_3, p_1, \dots, p_3) &= \barR (p_1W_{\L}(q_2-q_1)
w_{\C}(q_3-q_2))~,\\\
U_4(q_2, \dots, q_4, p_2, \dots, p_4) &= \barR (-p_4W_{\R}(q_3-q_4)
w_{\C}(q_3-q_2))~,
\end{equa}
and
$$
\RBC(x)  =  \frac{U_1(\bar x)}{
p_2^2}  + \frac{U_4(\bar x)}{ p_3^2}~.
$$
That $U_1$ depends only on $(q_1, \dots, q_3, p_1, \dots, p_3)$ follows from
the independence of the four rotors under the decoupled dynamics.
Similarly for $U_4$.
It is easy to check that $f=p_1W_{\L}w_{\C}$ satisfies the assumptions of
\pref{prop:rbarref}: Since $\avg{f}_{1} = 0$, we also have
$\avg{\partial^{\underline a } f}_{1} = 0$ for each multi-index $\underline a$.
From this it follows that
${\bar \pi}_e(f) = 0$
and that ${\bar \pi}_e(\partial^{\underline a } f) = 0$, since
${\bar \pi}_e$ is uniform with respect to $q_1$. Since no powers of
$p_1$
or $p_4$ appear upon differentiation, we indeed obtain that $f$ and all its
derivatives are in $\SS_0$. A similar argument applies to
$f=-p_4W_{\L}w_{\C}$.
Therefore, applying \pref{prop:changevarours}, we find that on
the set $\OB\cup\OC$, the functions
$U_1(\bar x)$ and $U_4(\bar x)$ are $\OO_c(1)$,
and that
\begin{equs}[e:Lu1barx]
L(U_1(\bar{x})) &  = p_1W_{\L}w_{\C} + \OO_c(|p_2|^{-1/2}+|p_3|^{-1/2})~,\\
L(U_4(\bar{x})) & =  - p_4W_{\R} w_{\C} + \OO_c(|p_2|^{-1/2}+|p_3|^{-1/2})~.
\end{equs}
In \eref{e:Lu1barx}, the arguments of $W_\L, W_\R$ and $W_\C$ are 
indeed $x$ and not $\bar x$.
Finally, we have
\begin{equ}\label{e:lr23}
L\RBC  =  \frac{L(U_1(\bar x))}{  p_2^2}  + \frac{L(U_4(\bar x))}{ p_3^2} +
\OO_c(p_2^{-3})+ \OO_c(p_3^{-3})~.
\end{equ}
The main assertion of the lemma then follows from this, \eref{e:Lu1barx}, and
\lref{lem:repartitionpowers}. The assertion \eref{e:extderiveesr3} follows from
the definition of $R_{23}$ and the following observation:
using the explicit expression for $\bar x$, \pref{prop:rbarref} (i)
and \lref{lem:passageOOxi}, we obtain
$\partial_{p_1}(U_1(\bar x)) = (\partial_{p_1}U_1)(\bar x) = \OO_c(1)$, and
$\partial_{p_4}(U_1(\bar x)) = (\partial_{p_4}U_1)(\bar x) = 0$ (and similarly for $U_4$).
\myqed\end{proof}

\begin{remark}\label{rem:Teq0Harris} The construction above relies
on the strict positivity of the temperatures (which we assume throughout).
Nonetheless, it can be adapted to
the case $T_1 = T_4 = 0$. In this case,
the external rotors are not ergodic under the decoupled dynamics:
they deterministically slow down and asymptotically
reach a given position that depends on the initial condition.
Therefore, the conclusion of \lref{lem:esperancedecroitexp} does
not hold.
However, the counterterm $\RBC$ that we obtained still produces the desired
effect. Indeed, at zero temperature, the definition of $U_1$
becomes
$$
U_1(x) = - \int_0^\infty p_1(t)W_{\L}(q_2(t)-q_1(t))w_{\C}(q_3(t)-q_2(t))~,
$$
where $x(t)$ is the deterministic solution given in \eref{e:dynamiqueLbarreexpliciteouter}
and \eref{e:dynamiqueLbarreexplicitecentre}
with initial condition $x$ and $T_1=T_4=0$. Since $p_1(t)$ decreases
exponentially
fast and $W_\L w_\C$ is bounded, this integral still converges. A similar
argument applies to $U_4$.
\end{remark}

\section{Constructing a global Lyapunov function}\label{s:constrlyapunov}
We construct here the Lyapunov function
of \tref{prop:Lyapunov}. We start by fixing
the parameters defining the sets $\Omega_2, \Omega_3, \Omega_c$
and the functions $V_2, V_3, V_c$.

We assume throughout this section that $\theta$
is fixed and satisfies
\begin{equ}[eq:conditionbeta]
0 <\theta < \min\left(\frac 1{T_1},\frac 1{T_4}\right).
\end{equ}

This condition is necessary to apply \pref{c:Lelambdap2tilde}
and \pref{cor:borneOmegac}. In addition, it guarantees
that when $p_1^2+p_4^2$
is large, $\exp(\theta H)$  decreases very fast:
\begin{lemma}\label{l:LebetaH} There are
constants $\CstEbha, \CstEbhb>0$ such that
\begin{equ}\Label{e:LbetaHseul}
Le^{\theta H} \leq  (\CstEbha - \CstEbhb(p_1^2 + p_4^2))e^{\theta H}~.
\end{equ}
\end{lemma}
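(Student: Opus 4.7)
The plan is straightforward and rests on a direct computation. First I would observe that the Hamiltonian part of $L$, namely
\begin{equation*}
L_H = \sum_{i=1}^4 p_i \partial_{q_i} + w_\L(\partial_{p_1}-\partial_{p_2}) + w_\C(\partial_{p_2}-\partial_{p_3}) + w_\R(\partial_{p_4}-\partial_{p_3}),
\end{equation*}
annihilates any smooth function of $H$. This is because the deterministic flow generated by $L_H$ is exactly the Hamiltonian flow of $H$ (one checks that $\partial_{q_1} H = -w_\L$, $\partial_{q_2} H = w_\L - w_\C$, etc.), so $L_H H = 0$ and hence $L_H e^{\theta H} = \theta e^{\theta H}(L_H H) = 0$.

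Therefore only the thermostat terms contribute, giving
\begin{equation*}
Le^{\theta H} = \sum_{b=1,4}\bigl(-\gamma_b p_b \partial_{p_b} + \gamma_b T_b \partial_{p_b}^2\bigr) e^{\theta H}.
\end{equation*}
Using $\partial_{p_b}e^{\theta H} = \theta p_b e^{\theta H}$ and $\partial_{p_b}^2 e^{\theta H} = (\theta + \theta^2 p_b^2)e^{\theta H}$, this becomes
\begin{equation*}
Le^{\theta H} = \sum_{b=1,4}\bigl[\gamma_b \theta T_b - \gamma_b \theta(1-\theta T_b) p_b^2\bigr] e^{\theta H}.
\end{equation*}

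By hypothesis \eref{eq:conditionbeta}, for $b=1,4$ we have $1 - \theta T_b > 0$, so each coefficient in front of $p_b^2$ is strictly negative. Setting $\CstEbha = \gamma_1 \theta T_1 + \gamma_4 \theta T_4$ and $\CstEbhb = \min_{b=1,4}\gamma_b \theta(1-\theta T_b) > 0$ yields the claimed bound. There is no genuine obstacle here: the only point worth double-checking is the sign conventions in $L_H H = 0$, which follow immediately from the specific form of $H$ and the argument convention $W_\R = W_\R(q_3-q_4)$ stated in the paper.
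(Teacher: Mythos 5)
Your proof is correct and essentially identical to the paper's one-line argument: both compute $Le^{\theta H} = \sum_{b=1,4}\bigl(-\gamma_b\theta(1-\theta T_b)p_b^2 + \gamma_b\theta T_b\bigr)e^{\theta H}$ directly (using that the Hamiltonian part of $L$ annihilates $e^{\theta H}$) and conclude via the standing hypothesis $\theta < \min(1/T_1, 1/T_4)$.
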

\begin{proof} Since $
Le^{\theta H} = \sum_{b=1,4}  \left(-\gamma_b\theta (1-\theta T_b)
p_b^{2}+\gamma_b\theta
T_b\right)e^{\theta H}
,$ the result follows from the condition on $\theta$.
\myqed\end{proof}

We next choose the constants $k$, $\ell$, $a$, $m$, and finally $R$.
First, we fix $k$ large enough, and require a lower bound $R_0$ on $R$
so that the conclusions of \pref{c:Lelambdap2tilde} hold on
$\Omega_j(k, R)$, $j=2,3$.
We then fix the parameters $a$ (appearing in $V_2$, $V_3$) 
and $\ell$ such that
\begin{equ}[eq:condition2lak]
\frac 2{\ell} < a < \frac 2{k}~.
\end{equ}
As a consequence, $\Omega_c(\ell, m, R)$ now depends only on $m$ and $R$,
which we fix large enough so that \pref{cor:borneOmegac} applies,
and so that $m>\ell$ and $R\geq R_0$.

This choice satisfies the condition 
$1\leq  k < \ell < m$ imposed in \eref{eq:conditionklm}. This ensures that
the sets $\Omega_j$ ($j=2,3$) and $\Omega_c$ have ``large'' intersections,
and that they indeed look as shown in \fref{fig:intersectregionssphere}
and \fref{fig:intersectregionsp2p3}.
Moreover, condition \eref{eq:condition2lak} ensures that for large $|p_j|$,
$j=2,3$,
$$
|p_j|^{2/\ell } \ll |p_j|^a \ll |p_j|^{2/k}~,
$$
which will be crucial.

We next introduce smooth cutoff functions for the sets
$\Omega_2, \Omega_3, \Omega_c$. For this, we consider
for each set a thin ``boundary layer'' included in the set itself.

\begin{definition}Let $\Pnull$ be a subset of the momentum space $\real^4$.
We define $\BB(\Pnull) = \{p\in \Pnull:
{\rm dist}(p,\Pnull^c)<1\}$.
\end{definition}

\begin{lemma}\label{l:setscutoff}Let $\Pnull \subset \real^4$.
Then, there is a smooth function $\psi:\real^4 \to[0,1]$ with the following
properties. First,
$\psi(p)=1$ on
$\Pnull\setminus\mathcal B(\Pnull)$ and $\psi(p)=0$ on 
$\Pnull^c$, with some interpolation on $\mathcal B(\Pnull)$.
Secondly, $\partial^{\underline a} \psi$
is bounded on $\mathbb R^4$ for each multi-index $\underline a$.
\end{lemma}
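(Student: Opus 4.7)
The plan is the standard mollification construction. I will take the indicator function of an auxiliary set that is sandwiched strictly between $\Pnull\setminus\BB(\Pnull)$ and $\Pnull$, then convolve with a smooth compactly supported mollifier of sufficiently small radius, so that the resulting function is identically $1$ on the inner set, identically $0$ on the outer complement, and smooth everywhere with uniformly bounded derivatives of all orders.

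More precisely, fix once and for all a nonnegative $\eta\in\CC^\infty(\real^4)$ with $\mathrm{supp}(\eta)\subset\{v\in\real^4:|v|\leq 1/4\}$ and $\int_{\real^4}\eta(v)\dd v=1$. Define the intermediate set
\begin{equ}
	A \;=\; \{\,p\in\real^4 \,:\, \mathrm{dist}(p,\Pnull^c)\geq 1/2\,\}~,
\end{equ}
which is contained in $\Pnull$, and put
\begin{equ}
	\psi(p) \;=\; (\eta \ast \ind_A)(p) \;=\; \int_{\real^4}\eta(p-q)\,\ind_A(q)\dd q~.
\end{equ}
Clearly $\psi:\real^4\to[0,1]$ and $\psi\in\CC^\infty(\real^4)$ (convolution of a bounded function with a smooth compactly supported one). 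The two boundary conditions then follow from the triangle inequality applied to the distance function $\mathrm{dist}(\cdot,\Pnull^c)$: if $p\in\Pnull\setminus\BB(\Pnull)$, i.e.\ $\mathrm{dist}(p,\Pnull^c)\geq 1$, then every $q$ with $|p-q|\leq 1/4$ satisfies $\mathrm{dist}(q,\Pnull^c)\geq 3/4\geq 1/2$, so $q\in A$ and hence $\psi(p)=\int\eta=1$; conversely, if $p\in\Pnull^c$, then every $q$ with $|p-q|\leq 1/4$ satisfies $\mathrm{dist}(q,\Pnull^c)\leq 1/4<1/2$, so $q\notin A$ and $\psi(p)=0$.

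It remains to verify the bound on the derivatives. Since $\ind_A\in L^\infty(\real^4)$ and derivatives commute with convolution with a compactly supported smooth function, for each multi-index $\underline a$,
\begin{equ}
	(\partial^{\underline a}\psi)(p) \;=\; \bigl((\partial^{\underline a}\eta)\ast \ind_A\bigr)(p)~,
\end{equ}
so $|\partial^{\underline a}\psi(p)|\leq \|\partial^{\underline a}\eta\|_{L^1(\real^4)}$ uniformly in $p\in\real^4$. No obstacle arises: the construction is entirely routine, and the only mild subtlety is the choice of the intermediate radius $1/2$ (to give room for the mollifier of radius $1/4$) so that both boundary conditions are simultaneously satisfied.
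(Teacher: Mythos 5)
Your proof is correct and follows the same approach as the paper, which simply says that $\psi$ is obtained "by appropriately regularizing the characteristic function of the set $\{p\in\Pnull:\mathrm{dist}(p,\Pnull^c)>1/2\}$"; you have made the mollification and the triangle-inequality checks explicit, and the derivative bound $|\partial^{\underline a}\psi|\leq\|\partial^{\underline a}\eta\|_{L^1}$ is exactly why this construction gives uniformly bounded derivatives.
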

\begin{proof}
Such a function is obtained 
by appropriately regularizing the characteristic function of the set
$\{p\in \Pnull: {\rm dist}(p, \Pnull^c) > 1/2\} \subset \real^4$. 
\myqed\end{proof}

Since the definition of sets $\Omega_c$ and $\Omega_j$, $j=2,3$,
involves only the momenta, we can write $\Omega_c = \torus^4 \times \Pnull_c$
and $\Omega_j = \torus^4 \times \Pnull_j$ for some sets $\Pnull_c, \Pnull_j
\subset \real^4$. We  apply \lref{l:setscutoff}
to $\Pnull_c,\Pnull_2$ and $\Pnull_3$, and denote by  $\psi_c, \psi_2$, and $\psi_3$
the functions obtained. We introduce also the sets
\begin{equ}
\BB(\Omega_c) =	\torus^4 \times \BB(\Pnull_c), \quad  \BB(\Omega_2) =	\torus^4 \times \BB(\Pnull_2),
\quad  
\BB(\Omega_3) =	\torus^4 \times \BB(\Pnull_3)~.
\end{equ}

Obviously, $\BB(\Omega_c) \subset \Omega_c$ and $\BB(\Omega_j) \subset \Omega_j$.

\begin{proof}[Proof of \tref{prop:Lyapunov}]
We show that the Lyapunov function
$$
V = 1+ e^{\theta H} + \sum_{j=2,3}\psi_j(p)V_j + \Y\psi_c(p)V_c
$$
has the necessary properties, provided that the constant $\Y$ is large
enough. We start by proving \eref{e:ineqvn}. From \eref{e:ptij} and
\eref{e:bornesVcOmegac},
we immediately obtain the bound
\begin{equ}\label{e:ineqvnre}
1+e^{\theta H}\leq V\leq c(\psi_2 e^{|p_2|^a}+\psi_3 e^{|p_3|^a} +
\psi_c\cdot(p_2^2+p_3^2))e^{\theta H}~,
\end{equ}
which is slightly sharper than \eref{e:ineqvn}. 
We next turn to the bound on $LV$. We introduce
the set
\begin{equ}[eq:defG]
G=\{x\in \Omega : p_1^2+p_4^2 < (1+\CstEbha)/\CstEbhb \}~,
\end{equ} 
with $\CstEbha$, $\CstEbhb$ as in \lref{l:LebetaH}, so that\begin{equ}[eq:LethetaindG]
Le^{\theta H} \leq -e^{\theta H} + (1+\CstEbha)\ind_{G}e^{\theta H} \leq-e^{\theta
H} + \CstGa \ind_{G} e^{\frac \theta 2 (p_2^2 + p_3^2)}
\end{equ}
for some $\CstGa > 0$, where we have used that
$H \leq c+ \frac{p_2^2}2 + \frac{p_3^2}2$ on $G$.
Moreover, observe that for $j=2,3$, there is a polynomial $z_j(p)$ such that
\begin{equs}[eq:LpsijVjpolyn]
L(\psi_jV_j)& = \psi_j LV_j + V_j L\psi_j + 2\sum_{b=1,4} \gamma_b T_b
(\partial_{p_b}\psi_j)(\partial_{p_b}V_j)\\
& \leq -\CstOmegaid \psi_j p_j^{-2} e^{|p_j|^a + \frac \theta 2  p_j^2} +  V_j
L\psi_j + 2\sum_{b=1,4} \gamma_b T_b
(\partial_{p_b}\psi_j)(\partial_{p_b}V_j)\\
& \leq  (-\ind_{\Omega_j}\CstOmegaid p_j^{-2} + \ind_{\mathcal B(\Omega_j)}
z_j(p))e^{|p_j|^a + \frac \theta 2  p_j^2}~,
\end{equs}
where the first inequality follows from \eref{e:Lfp2tilde} and the second
inequality holds
because the derivatives of $\psi_j(p)$ have support on $\BB(\Omega_j)$,
because 
$|\psi_j-\ind_{\Omega_j}| \leq \ind_{\BB(\Omega_j)}$, and because of
\eref{e:ptij}.
Similarly, using \pref{cor:borneOmegac}, we obtain a polynomial $z_c(p)$ such
that on $\Omega$
\begin{equs}[eq:LpsicVcpoly]
L(\psi_c(p)V_c) & \leq (-\ind_{\Omega_c}\CstOmegacd + \ind_{\mathcal
B(\Omega_c)}z_c(p) )e^{\frac \theta 2 (p_2^2 + p_3^2)}~.
\end{equs}
Combining \eref{eq:LethetaindG}, \eref{eq:LpsijVjpolyn} and
\eref{eq:LpsicVcpoly}, we find
\begin{equs}[eq:premierLV]
LV &\leq  -e^{\theta H}  -\sum_{j=2,3}\ind_{\Omega_j}\CstOmegaid
p_j^{-2}e^{|p_j|^a + \frac \theta 2  p_j^2}-\ind_{\Omega_c}\Y\CstOmegacd
e^{\frac \theta 2 (p_2^2 + p_3^2)} \\
&\quad +  \CstGa\ind_{G}e^{\frac \theta 2 (p_2^2 + p_3^2)} +
\sum_{j=2,3}\ind_{\mathcal B(\Omega_j)} z_j(p)e^{|p_j|^a + \frac \theta 2 
p_j^2}
+ \Y \ind_{\mathcal B(\Omega_c)}z_c(p)e^{\frac \theta 2 (p_2^2 + p_3^2)}~.
\end{equs}

The first line contains the ``good'' terms. We next show that these terms
dominate the others. Let $\varepsilon > 0$.
We claim that there is a (large) compact set $K$ (which depends on $\varepsilon$) such that
\begin{equ}[eq:indOmegaJleq]
\ind_{\mathcal B(\Omega_j)}z_j(p)e^{|p_j|^a + \frac \theta 2  p_j^2}
\leq \epsilon e^{\theta H} + c \ind_K~,\quad j=2,3~,
\end{equ}
\begin{equ}[eq:indOmegacleq]
\ind_{\mathcal B(\Omega_c)} z_c(p)e^{\frac \theta 2 (p_2^2 + p_3^2)}
 \leq \epsilon
e^{\theta H} + \epsilon \sum_{j=2,3}\ind_{\Omega_j} p_j^{-2}e^{|p_j|^a + \frac
\theta 2  p_j^2} + c\ind_K~,
\end{equ}
\begin{equ}[eq:indG]
\ind_G e^{\frac \theta 2 (p_2^2 + p_3^2)} \leq
\ind_{\Omega_c}e^{\frac \theta 2 (p_2^2 + p_3^2)} + \epsilon\sum_{j=2,3}\ind_{\Omega_j} p_j^{-2}e^{|p_j|^a + \frac \theta 2  p_j^2}
 +c\ind_K~.
\end{equ}
We prove these bounds one by one.
\begin{itemize}
\item Proof of \eref{eq:indOmegaJleq}. 
We prove the bound for $j=2$. First observe that 
\begin{equ}[eq:ethetagc]
z_2(p)e^{|p_2|^a + \frac \theta 2  p_2^2 -  \theta H} < c z_2(p) e^{|p_2|^a -\frac {\theta}{2}(p_1^2  + p_3^2 + p_4^2) }~.
\end{equ}
By the definition of $\Omega_2$, when $\|p\|\to
\infty$ in $\mathcal B(\Omega_2)$, we find $p_1^2  + p_3^2 + p_4^2 \sim |p_2|^{2/k} \gg |p_2|^a$
(recalling that $2/k > a$). Thus, the right-hand side of \eref{eq:ethetagc}
vanishes in this limit, since $z_2$ is only a polynomial. This implies 
\eref{eq:indOmegaJleq} if $K$ is large enough.

\item Proof of \eref{eq:indOmegacleq}. 
By inspection of the definition \eref{e:omegac} of $\Omega_c$, there are
three regions $B(\Omega_c)_i$, $i=1,2,3$, such that if the compact set $K$
is large enough,
$$
\mathcal B(\Omega_c) \subset K\cup \mathcal B(\Omega_c)_1\cup\mathcal B(\Omega_c)_2 \cup\mathcal B(\Omega_c)_3~,
$$
where $\mathcal B(\Omega_c)_1$ is such that $p_2^2 + p_3^2 \sim
(p_1^2+p_4^2)^m$, where $\mathcal B(\Omega_c)_2 \subset \Omega_2$
is such that $p_3^{2\ell} \sim  p_2^2$,
and where $\mathcal B(\Omega_c)_3 \subset \Omega_3$ is such that $p_2^{2\ell} \sim  p_3^2$
(see \fref{fig:intersectregionssphere} and \fref{fig:intersectregionsp2p3}).

Since $ z_c(p)e^{\frac \theta 2 (p_2^2 + p_3^2)}$ is bounded on compact sets,
\eref{eq:indOmegacleq} trivially holds on $K$.
We next turn to $\mathcal B(\Omega_c)_1$. We have 
$$
z_c(p)e^{\frac \theta 2 (p_2^2 +p_3^2) - \theta H} < c z_c(p)e^{-\frac \theta 2 (p_1^2 +p_4^2) }~.
$$
The right-hand side vanishes when $\|p\| \to \infty$ in $\mathcal B(\Omega_c)_1$, 
and thus by enlarging $K$ if necessary, we find $ z_c(p)e^{\frac \theta 2 (p_2^2 + p_3^2)}
 \leq \epsilon
e^{\theta H}+c\ind_K$ on  $\mathcal B(\Omega_c)_1$, which implies \eref{eq:indOmegacleq}.

Now, consider $\mathcal B(\Omega_c)_2$. We have
$$
\frac{z_c(p)e^{\frac \theta 2 (p_2^2 +p_3^2)}}{p_2^{-2}e^{|p_2|^a + \frac \theta 2  p_2^2}}
 = z_c(p)p_2^2e^{\frac \theta 2 p_3^2 - |p_2|^a }~.
$$
As $\|p\| \to \infty$ in $\mathcal B(\Omega_c)_2$, the right-hand side vanishes,
since $p_3^{2} \sim  |p_2|^{2/\ell}$ and $a>2/\ell$. Therefore,
$ z_c(p)e^{\frac \theta 2 (p_2^2 + p_3^2)}
 \leq\epsilon p_2^{-2}e^{|p_2|^a + \frac
\theta 2  p_2^2} + c\ind_K$ on  $\mathcal B(\Omega_c)_2$ for large enough $K$,
and thus \eref{eq:indOmegacleq}
holds on $\mathcal B(\Omega_c)_2$ since $\mathcal B(\Omega_c)_2 \subset \Omega_2$.

A similar argument applies for $\mathcal B(\Omega_c)_3$, which completes the
proof of \eref{eq:indOmegacleq}.

\item Proof of \eref{eq:indG}. Observe that
for $K$ large enough, the set $G$ defined in \eref{eq:defG} verifies
$$
G \subset K \cup \Omega_2\cup\Omega_c \cup\Omega_3~.
$$
On $K$ and $\Omega_c$, \eref{eq:indG} holds trivially.
On $G \cap \Omega_2 \setminus \Omega_c$ and for large enough $\|p\|$, we have
$p_3^2 \leq |p_2|^{2/\ell}$ (otherwise we would have $x\in \Omega_c$), and therefore
$$
\frac{e^{\frac \theta 2 (p_2^2 +p_3^2)}}{p_2^{-2}e^{|p_2|^a + \frac \theta 2  p_2^2}}
 = p_2^2e^{\frac \theta 2 p_3^2 - |p_2|^a } \leq   p_2^2e^{\frac \theta 2 |p_2|^{2/\ell}- |p_2|^a } ~.
$$
Since $a>2/\ell$,  and by enlarging $K$ if necessary, we have $e^{\frac \theta 2 (p_2^2 + p_3^2)} \leq \epsilon p_2^{-2}e^{|p_2|^a + \frac \theta 2  p_2^2}
 +c\ind_K$ on $G \cap \Omega_2 \setminus \Omega_c$, so that \eref{eq:indG} holds on this set.
Since a similar argument applies in $G \cap \Omega_3 \setminus \Omega_c$, the proof of  \eref{eq:indG} is complete.

\end{itemize}
Substituting \eref{eq:indOmegaJleq}, \eref{eq:indOmegacleq}, and
\eref{eq:indG}
into \eref{eq:premierLV}, we find
\begin{equs}
LV &\leq  -(1-2\varepsilon-M\varepsilon)e^{\theta H}
-\sum_{j=2,3}\ind_{\Omega_j}(\CstOmegaid - \CstGa\varepsilon-M\varepsilon)p_j^{-2}e^{|p_j|^a + \frac
\theta 2  p_j^2}\\
&\qquad -\ind_{\Omega_c}(\Y\CstOmegacd - \CstGa)e^{\frac \theta 2 (p_2^2 + p_3^2)}
+c\ind_K~.
\end{equs}
Since the constants $C_i$ do not depend on $\varepsilon$ and $M$,
we can make the three parentheses $(1-2\varepsilon-M\varepsilon)$,
$(\CstOmegaid - \CstGa\varepsilon-M\varepsilon)$ and $(\Y\CstOmegacd - \CstGa)$
positive by choosing $M$ large enough and then $\varepsilon$ small enough.
Using again  \eref{e:ptij} and \eref{e:bornesVcOmegac}, we finally obtain
\begin{equs}[eq:Lvavecdenom]
LV & \leq  -ce^{\theta H} 
-c\sum_{j=2,3}\ind_{\Omega_j}\frac{V_j}{p_j^2}-c\ind_{\Omega_c}\frac{V_c}{p_2^2
+p_3^2}+c\ind_K~.
\end{equs}
We now show that this implies \eref{e:ineqLvn}. Observe that since $V \geq
e^{\theta
H}$, we have $\log V \geq \theta H$, and therefore, for $j=2,3$,
$$
p_j^2 \leq p_2^2 + p_3^2 \leq cH+c \leq c\log V + c \leq c(\log V + 2)~.
$$
Since also $-e^{\theta H} \leq -e^{\theta H}/(2+\log V)$, we obtain by
\eref{eq:Lvavecdenom} that
\begin{equs}
LV & \leq  -c\frac{e^{\theta H} 
+\ind_{\Omega_2}{V_2}+\ind_{\Omega_3}{V_3}+\ind_{\Omega_c}{V_c}}{2+\log
V}+c\ind_K = -\frac {c V}{2+\log(V)} + c\ind_K~,
\end{equs}
which, by the definition \eref{e:defphi} of $\phi$, proves \eref{e:ineqLvn}.
\myqed\end{proof}

\section{Proof of \tref{thm:mainthm}}\label{s:controltechnical}

Now that we have a Lyapunov function (\tref{prop:Lyapunov}), we
can prove \tref{thm:mainthm} in the spirit of \cite{cuneo_nonequilibrium_2014}.
In addition to \tref{prop:Lyapunov}, we need a few other ingredients.

We first use the result of \cite{douc_subgeometric_2009} about 
subgeometric ergodicity. We state it here
in a simplified form. For a definition of 
``irreducible skeleton'' and ``petite set'', see the introduction of
\cite{douc_subgeometric_2009} or \cite[Section 2]{cuneo_nonequilibrium_2014}.

\begin{theorem}[Douc-Fort-Guillin (2009)]\label{th:douc}
Assume that a skeleton of the process \eref{e:SDE} is 
irreducible and let $V:\Omega\rightarrow [1,\infty)$ be a smooth
function with $\lim_{\| p\| \to \infty}V( q,  p) = +\infty$.
If there are a petite set $K$ and a constant $C$ such that $LV \leq C \ind_K -
\phi(V)$
for some differentiable, concave and increasing
function $\phi:[1,\infty)\rightarrow(0,\infty)$, 
then the process admits
a unique invariant measure $\pi$, and for any $z\in [0,1]$, there exists a
constant $C'$ such that for all $t\geq 0$ and all $x\in \Omega$,
\begin{equ}\label{eq:convergencedouc}
\|P^t(x, \argcdot) - \pi \|_{(\phi \circ V)^{z}} \leq	g(t) C' V(x)~,
\end{equ}
where $g(t)=(\phi\circ H_\phi^{-1}(t))^{z-1}$,
with $H_\phi(u)=\int_1^u \frac{\dd s}{\phi(s)}$.
\end{theorem}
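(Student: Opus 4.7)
The plan is to reproduce the subgeometric ergodicity argument of \cite{douc_subgeometric_2009} in three stages: existence and uniqueness of $\pi$, a modulated moment bound, and a coupling estimate combining the two into \eref{eq:convergencedouc}. For step (i), I would apply Dynkin's formula to $V$ along $\tau_K\wedge n$ (with $\tau_K$ the first return to $K$ after some fixed delay) and pass to the limit via Fatou to get
\[
\mathbb E_x\!\!\int_0^{\tau_K}\!\phi(V(X_s))\,ds\le V(x).
\]
Since $\phi\ge\phi(1)>0$, this forces $\mathbb E_x[\tau_K]<\infty$. Combined with petiteness of $K$ for the irreducible skeleton, the Meyn--Tweedie criterion gives positive Harris recurrence, hence a unique invariant probability measure $\pi$.

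Step (ii), the modulated moment bound, is the technical heart. The concavity and positivity of $\phi$ make $H_\phi$ an increasing concave bijection of $[1,\infty)$ whose inverse satisfies $(H_\phi^{-1})'(t)=\phi(H_\phi^{-1}(t))$; thus $r(t):=\phi\circ H_\phi^{-1}(t)$ is the intrinsic rate scale associated with the drift. Exploiting the tangent inequality $\phi(V(X_s))\le \phi(r(s))+\phi'(r(s))(V(X_s)-r(s))$ coming from concavity of $\phi$, one shows that an $r$-weighted functional of $V(X_t)$ is a supermartingale up to boundary terms supported on $K$. A Young splitting with dual pair $(s^z,s^{1-z})$ then separates the singular and regular parts of $\phi(V)$, and after rearranging one obtains
\[
\mathbb E_x\!\left[(\phi\circ V)^z(X_t)\right]\le C'\,V(x)\,g(t),\qquad g(t)=(\phi\circ H_\phi^{-1}(t))^{z-1}.
\]

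For step (iii), Nummelin's splitting applied to the petite set $K$ yields a coupling $(X_t,Y_t)$ with $Y_0\sim\pi$ and coupling time $T$ whose tail is governed by the hitting-time moments of step (i). The coupling inequality
\[
\|P^t(x,\argcdot)-\pi\|_{(\phi\circ V)^z}\le \mathbb E_x\!\left[(\phi\circ V)^z(X_t)\ind_{T>t}\right]+\mathbb E_\pi\!\left[(\phi\circ V)^z(Y_t)\ind_{T>t}\right],
\]
combined with Cauchy--Schwarz, the moment bound from step (ii), and the tail estimate on $T$, produces \eref{eq:convergencedouc}. The main obstacle is step (ii): concavity of $\phi$ rules out any exponential contraction, so the rate $g$ must be extracted by matching the intrinsic time scale $H_\phi$ against the Young pair $(s^z,s^{1-z})$; this is precisely where the exponent $z-1$ in the definition of $g$ emerges, and where most of the delicate work in \cite{douc_subgeometric_2009} is concentrated.
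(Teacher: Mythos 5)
The paper does not reprove this statement at all: its ``proof'' is a two-line citation identifying \eref{eq:convergencedouc} as the combination of Theorems 3.2 and 3.4 of \cite{douc_subgeometric_2009}, applied with the inverse Young pair $\Psi_1(s)\propto s^{1-z}$, $\Psi_2(s)\propto s^{z}$. Your plan to rebuild the whole subgeometric machinery from scratch is therefore a genuinely different (and much heavier) route; that is legitimate in principle, and your step (i) (Dynkin up to $\tau_K\wedge n$, $\mathbb E_x\int_0^{\tau_K}\phi(V(X_s))\,ds\le V(x)$, hence finite return times and positive Harris recurrence via petiteness of $K$ and the irreducible skeleton) is the standard and correct opening. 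But the sketch then contains a step that fails as stated.

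The gap is your step (ii). The bound $\mathbb E_x\bigl[(\phi\circ V)^z(X_t)\bigr]\le C'\,V(x)\,g(t)$ with $g(t)=(\phi\circ H_\phi^{-1}(t))^{z-1}\to 0$ cannot hold: since $\phi\circ V\ge \phi(1)>0$, the left-hand side is bounded below by $\phi(1)^z>0$ for all $t$ (and in fact converges to $\pi\bigl((\phi\circ V)^z\bigr)>0$), so no such decaying bound on the plain expectation exists. What the drift condition actually yields, and what the argument of \cite{douc_subgeometric_2009} uses, are modulated moments up to the return time, e.g. $\mathbb E_x\int_0^{\tau_K}r(s)\,ds\lesssim V(x)$ alongside $\mathbb E_x\int_0^{\tau_K}\phi(V(X_s))\,ds\le V(x)$, where $r(t)=\phi\circ H_\phi^{-1}(t)$; the exponent $z$ only enters in your step (iii), on the coupling event. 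Concretely, one bounds $\mathbb E_x\bigl[(\phi\circ V)^z(X_t)\,\ind_{T>t}\bigr]$ by H\"older with exponents $1/z$ and $1/(1-z)$ (Cauchy--Schwarz, as you invoke it, only covers $z=1/2$) between a $t$-uniform estimate $\mathbb E_x\bigl[\phi(V(X_t))\,\ind_{T>t}\bigr]\lesssim V(x)$ and the total-variation tail $\mathbb P_x(T>t)\lesssim V(x)/r(t)$ coming from the modulated hitting-time moments; this is precisely where the factor $g(t)=r(t)^{z-1}$ and the prefactor $V(x)=V(x)^{z}V(x)^{1-z}$ emerge. So the interpolation must be performed on the coupled difference (or on the event $\{T>t\}$), never on $\mathbb E_x[(\phi\circ V)^z(X_t)]$ itself; with that repair your outline matches the cited proof, which is exactly why the paper simply quotes \cite{douc_subgeometric_2009} rather than redoing it.
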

\begin{proof}
This is a combination of \cite[Theorems 3.2 and 3.4]{douc_subgeometric_2009}
for the following ``inverse Young's functions'' (in the language of \cite{douc_subgeometric_2009}):
 $\Psi_1(s) \propto s^{z-1}$ 
and $\Psi_2(s) \propto s^{z}$.
\myqed\end{proof}

\tref{prop:Lyapunov} provides most of the input to \tref{th:douc}, 
but we still need to check that there is an irreducible skeleton, and
that the set $K$ of \tref{prop:Lyapunov} is petite. To this end, we introduce,
as in  \cite{cuneo_nonequilibrium_2014},
\begin{proposition}\label{prop:controlstart}
The following holds.
\begin{myenum}
	\item[(i)] The transition probabilities $P^t(x, \dd y)$ have a
	$\CC ^\infty((0, \infty)\times \Omega\times \Omega)$ density
	$p_t(x, y)$ and the process is strong Feller.
	\item[(ii)] The time-1 skeleton chain $(x_{n})_{n=0, 1, 2, \cdots}$ admits
	the Lebesgue measure on $(\Omega, \mathcal B)$ as
	a maximal irreducibility measure.
	\item[(iii)] All compact subsets of $\Omega$ are petite.
\end{myenum}
\end{proposition}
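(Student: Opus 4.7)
The plan is to establish the three parts in sequence, relying on the hypoelliptic structure of the generator $L$ defined in \eref{e:genl}.

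For (i), I would verify H\"ormander's parabolic bracket condition. Writing $L$ in the form $X_0 + \frac{1}{2}(Y_1^2 + Y_4^2)$ with $Y_b = \sqrt{2\gamma_b T_b}\,\partial_{p_b}$ for $b \in \{1,4\}$ and $X_0$ the drift (the Stratonovich correction is trivial since the $Y_b$ have constant coefficients), a direct bracket computation gives $[Y_1, X_0] \propto \partial_{q_1} - \gamma_1 \partial_{p_1}$, granting access to $\partial_{q_1}$. Then $[\partial_{q_1}, X_0] = -w_\L'(q_2-q_1)(\partial_{p_1}-\partial_{p_2})$, and more generally the iterated brackets $[\partial_{q_1},[\partial_{q_1},\dots,X_0]]$ produce $w_\L^{(k)}(q_2-q_1)(\partial_{p_1}-\partial_{p_2})$ for every $k \geq 1$. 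By \aref{as:assumptioncoupling}, at each point some $w_\L^{(k)}(q_2-q_1)$ is non-zero, so the direction $\partial_{p_1}-\partial_{p_2}$, and hence $\partial_{p_2}$, lies in the Lie algebra generated at every $x$; then $[\partial_{p_2}, X_0]$ supplies $\partial_{q_2}$. A symmetric argument starting from $Y_4$ reaches $\partial_{q_4}$, then $\partial_{p_3}$ via the $w_\R$ non-degeneracy, then $\partial_{q_3}$. Thus H\"ormander's condition holds uniformly on $\Omega$, and the standard hypoellipticity theorem gives a density $p_t(x,y) \in \CC^\infty((0,\infty)\times\Omega\times\Omega)$; the strong Feller property then follows by differentiating $P^tf(x) = \int p_t(x,y) f(y)\,\dd y$ under the integral.

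For (ii), given (i), it suffices to show that for every $x \in \Omega$ and every nonempty open $U \subset \Omega$ there is some integer $n \geq 1$ with $P^n(x,U) > 0$; smoothness and strict positivity of $p_n$ on a reachable open set then upgrade this to positivity on every Borel set of positive Lebesgue measure, and maximality of the Lebesgue class follows because the density is everywhere smooth. By the Stroock--Varadhan support theorem, positivity of $P^n(x,U)$ reduces to controllability of the deterministic system obtained by replacing $\sqrt{2\gamma_b T_b}\,\dd B^b_t$ with continuous controls $u_b(t)\,\dt$ for $b=1,4$. Since the H\"ormander bracket-generating property has already been established, the system is strongly accessible at every point in the sense of Sussmann--Jurdjevic; one then exhibits an explicit control steering $x$ to an arbitrary neighborhood of any target in some time $T$, roughly by driving $p_1, p_4$ to pump energy through the interactions $w_\L, w_\R$ into $p_2, p_3$, and then fine-tuning end positions and momenta in a final phase. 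Waiting an additional short segment makes $T$ an integer. This argument mirrors the one in \cite{cuneo_nonequilibrium_2014} for the three-rotor chain.

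For (iii), (i) provides the strong Feller property, (ii) furnishes Lebesgue measure as an irreducibility measure charging every nonempty open set, and $\Omega$ is locally compact, so by the standard Meyn--Tweedie result (see \cite{meyn_stability_1993} and the analogue used in \cite{cuneo_nonequilibrium_2014}) every compact subset of $\Omega$ is petite.

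The main obstacle is the reachability claim in (ii): H\"ormander's condition gives only infinitesimal accessibility, and it must be upgraded to exact controllability to a neighborhood of an arbitrary target in some integer time. The outer sites are easy to drive directly via the bath channels, but threading a control through to the central rotors requires careful use of the non-degeneracy in \aref{as:assumptioncoupling} together with the slow motion of $q_2, q_3$ when the corresponding momenta are small, paralleling the construction used for three rotors in \cite{cuneo_nonequilibrium_2014}.
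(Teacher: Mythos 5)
Your proposal follows the same strategy as the paper: H\"ormander's bracket condition (which the paper leaves to the reader, citing Lemma~5.3 of \cite{cuneo_nonequilibrium_2014}) for (i), the Stroock--Varadhan support theorem plus explicit controllability (the paper cites Lemma~5.6 of \cite{cuneo_nonequilibrium_2014}) for (ii), and the Meyn--Tweedie strong-Feller-plus-irreducibility criterion for (iii). Your explicit bracket computation is correct, and you rightly flag that the sole nontrivial step is the controllability argument for (ii), which must be carried out as in the three-rotor case.
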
 
\begin{proof}
(i) follows from H\"ormander's condition. The proof that H\"ormander's
condition holds, which relies on \aref{as:assumptioncoupling},
is very similar to that of Lemma 5.3 of
\cite{cuneo_nonequilibrium_2014} and is left to the reader.
The proof of (ii) is exactly as in Lemma 5.6 of
\cite{cuneo_nonequilibrium_2014}, and (iii) follows from
(i), (ii), and Proposition~6.2.8 of \cite{meyn_markov_2009}.
\myqed\end{proof}

We can now finally give the
\begin{proof}[Proof of \tref{thm:mainthm}]
Let $0 \leq \theta_1 < \min(1/T_1, 1/T_4)$ and $\theta_2 > \theta_1$. Choose
now
$\theta \in (\theta_1, \theta_2)$ such that $\theta< \min(1/T_1, 1/T_4)$. By
\tref{prop:Lyapunov}, we have a Lyapunov function
$1+e^{\theta H}\leq V\leq c(e^{|p_2|^a}+e^{|p_3|^a})e^{\theta H}$ with
$a\in (0,1)$ such that $LV \leq c \ind_K - \phi(V)$, where $\phi(s) =
{c_3\,s}/(2+\log(s))$,
and where $K$ is a compact (and therefore petite) set.  Let now $z\in (0,1)$ be such that
$z\theta > \theta_1$. By \tref{th:douc}, we obtain the existence of
a unique invariant measure $\pi$ such that
\begin{equ}[eq:convergencespecifique]
\|P^t(x, \argcdot) - \pi \|_{(\phi \circ V)^{z}} \leq c e^{-\lambda t^{1/2}}
V(x)~,
\end{equ}
where we have used that with the notation of \tref{th:douc},
\begin{equ}[eq:convrate]
g(t) = (\phi\circ H_\phi^{-1}(t))^{z-1} \leq c e^{-\lambda t^{1/2}}
\end{equ}
for some $\lambda > 0$. Indeed, $H_\phi(u)= \frac
1{c_3}\int_1^u \frac {2+\log s}s \dd s =
 \frac {1}{2c_3} (\log u)^2+\frac 2{c_3} \log u$, so that
$H_\phi^{-1}(t) = \exp({(2c_3t +4)^{1/2}-2})$ and 
$(\phi\circ H_\phi^{-1}(t)) =(2c_3t +4)^{-1/2} \exp((2c_3t +4)^{1/2}-2) \geq
ce^{ct^{1/2}}$, 
which implies \eref{eq:convrate}.

Then, \eref{eq:stretchedexpconv} follows from \eref{eq:convergencespecifique}
and the following two observations. First, we have
$V \leq ce^{\theta_2H}$ since $\theta< \theta_2$.
Secondly, by our choice of $z$, we have $e^{\theta_1H} \leq c(\phi \circ V)^z$, 
so that $\|P^t(x, \argcdot) - \pi\|_{e^{\theta_1H}} \leq c\, \|P^t(x, \argcdot)
- \pi\|_{(\phi \circ V)^{z}}$.

Thus, we have proved (iii). Since (i) and the
smoothness assertion in (ii) follow
from \pref{prop:controlstart}, the proof is complete.
\myqed\end{proof}

\begin{remark}\label{r:conjecture}
It would of course be desirable to generalize \tref{thm:mainthm}
to longer chains of rotors.
The proof of \pref{prop:controlstart} carries on unchanged to chains
of arbitrary length. Therefore, in order to prove the existence
of a steady state
and obtain a convergence rate (with \tref{th:douc}), it ``suffices''
to find an appropriate Lyapunov function. We expect the convergence rate
to be limited by the central rotor (if the length of the chain is odd)
or the two central rotors (if the length is even). Preliminary studies
indicate that for chains of length $n$, a convergence rate 
$ \exp(-c t^{k})$ with $k=1/({2\lceil{n/2}\rceil-2})$ is to be expected.
Obtaining such a result raises some major technical difficulties. First, the
averaging procedure has to be carried to much higher orders,
which quickly becomes intractable
if we proceed explicitly, as we do here. Moreover, the number of regimes to
consider grows very rapidly with $n$. And finally, some generalization
of \pref{prop:rbarref} to more general (nonlinear) decoupled systems will
be needed, with the difficulties mentioned in \rref{rem:paspinningettau}.
We are trying to solve these issues by developing a inductive method which 
requires fewer explicit calculations, but much work remains to be done.
\end{remark}

\appendix

\section{Resonances in the deterministic case}\label{s:resonances}

In \sref{sec:23fast}, two resonant terms appeared, namely ${ {p_1W_{\L}
w_{\C}}/{p_2^{2}}}$
and $-{{p_4W_{\R}w_{\C}}/{p_3^{2}}}$.
These terms have a physical meaning.
We start with the case where
$W_\I(s) = -\cos( s )$, $\I=\L,\C,\R$. Then,
\begin{equs}[eq:decompositionWlWc]
W_\L w_\C=-\cos(q_2-q_1) \sin(q_3-q_2)=  \frac {\sin(q_1- q_3)}2 + \frac{\sin(2q_2-q_3 -
q_1)}2~.
\end{equs}
Consider now the regime where most of the energy is concentrated at sites 2 and
3.
In the approximate dynamics \eref{eq:fastslowtwofast}, we see that
$\sin(q_1- q_3)$ oscillates  with frequency $p_3/2\pi $ and mean zero,
while $\sin(2q_2-q_3 - q_1)$ oscillates with frequency $(2p_2-p_3)/2\pi $.
When $p_3 = 2p_2$, the second term does not oscillate.

\begin{figure}[ht]
\centering
\includegraphics[width=0.95\linewidth, trim = 2mm 2mm -8mm 0, clip]{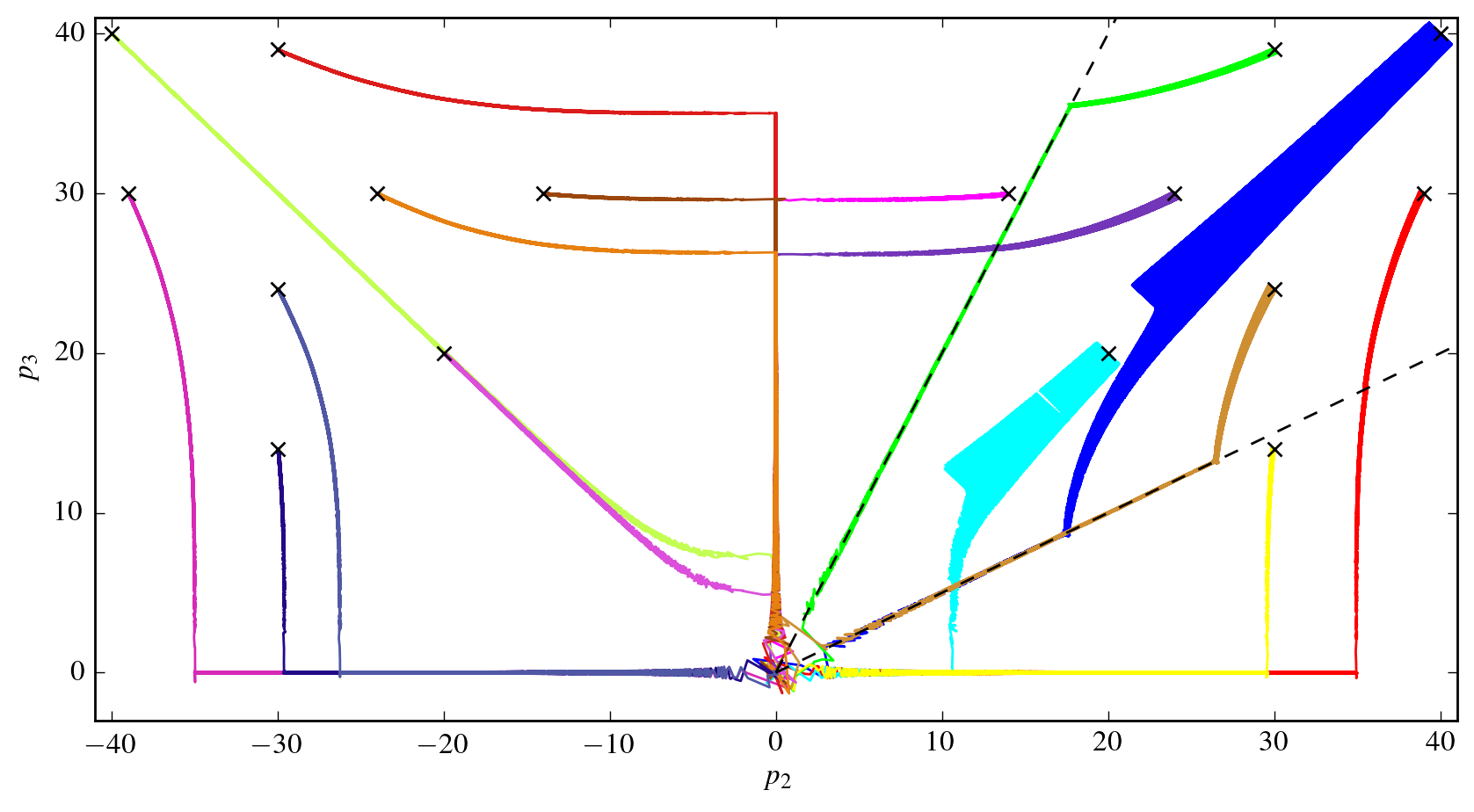}
\caption{Projection of a few orbits on the $p_2p_3$-plane, with
$W_\L = W_\C = W_\R  = -\cos$, $\gamma_1=\gamma_4= 1$, $T_1 = T_4= 0$.
The resonances are depicted as dashed
lines.}\label{fig:n111}
\end{figure}

In \fref{fig:n111}, we represent some trajectories projected onto the
$p_2p_3$-plane in the deterministic case (\ie $T_1 = T_4 = 0$). 
We observe that some trajectories are ``trapped'' by the line
$p_3/p_2 = 2$, while some others just cross it. By symmetry,
the same happens when $p_3/p_2 = 1/2$ because of the term
$-{{p_4W_{\R}w_{\C}}/{p_3^{2}}}$.
This phenomenon does not occur
when the same conditions are used with positive temperatures (see
\fref{fig:traj_avg}). A finer analysis (not detailed here) shows
that in the resonant regime $p_3/p_2 = 2$, a net momentum flux from $p_3$
to $p_2$ appears, and similarly
for $p_3/p_2 = 1/2$ with a flux from $p_2$ to $p_3$. These fluxes
stabilize the resonant regimes.

If we take
$
W_\I(s) = -\cos(n_\I s )
$
for some $n_\I \in \mathbb Z_*$, $\I=\L,\C,\R$, we find by 
a decomposition similar to \eref{eq:decompositionWlWc} some resonances
at
$$
\frac {p_3}{p_2} \in \left\{\frac{n_\C+n_\L}{n_\C} , \frac{n_\C-n_\L}{n_\C},
\frac{n_\C}{n_\C+n_\R}, \frac{n_\C}{n_\C-n_\R} \right\}~.
$$ 
(If some of these values are $0$ or $\infty$, we exclude them since
our approximation is reasonable when both $|p_2|$ and $|p_3|$ are
very large.)
For example, if we choose  $(n_\L, n_\C, n_\R) = (3,1,3)$, we obtain
the ratios $p_3/p_2 = 4, 1/4, -2, -1/2$, which we indeed
observe in \fref{fig:n313}.

\begin{figure}[ht!]
\centering
\includegraphics[width=0.95\linewidth, trim = 2mm 2mm -8mm 0, clip]{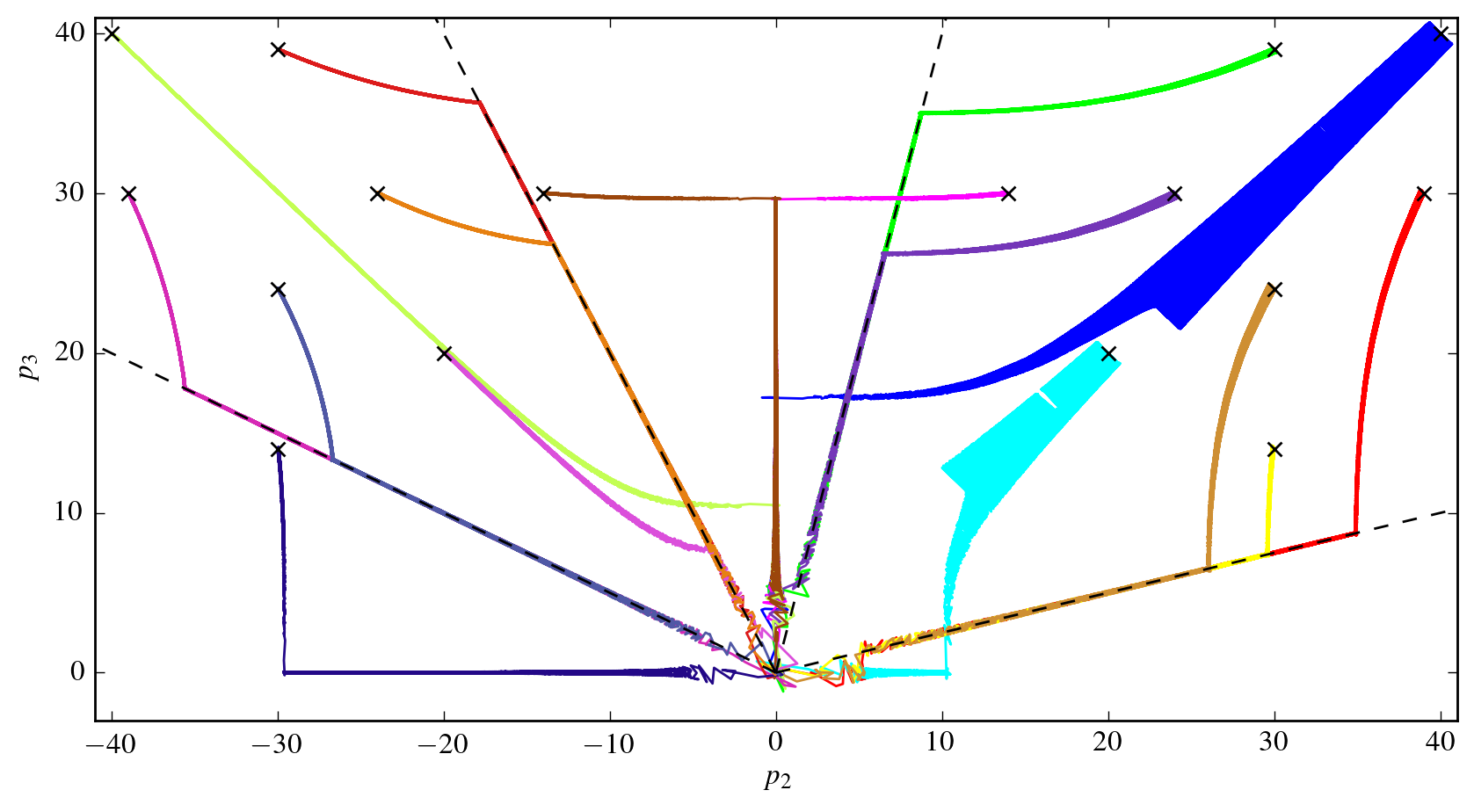}
\caption{Projection of a few orbits on the $p_2p_3$-plane, with
$\gamma_1=\gamma_4= 1$, $T_1 = T_4= 0$, 
and $(n_\L, n_\C, n_\R) = (3,1,3)$. The resonances are depicted as dashed
lines.}\label{fig:n313}
\end{figure}

Of course, a similar analysis applies to more general interaction potentials
by taking their Fourier series and treating the (products of) modes separately.

\subsection*{Acknowledgments} This work has been supported by the
ERC Advanced Grant ``Bridges'' 290843. We are very thankful
to P.~Collet, A.~Dymov, M.~Hairer, V.~Jak\v{s}i\'{c}, C.-A.~Pillet and A.~Shirikyan for 
very helpful discussions, and to C.~Poquet for many valuable suggestions
about the present paper.

\bibliography{refs}

\end{document}